\let\emptyset\varnothing
\newcommand\bsfrac[2]{%
\scalebox{-1}[1]{\nicefrac{\scalebox{-1}[1]{$#1$}}{\scalebox{-1}[1]{$#2$}}}%
}
\numberwithin{equation}{section} 
\numberwithin{figure}{section} 
\numberwithin{table}{section} 
\newtheorem{definition}{Definition}[section]
\newtheorem{theorem}{Theorem}[section]
\newtheorem{corollary}{Corollary}[section]
\theoremstyle{remark} 
\newtheorem{remark}{Remark}[section] 
\theoremstyle{definition} 
\newtheorem{example}{Example}[section]
\newcommand{\doublewidetilde}[1]{{%
\mathpalette\double@widetilde{#1}%
}}
\newcommand{\double@widetilde}[2]{%
\sbox\z@{$\m@th#1\widetilde{#2}$}%
\ht\z@=.9\ht\z@
\widetilde{\box\z@}%
}
\def\ppn{\vskip 6pt \noindent }
\def\R{{\mathbb{R}}}
\def\N{{\mathbb{N}}}
\def\P{{\mathbb{P}}}
\newcommand{{\Xs}}{{\cal X}}
\newcommand{{\Ys}}{{\cal Y}}
\newcommand{{\Ls}}{{\cal L}}
\newcommand{{\Ss}}{{\cal S}}
\newcommand{{\Ms}}{{\cal M}}
\newcommand{{\Gs}}{{\cal G}}
\newcommand{{\Hs}}{{\cal H}}
\newcommand{{\Ns}}{{\cal N}}
\newcommand{{\Is}}{{\cal I}}
\newcommand{{\Vs}}{{\cal V}}
\newcommand{{\Ds}}{{\cal D}}
\newcommand{{\Bs}}{{\cal B}}
\newcommand{{\Cs}}{{\cal C}}
\newcommand{{\Rs}}{{\cal R}}
\newcommand{{\Fs}}{{\cal F}}
\newcommand{{\Us}}{{\cal U}}
\newcommand{{\Ps}}{{\cal P}}
\newcommand{{\ttheta}}{{\bm{\theta}}}
\newcommand{{\Ttheta}}{{\bm{\Theta}}}
\newcommand{{\Oomega}}{{\bm{\Omega}}}
\newcommand{{\Sss}}{{\bm{\Ss}}}
\newcommand{{\pp}}{{\mathbf p}}
\newcommand{{\ww}}{{\mathbf w}}
\newcommand{{\mm}}{{\mathbf m}}
\newcommand{{\uu}}{{\mathbf u}}
\newcommand{{\ppi}}{{\bm{\pi}}}
\newcommand{{\phhi}}{{\bm{\phi}}}
\newcommand{{\pssi}}{{\bm{\psi}}}
\newcommand{{\XX}}{{\mathbf X}}
\newcommand{{\UU}}{{\mathbf U}}
\newcommand{{\BB}}{{\mathbf B}}
\newcommand{{\KK}}{{\mathbf K}}
\newcommand{{\HH}}{{\mathbf H}}
\newcommand{{\II}}{{\mathbf I}}
\newcommand{{\PP}}{{\mathbf P}}
\newcommand{{\yy}}{{\mathbf y}}
\newcommand{{\ee}}{{\mathbf e}}
\newcommand{{\ab}}{{\mathbf a}}
\newcommand{{\dd}}{{\mathbf d}}
\newcommand{{\zero}}{{\mathbf 0}}
\newcommand{{\uno}}{{\mathbf 1}}
\newcommand\indep{\protect\mathpalette{\protect\independenT}{\perp}}
\def\independenT#1#2{\mathrel{\rlap{$#1#2$}\mkern2mu{#1#2}}}
\newcommand{{\toL}}{{\overset{\mathcal{L}}{\longrightarrow}\ }}
\newcommand{{\MC}}{{\,  *_{\text{\scalebox{0.65}{$\Ms$}}}\,  }}
\newcommand{{\dou}}{$\leadsto$\ }
\DeclareMathOperator{\Ran}{Ran}
\DeclareMathOperator{\Supp}{Supp}
\DeclareMathOperator{\Cl}{Cl}
\begin{document}

\setlength{\belowdisplayskip}{5pt} \setlength{\belowdisplayshortskip}{3pt}
\setlength{\abovedisplayskip}{5pt} \setlength{\abovedisplayshortskip}{0pt}

\title{An essay on copula modelling for discrete random vectors;\\  or how to pour new wine into old bottles}
\author{\sc{Gery Geenens}\thanks{Corresponding author: {\tt ggeenens@unsw.edu.au}, School of Mathematics and Statistics, UNSW Sydney, Australia, tel +61 2 938 57032, fax +61 2 9385 7123 }\\School of Mathematics and Statistics,\\ UNSW Sydney, Australia 
}
\date{\today}
\maketitle
\thispagestyle{empty} 


\begin{abstract} Copulas have now become ubiquitous statistical tools for describing, analysing and modelling dependence between random variables. Sklar's theorem, ``the fundamental theorem of copulas'', makes a clear distinction between the continuous case and the discrete case, though. In particular, the copula of a discrete random vector is not identifiable, which causes serious inconsistencies. In spite of this, downplaying statements are widespread in the related literature, and copula methods are used for modelling dependence between discrete variables. This paper calls to reconsidering the soundness of copula modelling for discrete data. It suggests a more fundamental construction which allows copula ideas to smoothly carry over to the discrete case. Actually it is an attempt at rejuvenating some century-old ideas of Udny Yule, who mentioned a similar construction a long time before copulas got in fashion.
\end{abstract}

\section{Introduction}\label{sec:intro}

In \cite{Yule12}, one can read: ``{\it Two association tables that are not directly comparable owing to the different proportions of A's and B's in the data from which the tables were compiled may be rendered directly comparable by multiplying the frequencies in rows and columns by appropriate factors,} [...]  {\it reducing the original tables to some arbitrarily selected standard form}" (p.\,588). The standard form that he recommends is the table whose margins have been made uniform. Likewise, in their extensive study of  association coefficients in $(2 \times 2)$-contingency tables, \citet[p.\,747]{Goodman54} mentioned transforming all marginals to $1/2$ for facilitating interpretation. 
Later, \cite{Mosteller68} developed: ``{\it We might instead think of a contingency table as having a basic} nucleus {\it which describes its association and think of all tables formed by multiplying elements in rows and columns by positive numbers as forming an equivalence class -- a class of tables with the same degree of association}" (p.\,4). And: ``{\it we might especially arrange the table to have uniform margins on each side in the case of a two-way table so as to get a clearer look at the association that is actually occurring}'' (p.\,6).

\ppn If one identifies bivariate discrete distributions with two-way contingency tables, it is clear that what the above historical authors described has much in common with copulas: one tries to capture the dependence structure between the two variables apart from the marginal distributions by making these into uniforms, hence uninformative. The observation is notable, as it has been known at least since \cite{Marshall96} that the notion of copula fits poorly in the discrete framework. Here `copula' refers to the classical definition \citep[Definition 1.3.1]{Durante16} which, in the bivariate case, reads:
\begin{definition}\label{dfn:classcop} A bivariate copula $C$ is a function from $\Is \doteq [0,1]^2$ to $[0,1]$ defined as
\begin{equation*}
C(u,v) = \P(U \leq u, V \leq v), 
\end{equation*}
where $U,V \sim \Us_{[0,1]}$, the continuous uniform distribution on the unit interval.
\end{definition}
Such copulas naturally arise in statistical modelling through the celebrated Sklar's theorem \citep{Sklar59}:
\begin{theorem}[{\bf Sklar}] \label{thm:Sklar} Let $F_{XY}$ be the distribution function of a bivariate random vector $(X,Y)$, with marginal distribution functions $F_X$ and $F_Y$. Then there exists a copula $C$ such that, for all $(x,y) \in \R^2$,
\begin{equation} F_{XY}(x,y) = C(F_X(x),F_Y(y)). \label{eqn:Sklar} \end{equation}
If $F_X$ and $F_Y$ are continuous, then $C$ is unique; otherwise $C$ is uniquely determined on $\Ran F_X \times \Ran F_Y$ only. Conversely, for any univariate distribution functions $F_X$ and $F_Y$ and any copula $C$, the function $F_{XY}$ defined by (\ref{eqn:Sklar}) is a valid bivariate distribution function with
marginals $F_X$ and $F_Y$.
\end{theorem}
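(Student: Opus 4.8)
The plan is to establish the three assertions—existence, the uniqueness dichotomy, and the converse—in that order, following the classical subcopula-extension route. I favour this route over the alternative distributional-transform (randomisation) argument because it makes transparent exactly why $C$ is pinned down only on $\Ran F_X \times \Ran F_Y$ in the non-continuous case, which is precisely the phenomenon this paper is concerned with.

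First, for existence I would build $C$ on the product of the marginal ranges and then extend. On $\Ran F_X \times \Ran F_Y$, set $C(F_X(x),F_Y(y)) \doteq F_{XY}(x,y)$. The first thing to check is that this is well defined: if $F_X(x)=F_X(x')$ with $x<x'$, then $\P(x < X \leq x')=0$, whence $F_{XY}(x',y)-F_{XY}(x,y)=\P(x<X\leq x',\,Y\leq y)=0$, so the assignment does not depend on the chosen representative (and symmetrically in $y$). One then verifies that the resulting object is a \emph{subcopula}: it is grounded, carries the correct one-dimensional margins (e.g.\ $C(F_X(x),1)=F_X(x)$, using $F_{XY}(x,+\infty)=F_X(x)$), and is $2$-increasing, the rectangle inequality being inherited directly from the fact that $F_{XY}$ assigns nonnegative mass to every rectangle.

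The crux—and the step I expect to be the main obstacle—is extending this subcopula from $\Ran F_X \times \Ran F_Y$ to a genuine copula on all of $\Is$. The key estimate is that the $2$-increasing property together with the uniform margins forces the Lipschitz bound $|C(u_2,v_2)-C(u_1,v_1)| \leq |u_2-u_1|+|v_2-v_1|$ on the subcopula's domain; this lets $C$ extend continuously to the closure of that domain. Across the remaining gaps—the open intervals omitted from $\Ran F_X$ or $\Ran F_Y$—I would fill in by bilinear interpolation, and then check that the patched function is still grounded, has uniform margins, and remains $2$-increasing. Verifying $2$-increasingness of the interpolant on mixed rectangles straddling a gap boundary is the fiddly part; it reduces to a convexity/monotonicity bookkeeping that I would organise rectangle by rectangle, leaning on the Lipschitz bound to control the interpolated corners.

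Uniqueness then follows almost for free: the identity $F_{XY}=C(F_X,F_Y)$ forces the value of any admissible $C$ at every point of $\Ran F_X \times \Ran F_Y$, which is exactly the stated uniqueness on that set. When $F_X$ and $F_Y$ are continuous, each range contains $(0,1)$ by the intermediate value theorem and is thus dense in $[0,1]$, so the Lipschitz continuity established above shows that a copula is determined by its values on a dense set, giving uniqueness everywhere. Finally, the converse is the most routine part: given a copula $C$ and margins $F_X,F_Y$, I would set $F_{XY}(x,y) \doteq C(F_X(x),F_Y(y))$ and verify the defining properties of a bivariate distribution function—$2$-increasingness from that of $C$ composed with the non-decreasing $F_X,F_Y$, right-continuity from right-continuity of $F_X,F_Y$ and continuity of $C$, the correct limits at $\pm\infty$, and recovery of the margins by setting $v=1$ or $u=1$.
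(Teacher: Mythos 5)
Your proposal is correct: it is the classical subcopula-extension proof (define the subcopula on $\Ran F_X \times \Ran F_Y$, check well-definedness and the rectangle inequality, use the Lipschitz bound to pass to the closure, fill the gaps by bilinear interpolation, read off uniqueness on the ranges, and verify the converse directly). There is, however, nothing in the paper to compare it against: the paper states Sklar's theorem as a known result, citing Sklar (1959), and never proves it --- your argument is precisely the standard one in the references the paper itself leans on (Nelsen 2006, Lemmas 2.3.4--2.3.5, and Durante--Sempi 2015), including the extension step whose non-uniqueness off $\Ran F_X \times \Ran F_Y$ is the paper's central theme.
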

The popularity of copulas for dependence modelling largely follows from quotes like `{\it Copulas allow us to separate the effect of dependence from effects of the marginal distributions}'. Clearly, if $C$ is unique, then it unequivocally characterise how the two marginals $F_X$ and $F_Y$ interlock for producing the joint behaviour of $(X,Y)$, while being ignorant of what those marginals are. For instance, if $X$ and $Y$ are independent ($X \indep Y$) {\it and} if $C$ is unique, then from (\ref{eqn:Sklar}) $C$ must be the `product copula'
\begin{equation} \Pi(u,v) = uv \qquad \forall (u,v) \in \Is, \label{eqn:indepcop} \end{equation}
and this regardless of $F_X$ and $F_Y$. It is often overlooked that the situation is this appealing only in the case of continuous margins, when there is one-to-one correspondence between the joint distribution $F_{XY}$ and its copula $C$. Without that bijectivity, i.e., for $X$ and/or $Y$ discrete, the above argument falls apart.  

\ppn Instrumental to copula ideas is the distribution of the vector $(F_X(X),F_Y(Y))$. If $X$ and $Y$ are both continuous, then, through `{\it Probability Integral Transform}' (PIT), $F_X(X)$ and $F_Y(Y)$ have uniform distributions $\Us_{[0,1]}$, and the copula $C$ is their joint distribution. Clearly one can plug any increasing transformations of $X$ and/or $Y$ into PIT with the same output. Hence copulas are invariant under increasing transformations of the margins \citep[Theorem 2.4.3]{Nelsen06}, that is, `margin-free'. Any copula-based dependence measure, such as Kendall's or Spearman's correlations \citep[Chapter 5]{Nelsen06}, is then `margin-free' as well.

\ppn Now, in the case $X$ and/or $Y$ discrete, $\Ran F_X$ and/or $\Ran F_Y$ are just countable subsets of $[0,1]$. The distributions of $F_X(X)$ and/or $F_Y(Y)$ are thus not $\Us_{[0,1]}$, and their joint distribution cannot be a copula as described by Definition \ref{dfn:classcop}. It is actually a {\it subcopula}, i.e., a function satisfying the main structural properties of copulas but whose support is only a strict subset of $\Is$ containing 0 and 1 \citep[Definition 2.2.1]{Nelsen06}. Any such subcopula can be extended into a copula \citep[Lemma 2.3.5]{Nelsen06}: the gaps in $\Is \ \backslash (\Ran F_X \times \Ran F_Y)$ can be filled in a way preserving the properties of copulas;  however there are uncountably many ways of doing so and $C$ in (\ref{eqn:Sklar}) is not identifiable. 

\ppn Unidentifiability of $C$ {\it does} cause serious inconsistencies. \cite{Marshall96} was the first to list some, while later \cite{Genest07b} systematically investigated them and painted a rather depressing picture of the situation. Though, they concluded on a note of hope: ``{\it copula-based models are likely to become as attractive for discrete variables as they have grown to be for continuous data}". Here, we must share with \cite{Faugeras17} a much less positive view about the soundness of copula modelling for discrete data -- see Section \ref{sec:copfordiscr}. However, we might eventually agree with \cite{Genest07b}'s final word if the concept of `copula' was given a more fundamental meaning, not limited to Definition \ref{dfn:classcop} but agreeing with it in the continuous case. This paper precisely investigates such a construction. Actually it is an attempt at rejuvenating Yule's, Goodman and Kruskal's and Mosteller's conceptions, to make them fit into some modern `extended copula modelling methodology'. 

\section{Copulas on discrete distributions} \label{sec:copfordiscr}

It is fair to say that all the reasons which make copulas attractive and effective for modelling dependence in the continuous case, break up in the discrete case: ``{\it everything that can go wrong, will go wrong}" \citep[p.\,641]{Embrechts09}. The concluding positive feeling of \cite{Genest07b} probably follows mostly from their Example 13 of a bivariate Bernoulli distribution $F_{XY}$. They showed that consistent estimation of the parameter of a postulated Farlie-Gumbel-Morgenstern (FGM) copula \citep[Example 3.12]{Nelsen06} on $F_{XY}$ was possible and provided a reasonable description of the dependence structure of the underlying discrete random vector. 

\ppn Recently, though, that example was picked apart in \cite{Faugeras17}, who described how the FGM copula is compatible with a bivariate Bernoulli distribution only for some values of the parameters of the univariate Bernoulli marginals, but not for others. Here appears clearly that, in the discrete case, one can never detach the copula from the marginals. 
The fact that the copula-based measures of dependence (e.g.\ Kendall's or Spearman's) are margin-dependent was already observed in \citet[Proposition 2.3]{Marshall96} and \citet[Section 4.2]{Genest07b}, but what \cite{Faugeras17} describes goes well beyond that: the copula model {\it per se} may or may not be intrinsically meaningful depending on the marginals. A similar observation was made earlier in \citet[Section 1.1]{Zilko16}, although this was not seen as a problem there.

\ppn The following extension of Example 5 in \cite{Genest07b} is another compelling example of the inadequacy of copulas for modelling dependence between discrete variables. Suppose that $X \sim \text{Bern}(\pi_X),Y \sim \text{Bern}(\pi_Y)$ for two probabilities $\pi_X, \pi_Y \in (0,1)$, and $ X \indep Y$. Then, for reconstructing the corresponding bivariate Bernoulli $F_{XY}$ it is enough to plug in (\ref{eqn:Sklar}) any copula $C$ such that 
\begin{equation} C(1-\pi_X,1-\pi_Y)= (1-\pi_X)(1-\pi_Y). \label{eqn:constrindep} \end{equation}
This is easily seen by inspection, but this is confirmed directly by Sklar's theorem: $C$ is only identifiable on $\Ran F_X \times \Ran F_Y = \{0,1-\pi_X,1\} \times \{0,1-\pi_Y,1\}$, but given that the behaviour of $C$ along the sides of $\Is$ is fixed by trivial constraints (uniform margins), only what happens at $(1-\pi_X,1-\pi_Y)$ brings valuable information. The product copula (\ref{eqn:indepcop}) naturally fulfils (\ref{eqn:constrindep}), but so does a wide spectrum of other copulas of miscellaneous shapes whose only common trait is to go through $\left(1-\pi_X,1-\pi_Y,(1-\pi_X)(1-\pi_Y)\right) \in (0,1)^3$. One can legitimately question any conclusion drawn from such a model: the element supposed to describe the dependence structure, i.e.\ $C$, may interchangeably characterise independence or dependence of various strength and nature. In particular, any dependence measure based on the fitted copula is uninterpretable, given that the fitted copula could characterise drastically different dependence structures.


\ppn In consequence, it seems difficult not to controvert downplaying statements commonly found in the related literature, alleging that unidentifiability does not preclude the use of parametric copulas for modelling discrete data. 
Admittedly, one can always take two univariate discrete distributions and bind them together through a copula $C$ that we have picked; the `{\it Conversely}'-part of Sklar's theorem guarantees that this produces a valid bivariate discrete distribution with the right marginals. But there is actually no special link between the `input' copula $C$ and the `output' distribution $F_{XY}$. For instance, \cite{Faugeras17} explains how the bivariate Bernoulli distribution on which \citet[Example 13]{Genest07b} fitted a FGM copula, say $C_\text{FGM}$, could have been obtained all the same from a Plackett or an Ali-Mikhail-Haq copula, or from the reader's `{\it peculiar favourite copula family}' \citep[p.\,128]{Faugeras17}. It is enough to fix the parameter(s) of the copula so as to make it go through the `magical point' $\left(1-\pi_X,1-\pi_Y,C_{\text{FGM}}(1-\pi_X,1-\pi_Y)\right)$, making it futile to mention any parametric copula model at all in this case. 

\section{Transformations of the margins to uniforms} \label{sec:transunif}

The root of all trouble is that the usual PIT result, $F_X(X) \sim \Us_{[0,1]}$, does not hold for $X$ discrete. Though, the $\Us_{[0,1]}$-distribution of $F_X(X)$ and $F_Y(Y)$ in the continuous case is clearly what prompted Definition \ref{dfn:classcop}, and the induced widespread belief that copula methods are based on transformations of the margins into uniforms. Thus the main idea behind copulas, and even the very definition of a copula, are unfit for the discrete framework, reinforcing the feeling that any attempt at modelling dependence between non-continuous variables based on such classical copulas is doomed to failure from the outset. 

\ppn Clearly, the only way one can transform a discrete random variable into a continuous uniform is to resort to some sort of randomisation. Hence, to make the discrete case forcibly fit into the continuous copula framework, a common practice has been to appropriately `jitter' the original discrete variables with some uniform random noise. The so-created artificial continuous random vector has a unique copula, known as the {\it checkerboard} copula $C^\maltese$. Arguably, $C^\maltese$ retains some of the dependence structure of the original discrete vector \citep{Schweizer74,Denuit05,Genest07b,Neslehova07
}, and is a valid copula extension of the underlying subcopula \citep{Faugeras15}. However, $C^\maltese$ is just a particular choice -- and not always the most natural one -- among all the copulas satisfying (\ref{eqn:Sklar}), and by itself does not solve any of the problems exposed above. 


\ppn Now, in his stance against copulas, \citet{Mikosch06} explicitly asked (his Section 4) `{\it Why does one transform the marginals to a uniform distribution?}', and failing to come up with any compelling mathematical answer (among other things) lead him to reject the idea of copulas altogether. Yet, it has been widely acknowledged since then \citep{Embrechts09}, but even long before \citep[p.\,69]{Hoeffding40}, that the choice of transforming the margins to uniforms is driven by convenience only
. Given that transforming to uniform is precisely the stumbling block of copula methods for discrete variables, one may sensibly ask: why stick to an inessential choice initially made for convenience only, if it is no more convenient at all in the situation of interest?

\ppn Indeed forcing uniform marginals necessarily requires `guessing' what the suitable copula $C$ might be beyond $\Ran F_X \times \Ran F_Y$, and it is not clear what is the value of such guesswork. Sklar's theorem establishes that the `interesting values' for comprehending the joint behaviour of $(X,Y)$ coincide with some incidental copula $C$ evaluated on $\Ran F_X \times \Ran F_Y$. A naive interpretation of this puts the element $C$ in the foreground whereas it is actually of little importance. \cite{Vapnik98} famously wrote: `{\it one should avoid solving more difficult intermediate problems when solving a target problem}'. Here we should directly focus on those `interesting values' instead of playing a guessing game that has no definite answer anyway, as $C$ is not identifiable. In other words, there is no reason to extend the unique subcopula of a discrete bivariate vector to a copula, and any justifiable analysis of the underlying dependence structure should be undertaken at the subcopula level, or equivalent. 

\ppn What this means concretely is clear when looking again at the bivariate Bernoulli example. In this case, only the value of $C$ at $(1-\pi_X,1-\pi_Y)$ contains valuable information (see the lines following (\ref{eqn:constrindep})). So, what Sklar's theorem {\it fundamentally} says is that the whole dependence can be captured by one single number. Of course this is directly backed up by any basic analysis of the bivariate Bernoulli distribution as a $(2 \times 2)$-contingency table. Form the probability mass function (pmf) $\P(X=x,Y=y) \doteq p_{xy}$, $(x,y) \in \{0,1\} \times \{0,1\}$, into a table with 2 rows and 2 columns, such as (\ref{eqn:bivbernpmf}) below. Given that $\sum_{x,y} p_{xy} = 1$, the number of degrees of freedom of such a table is 3, one of these being used when fixing the first margin $\pi_X = p_{10}+p_{11}$, another one when fixing the other $\pi_Y = p_{01}+p_{11}$. So only one degree of freedom stays for describing what remains once the marginals are known, that is, the level of association in the table -- cf.\ the $\chi^2$-test of independence. It is not clear what would be the benefit of playing on a whole bivariate function $C$ when only one single number contains all the required information.

\ppn \citet{Edwards63} argued that this number should be the odds-ratio 
\begin{equation} \omega = \frac{p_{00}p_{11}}{p_{10}p_{01}} \label{eqn:OR} \end{equation} 
(or any monotonic function thereof) because it is `margin-free' (he did not use that term, though, but see his Corollary 2). It will be shown in Section \ref{sec:Berncop} that there is indeed a one-to-one correspondence between $\omega$ and the value $C(1-\pi_X,1-\pi_Y) = \P(X\leq 0,Y\leq 0)=p_{00}$ singled out by Sklar's theorem in this situation. This means that using $\omega$ as single dependence parameter is in total agreement with Sklar's theorem: we might look at $\omega$ on another scale to make it match $C(1-\pi_X,1-\pi_Y)$, hence to agree with the subcopula. 
It is also a simple algebraic exercise (Section \ref{sec:arbbern}) to show that, given the margins, the full table (i.e., the bivariate pmf) can be reconstructed from the value of $\omega$ only. Hence the marginal distributions {\it coupled} with the margin-free dependence parameter $\omega$ unequivocally defines the bivariate distribution of interest. Clearly, the single number $\omega$ entirely fulfils what we would like the role of a copula to be, while by no means being related to Definition \ref{dfn:classcop}.

\ppn Transformation to uniform marginals is thus clearly not a necessary step for making sense of the main ideas behind copula modelling. Indeed, in Section \ref{sec:copclass}, an alternative perspective on copulas is given, not relying explicitly on PIT. Avoiding PIT allows the concept to be readily adapted to the discrete case as well, while keeping all the pleasant properties of usual copula modelling, in particular `margin-freeness' of any copula-based quantities.

\section{Copulas as equivalence classes of dependence}  \label{sec:copclass}

Let $(X,Y)$ be a continuous vector with distribution $F_{XY}$. For simplicity, assume\footnote{This is not restrictive, one can imagine that we observe $X \in \R$ and $Y \in \R$ on the inverse logit scale, for instance, and copulas are invariant to monotonic transformations of the margins in any case.}  that $X$ and $Y$ are both supported on $[0,1]$ and that $F_{XY}$ admits a density $f_{XY}$ with marginal densities $f_X$ and $f_Y$ on the unit square $\Is$. Let $\Fs = \{f:\Is \to \R, \text{ s.t. } f \geq 0, \iint_\Is f = 1 \}$, the set of all bivariate probability densities on $\Is$, and $\Ss$ the set of all differentiable strictly increasing functions from $[0,1]$ to $[0,1]$. See that $(\Ss,\circ)$, where $\circ$ denotes function composition, is a group, and so is $(\Ss \times \Ss, \circ .)$, where $\circ .$ denotes componentwise composition: for $(\Phi_1, \Psi_1),(\Phi_2, \Psi_2) \in \Ss \times \Ss$, $(\Phi_1, \Psi_1) \circ. (\Phi_2, \Psi_2) = (\Phi_1 \circ \Phi_2, \Psi_1 \circ \Psi_2)$.

\ppn For any $(\Phi, \Psi) \in \Ss \times \Ss$, define $g_{\Phi,\Psi}: \Fs \to \Fs$ as
\begin{equation} g_{\Phi,\Psi}(f)(u,v) =   \frac{f(\Phi^{-1}(u),\Psi^{-1}(v))}{|\Phi'(\Phi^{-1}(u))|\,|\Psi'(\Psi^{-1}(v))|} . \label{eqn:margdist} \end{equation}
Now, for $(\Phi_1, \Psi_1),(\Phi_2, \Psi_2) \in \Ss \times \Ss$, it can be seen that 
\[\left(g_{\Phi_2, \Psi_2} \circ g_{\Phi_1, \Psi_1}\right)(f) = g_{\Phi_2 \circ \Phi_1,\Psi_2 \circ \Psi_1}(f) \]
(compatibility), while if $\Phi(x) = x$ and $\Psi(y) = y$, then $g_{\Phi,\Psi}(f) = f$ (identity). This makes $g_{\Phi,\Psi}$ a group action \citep[Section 10.1]{Eie10} of $(\Ss \times \Ss, \circ .)$ on $\Fs$, which defines {\it orbits}: for any $f \in \Fs$, let $[f] = \{f^* \in \Fs: \exists (\Phi,\Psi) \in \Ss \times \Ss \text{ s.t. } f^* = g_{\Phi,\Psi}(f)\}$. Such orbits induce an equivalence relation, viz.\ $f_1 \sim f_2$ whenever $[f_1] = [f_2]$. The quotient space $\overline{\Fs} = \Fs /(\Ss \times \Ss, \circ .)$ is the set of all equivalence classes. 

\ppn From standard arguments on transformation of random variables, $g_{\Phi,\Psi}(f_{XY})$ is the joint density of $(\Phi(X),$ $\Psi(Y))$, so essentially the version of $f_{XY}$ whose marginal distributions have been individually distorted by $\Phi$ and $\Psi$. The class $[f_{XY}]$ contains all those `marginally distorted' densities which share the same core as $f_{XY}$. Free from any sense of margins, that core can only be what remains between the margins, that is, the `glue' between the margins inside $f_{XY}$. According to \cite{Tankov15}, this is the exact definition of `dependence': `{\it the information on the law of a random vector which remains to be determined once the marginal laws of its components have been specified.}' Each equivalence class in $\overline{\Fs}$ is thus representative of a certain {\it dependence structure}. 

\ppn Arguably, the elements of $\overline{\Fs}$ are really the objects which deserve the name `copula', as they genuinely {\it are} the links (`{\it copulae}' in Latin) which cement marginals inside bivariate densities. However, to avoid any confusion with the classical Definition \ref{dfn:classcop}, we will call the element $[f] \in \overline{\Fs}$ the `nucleus' of $f$ to align with \cite{Mosteller68}'s description (Section \ref{sec:intro}). 
A nucleus $[f]$ is, in some sense, akin to a bivariate density which has been entirely stripped from its marginals. Of course, with no marginals, $[f]$ in itself is not a density.



\ppn Precisely, the abstract concept of a bivariate density with no marginals is difficult to visualise. Hence, for describing the inner dependence structure of the vector $(X,Y)$, one may want to exhibit a simple re-embodiment of $[f_{XY}]$ into a proper density by gluing back on it some default marginals. The simplest choice for those default marginals seems to be uniform densities. By PIT, this, of course, is the element $g_{\Phi,\Psi}(f_{XY}) \in [f_{XY}]$ which corresponds to $(\Phi,\Psi) = (F_X,F_Y)$. That particular representative is thus
\[\bar{f}_{XY}(u,v) = \frac{f_{XY}(F_X^{-1}(u),F_Y^{-1}(v))}{f_X(F_X^{-1}(u))f_Y(F_Y^{-1}(v))} \in [f_{XY}], \]
in which we recognise the density $c$ of the copula $C$ of $F_{XY}$ described by Theorem \ref{thm:Sklar}.

\ppn As stressed in Section \ref{sec:transunif}, the choice of uniform margins for re-embodying $[f_{XY}]$ into a proper density is totally arbitrary. It seems just sensible, for interpretation and visualisation purpose, to keep things as uncomplicated as possible, and the uniform distribution is arguably the simplest choice. 
That said, uniforms and/or PIT do not play any role when defining the concept of nucleus, which is really what copulas are all about. The construction of such nuclei can thus be adapted {\it mutatis mutandis} to discrete distributions. The process is detailed for the case of a bivariate Bernoulli distribution in the next section, and generalised to other bivariate discrete distributions after that.

\section{The Bernoulli copula} \label{sec:Berncop}

\subsection{The bivariate Bernoulli distribution}

Consider again the case of two Bernoulli random variables $X \sim \text{Bern}(\pi_X)$ and $Y \sim \text{Bern}(\pi_Y)$ sharing (potentially) some dependence. The corresponding bivariate Bernoulli distribution, say $\pp$, is typically  presented under the form of a $(2 \times 2)$-table, such as
\begin{equation} \begin{array}{c l ||c c c c | c}
& \bsfrac{Y}{X} & & 0 & 1 & \\
\hline\hline
& 0 & & p_{00} & p_{01}& & p_{0\bullet} \\
& 1 & & p_{10}  & p_{11}& & p_{1\bullet}  \\
\hline 
& & & p_{\bullet 0} & p_{\bullet 1}& & 1 \\
\end{array}, \label{eqn:bivbernpmf} \end{equation}
where for $x,y \in \{0,1\}$, $p_{xy} = \P(X=x,Y=y)$, $p_{\bullet y} = p_{0y} + p_{1y}$ and $p_{x \bullet} = p_{x 0} + p_{x 1}$. Of course, $\pi_X=p_{1\bullet}$ and $\pi_Y = p_{\bullet 1}$. Assume $0 < \pi_X <1$ and $0<\pi_Y<1$ (non-degenerate table). Define $\Ps_{2 \times 2}$ the set of all such bivariate Bernoulli probability mass functions, where each $\pp \in \Ps_{2\times 2 }$ is identified to the matrix
\begin{equation} \pp=\begin{pmatrix} p_{00} & p_{01} \\ p_{10} & p_{11} \end{pmatrix}. \label{eqn:matBivBern} \end{equation}


\ppn Now, as $\sum_{x,y} p_{xy} = 1$, one can actually identify $\Ps_{2 \times 2}$ to the 3-dimensional simplex, here a regular tetrahedron whose vertices are the degenerate distributions 
\[\dd_1 = \begin{pmatrix} 1 & 0 \\ 0 & 0 \end{pmatrix},\  \dd_2= \begin{pmatrix} 0 & 1 \\ 0 & 0 \end{pmatrix},\   \dd_3 = \begin{pmatrix} 0 & 0 \\ 1 & 0 \end{pmatrix}, \text{ and } \dd_4 = \begin{pmatrix} 0 & 0 \\ 0 & 1 \end{pmatrix};  \]
see Figure \ref{fig:tetra}. We will call this tetrahedron the {\it Bernoulli tetrahedron}. Note that, as we assume $0 < \pi_X,\pi_Y <1$, the 4 vertices and the edges $\dd_1 \dd_2$, $\dd_1 \dd_3$, $\dd_2 \dd_4$ and $\dd_3 \dd_4$ are not  admissible elements of $\Ps_{2\times 2}$. 

\begin{figure}[h]
\centering
\includegraphics[width=0.5\textwidth,trim=0 2cm 0 0.8cm]{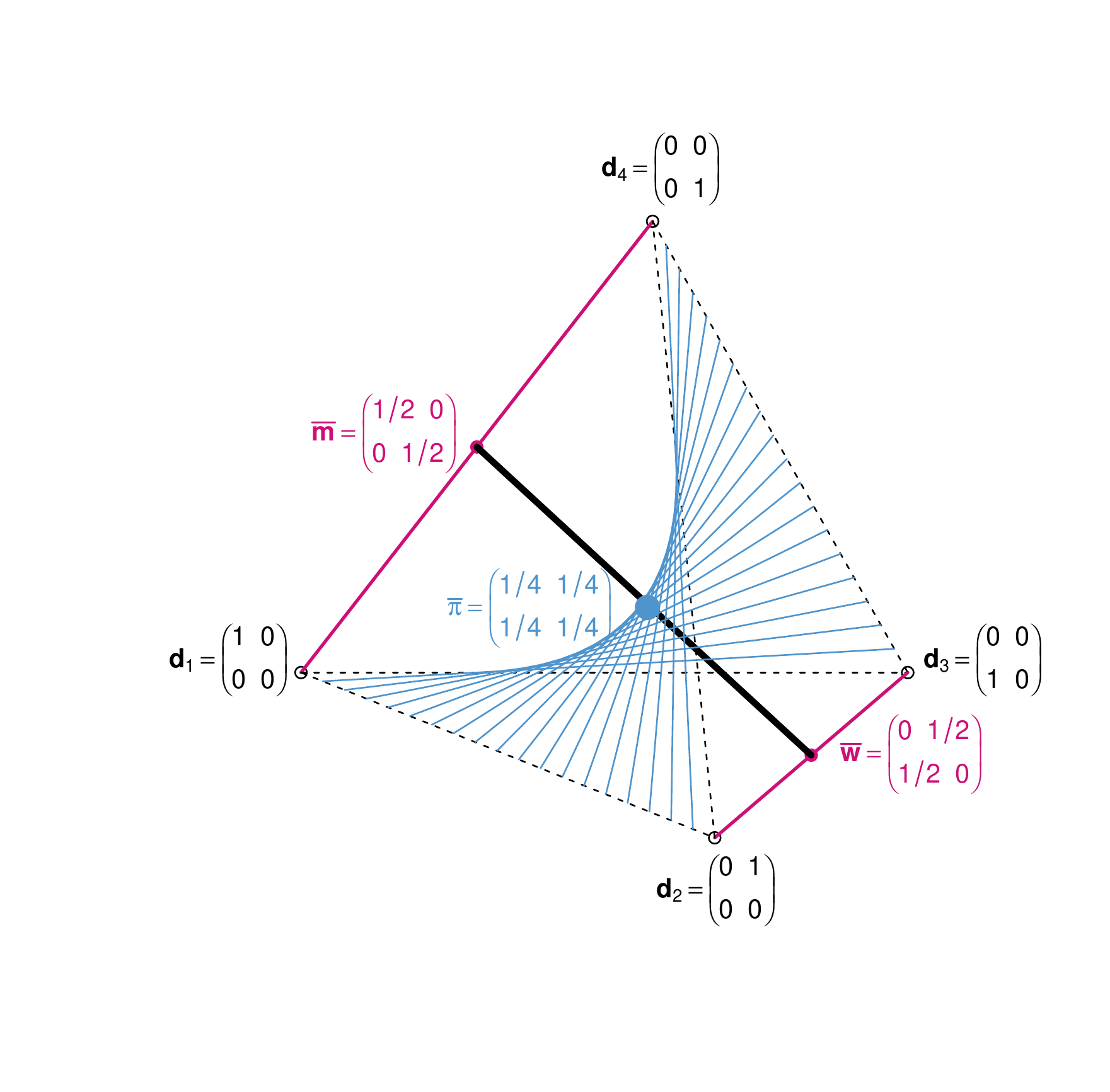}
\caption{The Bernoulli tetrahedron. The blue surface is the Bernoulli nucleus of independence. The thick line $\overline{\ww}\, \overline{\mm}$ is the locus of all the Bernoulli copula pmf's. The centre of the tetrahedron is the Bernoulli independence copula $\overline{\ppi}$. The magenta edges represent `absolute association' (positive, edge $\dd_1 \dd_4$; negative, edge $\dd_2 \dd_3$). Their mid-points $\overline{\mm}$ and $\overline{\ww}$ are the upper and lower Fr\'echet bounds for Bernoulli copulas.}
\label{fig:tetra}
\end{figure}

\subsection{Marginal transformations} \label{subsec:margtrans}

Following Section \ref{sec:copclass}, one would like to extract the core of $\pp$, i.e., what remains invariant to `monotonic distortion' of the margins. In (\ref{eqn:margdist}), by such distortion it is meant the vector $(\Phi(X),\Psi(Y))$ where $\Phi$ and $\Psi$ are increasing functions. If the continuous variable $X$ has density $f_X(u)$ at $u \in [0,1]$, then $\Phi(X)$ has density  
\begin{equation} f_{\Phi(X)}(u) = \frac{f_X(\Phi^{-1}(u))}{\Phi'(\Phi^{-1}(u))},  \label{eqn:distdens} \end{equation}
which can take any arbitrary shape depending on $\Phi$ (and similar for $\Psi(Y)$). By contrast, the `transformation trick' does not work for discrete random variables. In particular, for $X \sim \text{Bern}(\pi_X)$, $\Phi(X)$ remains a two-point distribution with the exact same ratio $(1-\pi_X,\pi_X)$ (only the `labels' change). However, one can see (\ref{eqn:distdens}) from a more basic perspective, considering $\Phi$ as just a mechanism which re-assigns the initial probability mass differently. Under the effect of $\Phi$, the value $u \in [0,1]$, initially assigned the probability $f_X(u)\,du$, would now get a probability $f_X^*(u)\,du$ where $f_X^*(u)$ is given by (\ref{eqn:distdens}). Note that the factor $1/\Phi'(\Phi^{-1}(u))$ is just a normalisation guaranteeing that $\int f_X^*(u)\,du = 1$. Now this more fundamental interpretation of (\ref{eqn:distdens}) carries over to the Bernoulli framework.

\ppn Indeed, for some $\phi > 0$, define a distorted distribution for $X$ as Bern$(\pi^*_X)$, where $\pi^*_X = \frac{\phi\pi_X}{1-\pi_X + \phi \pi_X}$. Clearly, for $\phi > 1$, $\pi^*_X > \pi_X$: some of the probability initially assigned to $X=0$ has been transferred to the next value, $X=1$; and reversely for $\phi<1$. Like above, the factor $1/(1-\pi_X + \phi \pi_X)$ in $\pi_X^*$  is just a normalisation, guaranteeing $\pi^*_X \in [0,1]$ for all $\phi>0$. The margin $Y$ can be similarly distorted. When both marginal distributions are simultaneously perturbed in that way, the initial joint probability distribution is re-assigned through table (\ref{eqn:bivbernpmf}) by a similar process of transferring probability weight between adjacent cells. The organisation of the cells, in particular their order along each margin, is not altered: the marginal distortions are monotonic in that sense. In effect, the resulting distorted table is obtained by multiplying the rows and columns of (\ref{eqn:matBivBern}) by positive values (and renormalise). This totally concords with what \cite{Yule12} and \cite{Mosteller68} urged; see Section 1.

\ppn Specifically, define $\Ds^{(1)}_{2 \times 2}$ the set of all diagonal matrices whose entry $(1,1)$ is equal to 1, and for any $\phi,\psi>0$, set
\[\Phi =  \begin{pmatrix} 1 & 0 \\ 0 & \phi \end{pmatrix} \in \Ds^{(1)}_{2 \times 2} \quad \text{ and } \quad \Psi =  \begin{pmatrix} 1 & 0 \\ 0 & \psi \end{pmatrix} \in \Ds^{(1)}_{2 \times 2}. \]
Of course $(\Ds^{(1)}_{2 \times 2}, \cdot)$, where $\cdot$ is matrix multiplication, is a group (with $I$ its identity), and so is $(\Ds^{(1)}_{2 \times 2} \times \Ds^{(1)}_{2 \times 2},\cdot .)$, where $\cdot .$ is componentwise matrix multiplication: for $(\Phi_1, \Psi_1),(\Phi_2, \Psi_2) \in (\Ds^{(1)}_{2 \times 2} \times \Ds^{(1)}_{2 \times 2})$, $(\Phi_1, \Psi_1) \cdot. (\Phi_2, \Psi_2) = (\Phi_1 \cdot \Phi_2, \Psi_1 \cdot \Psi_2)$. Now define $g_{\Phi,\Psi}: \Ps_{2 \times 2} \to \Ps_{2 \times 2}$:
\begin{equation} g_{\Phi,\Psi}(\pp) =  \frac{ \Phi \cdot \pp \cdot \Psi}{\| \Phi \cdot \pp \cdot \Psi\|_1} = \frac{1}{p_{00} + \psi p_{01} + \phi p_{10} + \phi \psi p_{11} } \begin{pmatrix} p_{00} & \psi p_{01} \\ \phi p_{10} & \phi \psi p_{11} \end{pmatrix}. \label{eqn:Berntrans} \end{equation} 
Similarly to Section \ref{sec:copclass}, $g_{\Phi,\Psi}$ is a group action of $(\Ds^{(1)}_{2 \times 2} \times \Ds^{(1)}_{2 \times 2},\cdot .)$ on $\Ps_{2\times 2}$. Any $\pp\in \Ps_{2\times 2}$ induces an orbit $[\pp] = \{\pp^* \in \Ps_{2\times 2} : \exists (\Phi,\Psi) \in \Ds^{(1)}_{2 \times 2} \times \Ds^{(1)}_{2 \times 2} \text{ s.t. } \pp^*=g_{\Phi,\Psi}(\pp) \}$. Those orbits form equivalence classes of bivariate Bernoulli distributions: $\pp_1 \sim \pp_2$ whenever $[\pp_1] = [\pp_2]$. The quotient space $\overline{\Ps}_{2\times 2} = \Ps_{2\times 2} / (\Ds^{(1)}_{2 \times 2} \times \Ds^{(1)}_{2 \times 2},\cdot .)$ is the set of all those equivalence classes. 

\begin{remark} \label{rmk:cda} Identifying $\Ps_{2\times 2}$ to the tetrahedron allows a parallel with compositional data analysis, i.e., data living on the simplex. Indeed the transformation (\ref{eqn:Berntrans}) can be written
\[\begin{pmatrix} \frac{1}{2(1+\phi)} & \frac{1}{2(1+\phi)} \\ \frac{\phi}{2(1+\phi)} & \frac{\phi}{2(1+\phi)} \end{pmatrix} \oplus \pp \oplus \begin{pmatrix} \frac{1}{2(1+\psi)} & \frac{\psi}{2(1+\psi)} \\ \frac{1}{2(1+\psi)} & \frac{\psi}{2(1+\psi)} \end{pmatrix} \doteq \phhi \oplus \pp \oplus \pssi
, \]
where $\oplus$ is the `perturbation' operator \citep{Aitchison01}, arguably the most natural operation on the simplex. See that $\phhi \in \Ps_{2 \times 2}$ with $X \sim \text{Bern}(\phi/(1+\phi))$ and $Y\sim \text{Bern}(1/2)$, while $\pssi \in \Ps_{2 \times 2}$ with $X\sim \text{Bern}(1/2)$ and $Y\sim \text{Bern}(\psi/(1+\psi))$. See also that in both distributions $\phhi$ and $\pssi$, $X$ and $Y$ are independent. Clearly, $\phhi$ aims at distorting solely the margin $X$, $\pssi$ aims at distorting solely the margin $Y$, but none is allowed to bring extra dependence into $\pp$. This is formalised in the next section. \qed
\end{remark}

\subsection{Bernoulli nucleus and Bernoulli copula probability mass function} \label{subsec:Berncoppmf}

\ppn Any $[\pp] \in \overline{\Ps}_{2\times 2}$ can thus be interpreted as the class of all bivariate Bernoulli distributions (\ref{eqn:matBivBern}) which share the same `core' structure once we strip them from their marginal distributions. 
This suggests that the equivalence classes may again be classes of equivalent dependence, which is directly confirmed by noting that the odds-ratio (\ref{eqn:OR}) is class-invariant. Specifically, for a distribution $\pp \in \Ps_{2 \times 2}$, define  
\[\omega: \Ps_{2 \times 2} \to \R^+: \omega(\pp) = \frac{p_{00} p_{11}}{p_{10}p_{01}} . \]
It is obvious that, for any $(\Phi,\Psi) \in \Ds^{(1)}_{2 \times 2} \times \Ds^{(1)}_{2 \times 2}$,
\begin{equation} \omega(g_{\Phi,\Psi}(\pp)) = \frac{p_{00} \phi \psi p_{11}}{\phi p_{10} \psi p_{01}} = \frac{p_{00} p_{11}}{p_{10}p_{01}} = \omega(\pp). \label{eqn:comOR} \end{equation}
As in the previous Section, we call $[\pp]$ the nucleus of $\pp$ (although it would probably deserve the name of `copula' as well), as it contains nothing else but the information of how the Bernoulli marginals are glued together inside $\pp$. Hence the quotient space $\overline{\Ps}_{2\times 2}$ forms the family of {\it Bernoulli nuclei}.

\ppn \citet[Section 3]{Fienberg70b} showed that, in the Bernoulli tetrahedron, the sets of distributions $\pp \in \Ps_{2\times 2}$ sharing common odds ratios $\omega \in (0,\infty)$ are doubly-ruled surfaces corresponding to sections of hyperboloids of one sheet. E.g., Figure \ref{fig:tetra} shows the surface corresponding to $\omega =1$, that is, all bivariate Bernoulli distributions for which $X \indep Y$. Running $\omega$ over $(0,\infty)$ produces similar non-intersecting surfaces, which are the Bernoulli nuclei $[\pp] \in \overline{\Ps}_{2\times 2}$. For the limiting cases $\omega=0$ and $\omega = \infty$, see Section \ref{sec:structzeros}.

\ppn The fact that each Bernoulli nucleus $[\pp]$ is described by its odds-ratio $\omega(\pp)$ makes it easy to get a sense of the dependence involved. However, it may still be insightful to define a representative of $[\pp]$, that is, a particular `simple' bivariate Bernoulli distribution with odds-ratio $\omega(\pp)$. Again, a natural choice seems to be the element of $[\pp]$ with uniform margins, as \cite{Goodman54} suggested (Section \ref{sec:intro}). 

\ppn Simple algebra reveals that, for $\pp \in \Ps_{2\times 2}$ such that $\omega(\pp) =  \omega \geq 0$, there is a unique element in $[\pp]$ with uniform margins, which is
\begin{equation} \overline{\pp} = \begin{pmatrix} \frac{\sqrt{\omega}}{2(1+\sqrt{\omega})} & \frac{1}{2(1+\sqrt{\omega})} \\ \frac{1}{2(1+\sqrt{\omega})} & \frac{\sqrt{\omega}}{2(1+\sqrt{\omega})} \end{pmatrix}. \label{eqn:Berncoppmf} \end{equation}
Naturally, here, the margins are {\it discrete} uniforms. Essentially `margin-free', one can think of $\overline{\pp}$ as a distribution on any appropriate $(2 \times 2)$-points domain. Although not essential, one can agree that $\overline{\pp}$ is a distribution on $\{\frac{1}{3},\frac{2}{3}\} \times \{\frac{1}{3},\frac{2}{3}\}$, so as to stay mostly aligned with the usual `uniform on $[0,1]$' copula specification. This particular choice will have interesting implications in Section \ref{sec:copinfsup}. The representative (\ref{eqn:Berncoppmf}) is akin to the copula density in the continuous case. Hence we call $\overline{\pp}$ the {\it Bernoulli copula probability mass function} (copula pmf). Note that the values in (\ref{eqn:Berncoppmf}) were mentioned in \citet[11.2-14]{Bishop75}, while a similar `copula' was briefly investigated in \cite{Tajar01}.

\ppn Following \citet[Section 4]{Fienberg70b}, all $\pp \in \Ps_{2 \times 2}$ with the same marginal distributions must lie on a straight line orthogonal to the edges $\dd_1 \dd_4$ and $\dd_2 \dd_3$ in the Bernoulli tetrahedron. Denote 
\begin{equation}  \overline{\mm} = \begin{pmatrix} 1/2 & 0 \\ 0 & 1/2 \end{pmatrix} \qquad \text{ and } \qquad \overline{\ww} = \begin{pmatrix} 0 & 1/2 \\ 1/2 & 0 \end{pmatrix}, \label{eqn:FrechBerncoppmf} \end{equation}
the mid-points of $\dd_1 \dd_4$ and $\dd_2 \dd_3$, respectively. The segment $\overline{\ww} \,\overline{\mm}$, shown as a thick line in Figure \ref{fig:tetra}, is orthogonal to both $\dd_1 \dd_4$ and $\dd_2 \dd_3$. Hence it consists of all those distributions which share the same margins as $\overline{\mm}$ and $\overline{\ww}$, which are Bernoulli$(1/2)$ for both $X$ and $Y$, that is, all the Bernoulli copula pmf's. The element (\ref{eqn:Berncoppmf}) is the {\it unique} intersection between $\overline{\ww} \,\overline{\mm}$ and the nucleus $[\pp]$ characterised by the odds ratio $\omega(\pp) = \omega$.

\ppn In particular, given that $X \indep Y \iff \omega = 1$, the independence Bernoulli copula pmf is evidently
\begin{equation*} \overline{\ppi} \doteq \begin{pmatrix} 1/4 & 1/4 \\ 1/4 & 1/4 \end{pmatrix}, \label{eqn:indBerncop} \end{equation*}
as one could expect, and clearly $X \indep Y \iff \overline{\pp} = \overline{\ppi}$. This can be contrasted to the observation made in Section \ref{sec:copfordiscr} that a continuous copula $C$ gluing two independent Bernoulli's as in (\ref{eqn:Sklar}) need not be the independence copula. Note that $\overline{\ppi}$ is the centre of gravity of the Bernoulli tetrahedron (Figure \ref{fig:tetra}). It is also the neutral element for the `perturbation' operator (Remark \ref{rmk:cda}): $\forall \pp \in \Ps_{2\times 2}$, $\pp \oplus \overline{\ppi} = \pp = \overline{\ppi} \oplus \pp$.


\begin{example} We call `{\it confetti plot}' the below -- naive but effective -- visual display of bivariate Bernoulli pmf's and their copulas. The size and the colour of the dots are proportional to the value of the corresponding probability. Marginal probabilities are shown as black dots on the same scale. Figure \ref{fig:coppmf} shows such plots for \cite{Yule12}'s comparison of three hospitals on vaccination and recovery for smallpox patients. The top row shows the initial bivariate Bernoulli distributions as tables like (\ref{eqn:bivbernpmf}) (left: Sheffield; middle: Leicester; right: Homerton and Fulham) -- details and exact figures to be found in \citet[Tables I, III, IV]{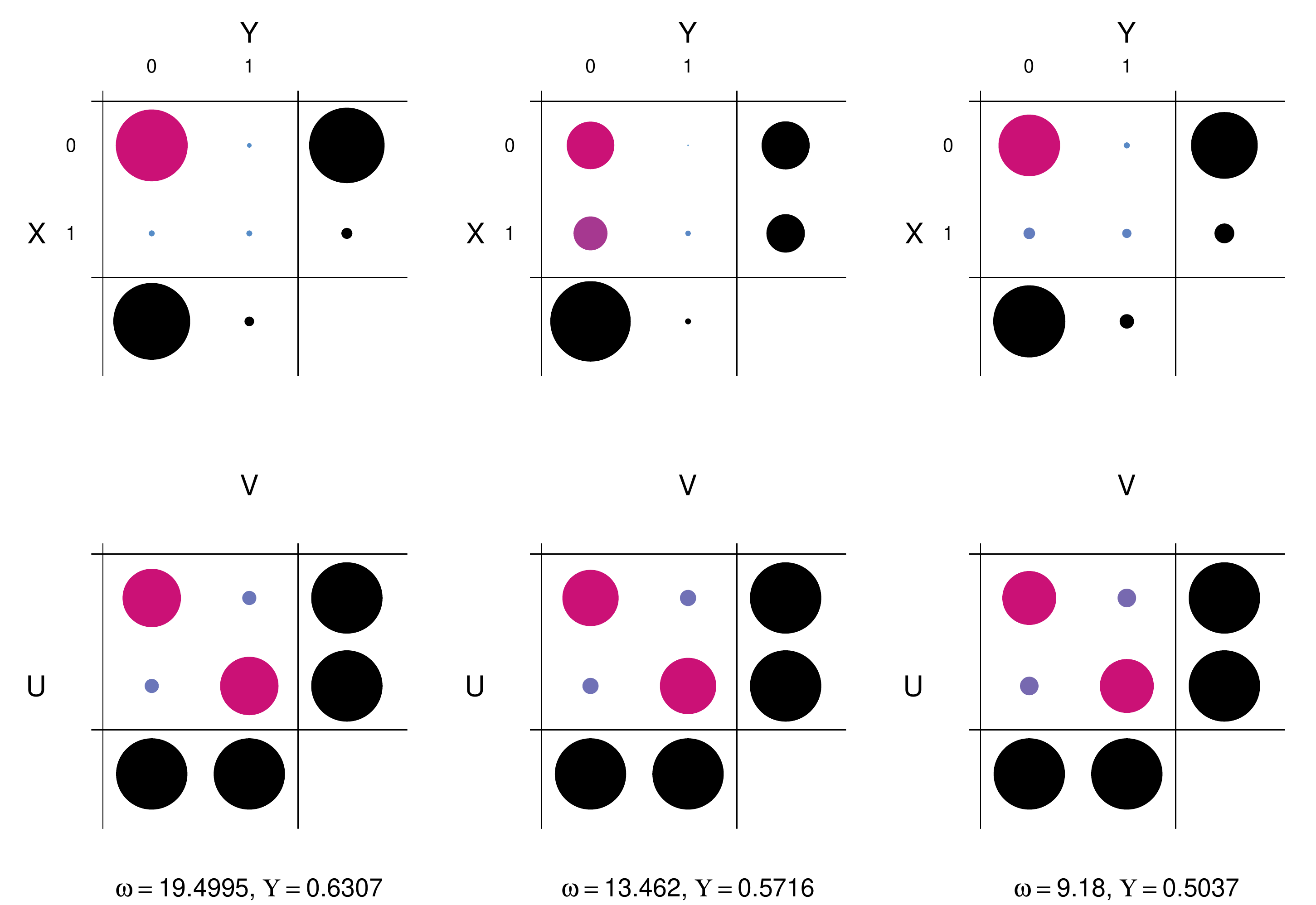}. The nature of the dependence an how it compares across those three distributions is not obvious visually, owing to their different and largely unbalanced margins. The bottom row shows the corresponding Bernoulli copula pmf's, together with the respective values of $\omega$ (and $\Upsilon$, see Section \ref{subsec:Yule2}). Those copula pmf's make it clear, visually, that the dependence is positive and of similar magnitude across the three distributions, although with a slight decrease from Sheffield to Leicester and finally Homerton and Fulham. This is obviously confirmed by the observed decreasing values of $\omega$ from left to right. \qed

\begin{figure}[h]
\centering
\includegraphics[width=0.7\textwidth]{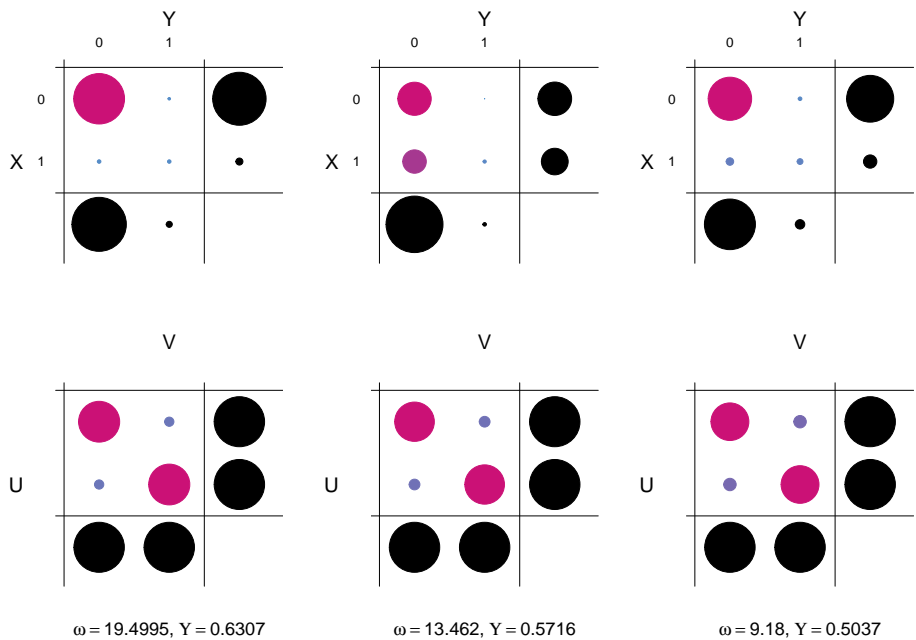}
\caption{Confetti plots of Bernoulli pmf's (top row) and Bernoulli copula pmf's (bottom row) for \cite{Yule12}'s smallpox vaccination data: left: Sheffield; middle: Leicester; right: Homerton and Fulham. }
\label{fig:coppmf}
\end{figure}
\end{example} 

\subsection{Structural zeros} \label{sec:structzeros}

The limit values $\omega = 0$ and $\omega = \infty$ occur when (at least) one of the entries of (\ref{eqn:matBivBern}) is 0, what is usually referred to as a `structural zero' of $\pp$. Such structural zeros are known to create complications in the analysis of comparable contingency tables \citep[Chapter 5]{Bishop75}. Here, with $\omega = 0$ and $\omega = \infty$, we respectively get from (\ref{eqn:Berncoppmf}) the copula pmf's $\overline{\ww}$ and $\overline{\mm}$ defined in (\ref{eqn:FrechBerncoppmf}), with probability mass concentrated on the diagonals. It can be checked that $\overline{\ww}$ and $\overline{\mm}$ are the pmf's associated to the Fr\'echet lower and upper bounds \citep[Sections 2.2 and 2.5]{Nelsen06} in the class of bivariate Bernoulli distributions with Bern$(1/2)$-margins. This is obviously in agreement with the values $\omega = 0$ and $\omega = \infty$ indicating maximal negative and positive dependence between $X$ and $Y$. In the Bernoulli tetrahedron (Figure \ref{fig:tetra}), any Bernoulli copula pmf lies between the `bounds' $\overline{\ww}$ and $\overline{\mm}$ on the segment $\overline{\ww} \,\overline{\mm}$, indeed.

\ppn The value $\omega =0$ arises from distributions like
\begin{equation} (i) \quad \pp_1 = \begin{pmatrix} 0 & \times \\ \times & 0 \end{pmatrix}; \quad \text{ or } \quad   (ii) \quad \pp_2 = \begin{pmatrix} 0 & \times \\ \times & \times \end{pmatrix}, \pp_3 = \begin{pmatrix} \times & \times \\ \times & 0 \end{pmatrix} , \label{eqn:Bernlim} \end{equation}
where $\times$'s represent non-zero elements. In the terminology of \citet[Sections 33.6-33.7]{Kendall61}, case $(i)$ corresponds to {\it absolute} association, whereas case $(ii)$ corresponds to {\it complete} association. Clearly $(i)$ represents `perfect dependence' (negative, in this case), but it is not that clear for $(ii)$ as there is no one-to-one correspondence between rows and columns. Therefore, the fact that the odds-ratio $\omega = 0$ and the corresponding copula pmf $\overline{\ww}$ do not distinguish between $(i)$ and $(ii)$ may appear puzzling. Yet it is actually sensible. Indeed, it is easily seen that `absolute association' $(i)$ is only possible if $\pi_X+\pi_Y = 1$. Whenever $\pi_X + \pi_Y \neq 1$, any sense of `perfect dependence' automatically translates into `complete association'. Hence the dependence is actually as strong as can be in both cases $(i)$ and $(ii)$ {\it given the margins}. A marginal feature, the distinction between $(i)$ and $(ii)$ must be ignored by the copula pmf.

\ppn In the Bernoulli tetrahedron, distributions showing `absolute association' ($\pp_1$) lie on the edge $\dd_2 \dd_3$, while `complete association' characterises the two adjacent (open) faces $\dd_2 \dd_3 \dd_4$ (distributions of type $\pp_2$) and $\dd_1 \dd_2 \dd_3$ (distributions of type $\pp_3$). Actually, the union $\large(\dd_2 \dd_3 \cup \dd_2 \dd_3 \dd_4 \cup \dd_1 \dd_2 \dd_3\large)$ forms the limit of the surfaces of constant odds ratio $\omega$ as $\omega \to 0$. However, there is no transformation (\ref{eqn:Berntrans}) making $\pp_2$ or $\pp_3$ into $\pp_1$, or even $\pp_2$ into $\pp_3$, as a new zero would need to be created, which is not possible given that $\phi$ and $\psi$ are positive in (\ref{eqn:Berntrans}). Clearly, $[\pp_1] = \dd_2 \dd_3$, $[\pp_2] =\dd_2 \dd_3 \dd_4 $ and $[\pp_3] = \dd_1 \dd_2 \dd_3$ are three distinct elements of $\overline{\Ps}_{2\times 2}$, that is, three distinct Bernoulli nuclei, although $\omega(\pp_1) = \omega(\pp_2) = \omega(\pp_3)=0$.

\ppn The edge $[\pp_1] = \dd_2 \dd_3$ physically intersects with $\overline{\ww} \,\overline{\mm}$ (at $\overline{\ww}$, obviously), and indeed, the transformation making $\pp_1$ into $\overline{\ww}$ is trivial and can be written under the form (\ref{eqn:Berntrans}), thus $\overline{\ww} \in [\pp_1]$. By contrast, the (open) faces $[\pp_2] = \dd_2 \dd_3 \dd_4$ and $[\pp_3] = \dd_1 \dd_2 \dd_3$ do not intersect with $\overline{\ww} \,\overline{\mm}$, thus $\overline{\ww} \notin [\pp_2]$ and $\overline{\ww} \notin [\pp_3]$. What is true, though, is that $\overline{\ww}$ is a limit point of $[\pp_2]$ and $[\pp_3]$, as one can approach $\overline{\ww}$ arbitrarily close while staying on any of the two faces. In terms of (\ref{eqn:Berntrans}), for $r \in \{2,3\}$, there exist some sequences of matrices $\Phi^{(r)}_1,\Phi^{(r)}_2,\ldots$ and $\Psi^{(r)}_1,\Psi^{(r)}_2,\ldots$ all in $\Ds_{2\times 2}^{(1)}$ such that, 
\begin{equation*}
\overline{\ww} = \frac{\left(\prod_{k=1}^\infty \Phi^{(r)}_k \right) \cdot \pp_r \cdot \left( \prod_{k=1}^\infty \Psi^{(r)}_k \right)}{\|\left(\prod_{k=1}^\infty \Phi^{(r)}_k \right) \cdot \pp_r \cdot \left( \prod_{k=1}^\infty \Psi^{(r)}_k \right)\|_1}. \label{eqn:limp}
\end{equation*}
In these critical cases $(ii)$, it is thus necessary to extend the nuclei $[\pp_2]$ or $[\pp_3]$ to their closure for them to include the corresponding copula pmf $\overline{\ww}$. Further characterisation of such cases will be given in more generality in Section \ref{sec:RScop}. The case $\omega = \infty$ and $\overline{\pp} = \overline{\mm}$ is treated in perfect analogy.


\begin{remark} \label{rmk:orbs} It is seen that there is no one-to-one correspondence between $[\pp]$ and $\omega({\pp})$. What differentiates $[\pp_1]$, $[\pp_2]$ and $[\pp_3]$ above is the different layout of structural zeros of their elements, that is, their support $\Supp(\pp)= \{(x,y) \in \{0,1\} \times \{0,1\}: p_{xy} > 0 \}$. In fact, any nucleus $[\pp]$ is unequivocally characterised by the couple $(\Supp(\pp),\omega(\pp))$. 
This double characterisation clarifies what really makes the dependence structure of a bivariate Bernoulli vector $(X,Y)$. First, the dominant effect is the presence of structural zeros and their layout: in the presence of one (or two) such zero(s), the value of the other non-null $p_{xy}$'s is irrelevant. In fact, any structural zero in $\pp$ implies, by definition, the incompatibility of two particular values taken by $X$ and $Y$, ergo it is a critical constituent of the dependence between $X$ and $Y$. When all $p_{xy}$ are positive, then in some sense the dependence is more subtle, and fully characterised by $\omega$. \qed
\end{remark}


\subsection{Yule's colligation coefficient} \label{subsec:Yule2}

\ppn Suppose that $(U,V)$ is a discrete random vector supported on $\{\frac{1}{3},\frac{2}{3}\} \times \{\frac{1}{3},\frac{2}{3}\}$ with joint pmf as in (\ref{eqn:Berncoppmf}) for some $\omega \geq 0$. One can check that Pearson's correlation between $U$ and $V$ is
\begin{equation} \Upsilon = \frac{\sqrt{\omega}-1}{\sqrt{\omega}+1}, \label{eqn:YuleY} \end{equation}
which is exactly Yule's `{\it colligation coefficient} $Y$' \citep[pp.\,592-593]{Yule12}. Hence, Yule's $Y$, denoted here Upsilon $\Upsilon$, can be regarded as the `Bernoulli analogue' to Spearman's rank correlation $\rho$ in the continuous case in the sense that it is the Pearson's correlation of the distribution of interest {\it after copula transformation}. Evidently $\Upsilon$ is margin-free, as it is a one-to-one function of $\omega$. In addition, as Pearson's correlation is invariant under linear transformations, $\Upsilon$ remains unaffected as well if re-defining the margins of (\ref{eqn:Berncoppmf}) as $U^* = a_1 U + b_1$ and $V^* = a_2 V + b_2$ (with $a_1 a_2 > 0$). Again, the exact `labels' of the rows and columns of (\ref{eqn:Berncoppmf}) do not play any role (provided their relative order is preserved). 

\ppn Obviously, $\Upsilon = 0 \iff \omega =1 \iff X \indep Y$ in (\ref{eqn:bivbernpmf}). In addition, $|\Upsilon|$ does attain its maximum value 1 when $\omega=0$ or $\omega = \infty$, which corresponds to the Fr\'echet bounds described in Section \ref{sec:structzeros}. These two observations can be contrasted to the inconsistencies with the usual copula-based Spearman's $\rho$ \citep[Sections 4.2 and 4.4]{Genest07b}. Actually, it is known \citep{Cureton59} that Pearson's correlation computed on a binary contingency table can only reach the values $-1$ and $+1$ when $ p_{1 \bullet} = p_{\bullet 1}  = 0.5$. This is what is achieved by the Bernoulli copula transformation.

\ppn Kendall's $\tau$ corrected for the occurrence of ties (`$\tau_b$' in \citet{Genest07b}) is given, for the bivariate Bernoulli case, by $\tau_b = \frac{p_{00} - p_{0\bullet}p_{\bullet 0}}{\sqrt{p_{0\bullet}p_{\bullet 0} p_{1 \bullet}p_{\bullet 1}}}$. This, computed on the copula pmf (\ref{eqn:Berncoppmf}), reduced down to $\tau_b = \Upsilon$ again. This yields the same conclusion as above about the maximum values reached by $\tau_b$, which again can be contrasted with \citet[Section 4.6]{Genest07b}. 

\ppn In fact, it can be checked that many other classical association measures for contingency tables reduce down to $\Upsilon$ or $|\Upsilon|$ when computed on the Bernoulli copula distribution (\ref{eqn:Berncoppmf}), and those include Cram\'er's $V$ \citep[Chapter 21]{Cramer46}, Goodman and Kruskal's $\lambda$ \citep[Section 5.2]{Goodman54} or Cohen's $\kappa$ \citep{Cohen60}. This suggests that Yule's $\Upsilon$ is a very natural, if not the canonical, dependence parameter in the $(2 \times 2)$-table framework. 
Indeed, given that  (\ref{eqn:YuleY}) can be reversed as $\sqrt{\omega} = (1+\Upsilon)/(1-\Upsilon)$, the copula pmf $\overline{\pp}$ (\ref{eqn:Berncoppmf}) can be written under the even simpler form
\begin{equation} \overline{\pp}= \frac{1}{4} \begin{pmatrix} 1+\Upsilon & 1-\Upsilon \\ 1-\Upsilon & 1+\Upsilon \end{pmatrix} \label{eqn:BerncoppmfYule} \end{equation}
in terms of $\Upsilon \in [-1,1]$. The effect of $\Upsilon$ on $\overline{\pp}$ is thus linear in nature. The value of $\Upsilon$ acts as a ruler along $\overline{\ww} \, \overline{\mm}$ in Figure \ref{fig:tetra}: from $\Upsilon = -1$ at $\overline{\ww}$ to $\Upsilon = 1$ at $\overline{\mm}$, via $\Upsilon = 0$ at $\overline{\ppi}$.

\subsection{Construction of arbitrary bivariate Bernoulli distributions with given copula pmf} \label{sec:arbbern}

The Bernoulli nucleus $[\pp]$ contains all bivariate Bernoulli distributions given by (\ref{eqn:Berntrans}). \citet[p.\ 375]{Bishop75} noted that ``{\it We can use }[that]{\it \ transformation to change the given marginal distributions into any other set of marginal distributions}''. Indeed, we can construct a bivariate Bernoulli distribution $\pp$ whose marginals are Bern$(\pi_X)$ and Bern$(\pi_Y)$ ($0<\pi_X,\pi_Y<1$) and dependence structure prescribed by a certain Bernoulli copula characterised by (\ref{eqn:Berncoppmf}). For $0 < \omega < \infty$, that $\pp$ is the unique intersection between the set of all bivariate Bernoulli's with the requested margins, which is a segment parallel to $\overline{\ww}\overline{\mm}$ in Figure \ref{fig:tetra} \citep[Section 4]{Fienberg70}, and the surface representing the unique nucleus $[\pp]$ characterised by $\omega(\pp) = \omega$. That element is obtained explicitly by solving the quadratic system
\begin{equation*} \left\{\begin{array}{c c c} p_{10} + p_{11} & = & \pi_X \\ 
p_{01} + p_{11} & = & \pi_Y \\
\frac{p_{00}p_{11}}{p_{10}p_{01}} & = & \omega   \\
p_{00} + p_{01} + p_{10} + p_{11} & = & 1
\end{array} \right. , \label{eqn:Bernsyst}\end{equation*}
which yields, for $\omega \neq 1$,
\begin{equation} p_{11} = \frac{1}{2(\omega-1)} \left\{1+(\omega-1)(\pi_X+\pi_Y) - \sqrt{[1+(\omega-1)(\pi_X+\pi_Y)]^2-4\omega(\omega-1)\pi_X\pi_Y} \right\}, \label{eqn:p11om} \end{equation}
which is formula (2*) in \cite{Mosteller68}. If $\omega = 1$, then trivially $p_{11} =  \pi_X \pi_Y$. The other values follow by substitution. In particular, $p_{00} = 1-\pi_X-\pi_Y+p_{11}$. Of course $p_{11}$ in (\ref{eqn:p11om}) is an increasing function of $\omega$, and hence so is $p_{00}$. In the same time, `the value $C(1-\pi_X,1-\pi_Y)$ singled out by Sklar's theorem', described below (\ref{eqn:OR}), is precisely $p_{00}$, establishing the one-to-one correspondence between $C(1-\pi_X,1-\pi_Y)$ and $\omega$. Describing the dependence in a bivariate Bernoulli distribution by $\omega$, or any monotonic function of $\omega$ such as $\Upsilon$, is thus totally consistent with Sklar's theorem.


\ppn For $\omega = 0$, we must have either $p_{00} = 0$ or $p_{11} =0$ (or both). By obvious substitution, one gets
\begin{equation} \pp_1 = \begin{pmatrix} 0 & 1-\pi_X \\ 1-\pi_Y & 0 \end{pmatrix} , \quad  \pp_2 = \begin{pmatrix} 0 & 1-\pi_X \\ 1-\pi_Y & \pi_X+\pi_Y-1 \end{pmatrix} \quad  \text{ or } \quad \pp_3 = \begin{pmatrix} 1-\pi_X-\pi_Y & \pi_Y \\ \pi_X & 0 \end{pmatrix} , \label{eqn:decompmat} \end{equation}
if $\pi_X+\pi_Y =1$, $\pi_X+\pi_Y > 1$ or $\pi_X +\pi_Y < 1$, respectively. 

\ppn Now, if cases of `absolute association' $\pp_1$ can trivially be reconstructed from $\overline{\ww}$ through a transformation like (\ref{eqn:Berntrans}), it is not true for cases of `complete association' $\pp_2$ or $\pp_3$ given that (\ref{eqn:Berntrans}) does not allow to `escape' from $[\overline{\ww}]=[\pp_1]$ and $\pp_2,\pp_3 \notin [\overline{\ww}]$ (Section \ref{sec:structzeros}). However, the presence of a structural zero makes it actually easier to reconstruct $\pp_2$ or $\pp_3$ as it is enough to adjust the requested marginals around that 0. 

\begin{remark} \label{rmk:unik} In the Bernoulli tetrahedron, this amounts to finding the intersection between the `segment of requested margins', and either the face $\dd_2 \dd_3 \dd_4$ or the face $\dd_1 \dd_2 \dd_3$. In addition, because that segment is parallel to $\overline{\ww} \overline{\mm}$, it can only pierce one of those two faces, establishing geometrically the uniqueness of a distribution $\pp$ with the requested margins and $\omega = 0$. \qed
\end{remark}

The case $\omega=\infty$ is treated identically, by symmetry. Hence any two Bernoulli distributions can be glued together through the Bernoulli copula for showing any level of dependence set by $\omega \in [0,\infty]$.

\begin{example} Consider Table 1 in \cite{Lin09}, which arises from 69 medical malpractice claims. Two surgeon-reviewers were asked to determine whether a communication breakdown occurred during a hand-off in care (with possible answer `Yes' or `No'). It turns out that eight reviews were missing for Surgeon 1 and 11 reviews were missing for Surgeon 2. The raw data are:
\begin{equation} \begin{array}{c c ||c c c c c | c}
& \bsfrac{\text{Surg.}2}{\text{Surg.}1} & & \text{Yes} & \text{No} & \text{Missing} & \\
\hline\hline
& \text{Yes} & & 26 & 1 & 2 & & 29 \\ 
& \text{No} & & 5  & 18 & 9 & & 32  \\
& \text{Missing} & & 4  & 4 & 0 & & 8  \\
\hline 
& & & 35 & 23 & 11 &  & 69 \\
\end{array}. \label{eqn:rawdatLin09} \end{equation}
Focusing only on the 50 complete cases (Yes/No rows and columns only) and identifying `Yes' $=0$ and `No' $=1$ (although this is irrelevant), the corresponding bivariate Bernoulli (empirical) distribution is:
\begin{equation} \begin{array}{c l ||c c c c | c}
& \bsfrac{Y}{X} & & 0 & 1 & \\
\hline\hline
& 0 & & 0.52 & 0.02& & 0.54 \\
& 1 & & 0.10  & 0.36 & & 0.46  \\
\hline 
& & & 0.62 & 0.38 & & 1 \\
\end{array}. \label{eqn:compcasesLin} \end{equation}
It is found that $\omega = 93.6$, $\Upsilon = 0.813$, quantifying the high level of agreement between the two surgeons. The associated Bernoulli copula pmf (\ref{eqn:Berncoppmf})/(\ref{eqn:BerncoppmfYule}) is 
\begin{equation} \overline{\pp} = \begin{pmatrix} 0.453 & 0.047 \\ 0.047 & 0.453 \end{pmatrix}. \label{eqn:copLin09} \end{equation}
Now, in order to use the partially observed cases as well, \cite{Altham10} suggested to analyse (\ref{eqn:rawdatLin09}) as a $(3 \times 3)$-table, with $p_{22}$ being a structural zero given that we cannot observed such Missing-Missing cases. Given that we are rather in a situation of truncation here, it seems wiser to incorporate at best the missing cases to the main table by updating the marginals. By doing so, it seems sensible to assume that the degree of association would be in principle the same between the two surgeons' answers in cases in which one answer is missing as it is in fully observed cases. This is akin to a `Missing-at-Random' assumption. In other words, we would like to produce a bivariate Bernoulli distribution with marginals $(29,32)/61 = (0.475,0.525)$ and $(35,23)/58 = (0.603,0.397)$ (from (\ref{eqn:rawdatLin09})) and copula pmf (\ref{eqn:copLin09}). From (\ref{eqn:p11om}), it is easily found to be:
\begin{equation} \begin{array}{c l ||c c c c | c}
& \bsfrac{\text{Surg.}2}{\text{Surg.}1} & & \text{Yes} & \text{No} & \\
\hline\hline
& \text{Yes} & & 0.462 & 0.013 & & 0.475 \\
& \text{No} & & 0.141  & 0.383 & & 0.525  \\
\hline 
& & & 0.603 & 0.397 & & 1 \\
\end{array}. \label{eqn:corrcasesLin09} \end{equation}
This distribution seems to encompass the whole available information. 
\qed

 \end{example}

\ppn Indeed the bivariate Bernoulli case is rather simple, and most of the essential elements of the Bernoulli copula pmf have been investigated before. For instance, the form (\ref{eqn:Berncoppmf}) or the `quadratic formula' (\ref{eqn:p11om}) can be found in the previous literature. However, we agree with \cite{Faugeras17} that the virtue of the bivariate Bernoulli case is precisely its simplicity, which makes transparent the ideas and issues involved. The extension of those ideas to general $(R \times S)$-discrete distributions is investigated in the next section.

\section{General discrete distributions with finite support} \label{sec:RScop}

Let $(X,Y)$ be a bivariate discrete vector where $X$ and $Y$ may only take a finite number of values. Without loss of generality, let $X \in \Ss_X \doteq \{0,1,\ldots,R-1\}$ and $Y \in \Ss_Y \doteq \{0,1,\ldots,S-1\}$, with $R,S \in \N$, $2 \leq R,S < \infty$. Let $\pp$ be its joint probability mass function, defined by $p_{xy} = \P(X=x,Y=y)$, $(x,y) \in \Ss_X \times \Ss_Y$, and $\pp_X = (p_{0\bullet},p_{1\bullet},\ldots,p_{R-1 \bullet})$ and $\pp_Y = (p_{\bullet 0},p_{ \bullet 1},\ldots,p_{ \bullet S-1})$ its marginal distributions: $p_{x\bullet} = \sum_{y \in \Ss_Y} p_{xy} = \P(X=x)$ and $p_{\bullet y}= \sum_{x \in \Ss_X} p_{xy}  = \P(Y=y)$. Let $\Ps_{R \times S}$ be the set of all such bivariate discrete distributions $\pp$ with $p_{x \bullet} >0 \ \forall x\in \Ss_X$ and $p_{\bullet y} >0 \ \forall y\in \Ss_Y$, identified to the $(R \times S)$-matrices

\begin{equation*} \pp = \begin{pmatrix}
p_{00} & p_{01} &   \ldots & p_{0,S-1} \\
p_{10} & p_{11} &   \ldots & p_{1,S-1} \\
\vdots & \vdots &  \ddots  & \vdots \\
p_{R-1,0} & p_{R-1,1} &   \ldots & p_{R-1,S-1} 
\end{pmatrix}.  \label{eqn:ppRS} \end{equation*}

Any such distribution can be regarded as a point in the $(RS-1)$-dimensional simplex. Most of the ideas described below have, therefore, a geometric interpretation similar to Figure \ref{fig:tetra}, see \cite{Fienberg68}.

\subsection{Odds ratio matrix}

Sklar's Theorem establishes that one must be able to entirely describe the inner dependence structure of such a $(R \times S)$-bivariate discrete distribution by $(R-1)(S-1)$ parameters, as here $\Ran F_X \times \Ran F_Y$ consists of $(R-1)(S-1)$ informative locations (i.e., strictly inside the unit square). This is consistent with the usual break down of degrees of freedom in comparable $(R \times S)$-contingency tables: from $RS-1$ for an unconstrained table, minus $(R-1)$ when one fixes the row marginal distribution and $(S-1)$ when one fixes the column marginal distribution, so that there remain $(R-1)(S-1)$ degrees of freedom for describing the association structure of the table. Those $(R-1)(S-1)$ parameters can be a family of odds-ratios (Altham, \citeyear{Altham70}; Agresti, \citeyear[Section 2.4.1]{Agresti13}) -- at least when there is no structural zero in the table. \citet[p.\ 55]{Agresti13} stressed that `{\it given the marginals, the odds ratios determine the cell probabilities}'; in other words, the full distribution can be entirely reconstructed by {\it coupling} the marginal distributions and the set of odds ratios. Again, those entirely fulfil here the desired role of classical copulas, the explicit resort to which being consequently purposeless. 

\ppn Let 
\begin{equation} \omega_{xy} = \frac{p_{00}p_{xy}}{p_{0y}p_{x0}}, \quad \forall (x,y) \in \Ss_X\backslash \{0\} \times \Ss_Y\backslash \{0\} \label{eqn:omxy} \end{equation}
be the odds ratio of the bivariate Bernoulli pmf
\begin{equation*} \pp_{xy} = \frac{1}{K_{xy}}\begin{pmatrix} 
p_{00} & p_{0y} \\
p_{x0} & p_{xy}
\end{pmatrix}, \label{eqn:subtabBern} \end{equation*}
where $K_{xy} = \P(X \in \{0,x\}, Y \in  \{0,y\})$ is a normalisation constant irrelevant in (\ref{eqn:omxy}). 

\ppn Call $\Ms^{(+)}_{(R-1)\times (S-1)}$ the set of all $(R-1)\times(S-1)$-matrices with non-negative, possibly infinite, entries. Define the map
\begin{equation} \Omega: \Ps_{R\times S} \to \Ms^{(+)}_{(R-1)\times (S-1)}: \pp \to \Omega(\pp) = [\omega_{xy}]_{\substack{x = 1,\ldots,R-1, \\ y = 1,\ldots,S-1}} \label{eqn:ORmat} \end{equation}
where $\omega_{xy}$ is given by (\ref{eqn:omxy}). Thus, $\Omega(\pp)$, called the {\it odds ratio matrix}, is the matrix whose element $(x,y)$ is $\omega_{xy}$, $(x,y) \in \Ss_X\backslash \{0\} \times \Ss_Y\backslash \{0\}$. 

\begin{remark} \label{rmk:undef} Although they were ruled out in  (\ref{eqn:bivbernpmf})-(\ref{eqn:matBivBern}) when assuming $0< \pi_X,\pi_Y < 1$, cases of $0/0$ may arise in (\ref{eqn:omxy}). Then the corresponding entry of $\Omega(\pp)$ may be left undefined. Admitting some slight lack of rigour, we identify two odds ratio matrices whose all well defined entries are equal, i.e., an undefined entry in a matrix is assumed to be equal to whatever the corresponding entry may be in the other. \qed
\end{remark}

\subsection{Marginal transformations and nucleus} \label{sec:margtransnuc}

Define $\Ds^{(1)}_{Q \times Q}$ the set of all diagonal $Q \times Q$ matrices whose entry $(1,1)$ is equal to 1 and other diagonal entries are positive. Similarly to (\ref{eqn:Berntrans}), for any $\Phi \in \Ds^{(1)}_{R \times R}$ and $\Psi \in \Ds^{(1)}_{S \times S}$, let
\begin{equation} g_{\Phi,\Psi}: \Ps_{R\times S} \to \Ps_{R \times S}: g_{\Phi,\Psi}(\pp) =  \frac{ \Phi \cdot \pp \cdot \Psi}{\| \Phi \cdot \pp \cdot \Psi\|_1}. \label{eqn:gphipsi} \end{equation}
The matrix $\Phi$ multiplies the rows of $\pp$ and the matrix $\Psi$ multiplies the columns of $\pp$: this is akin to `marginal distortions' as in Section \ref{subsec:margtrans}. This defines a group action on $\Ps_{R \times S}$, which induces orbits $[\pp]$: \begin{equation} \pp \sim \pp^* \iff [\pp] = [\pp^*] \iff \exists (\Phi ,\Psi) \in  \Ds^{(1)}_{R \times R} \times \Ds^{(1)}_{S \times S} \text{ s.t. } \pp^* = g_{\Phi,\Psi}(\pp). \label{eqn:orbit} \end{equation}
Further, define a limit point of $[\pp]$ as an element of $\Ps_{R\times S}$ which can be written as
\begin{equation}
\frac{\left(\prod_{k=1}^\infty \Phi_k \right) \cdot \pp \cdot \left( \prod_{k=1}^\infty \Psi_k \right)}{\|\left(\prod_{k=1}^\infty \Phi_k \right)\cdot \pp \cdot \left( \prod_{k=1}^\infty \Psi_k \right)\|_1} \label{eqn:limg}
\end{equation}
for some sequences of matrices $\Phi_1,\Phi_2,\ldots \in \Ds_{R\times R}^{(1)}$ and $\Psi_1,\Psi_2,\ldots \in \Ds_{S\times S}^{(1)}$. Let $\Cl([\pp])$ be the closure of $[\pp]$, that is, the union of $[\pp]$ and its limit points.

\ppn Free from any sense of marginal distributions, the orbits $[\pp]$ must again be equivalence classes of dependence. Indeed, for any two $\pp, \pp^*\in \Ps_{R \times S}$,
\begin{equation}  \pp \sim \pp^*  \Rightarrow \Omega(\pp) = \Omega(\pp^*),  \label{eqn:equiclassdep} \end{equation}
as easily follows from the fact that all odds ratios are preserved by $g_{\Phi,\Psi}$, exactly as in (\ref{eqn:comOR}). This holds true for any undefined elements of $\Omega(\pp)$ as well, as $g_{\Phi,\Psi}$ leaves the zeros of $\pp$ unaffected. Hence $[\pp]$ will again be called the nucleus of the discrete pmf $\pp$, as it characterises the inner dependence structure of $(X,Y)$. 
Note that, if all entries of $\Omega(\pp)$ are defined and positive, then $\Omega(\pp) = \Omega(\pp^*) \Rightarrow [\pp]= [\pp^*]$. Like in Remark \ref{rmk:orbs}, though, one may find two $\pp_1, \pp_2 \in \Ps_{R \times S}$ with $\Omega(\pp_1) = \Omega(\pp_2)$ but $[\pp_1] \neq [\pp_2]$ when $\Supp(\pp_1) \neq \Supp(\pp_2)$, that is, when $\pp_1$ and $\pp_2$ show a different pattern of structural zeros. Again, the preponderant role of structural zeros on the dependence structure appears clearly. 

\ppn Like in Section \ref{subsec:Berncoppmf}, one may wish to single out the member of $[\pp]$ with uniform marginals for embodying the dependence pattern in $\pp$ by a simple element of the class. That one would be called the `copula pmf' of $\pp$, leading to the following definition of a discrete copula, the obvious analogue to Definition \ref{dfn:classcop}.

\begin{definition} \label{dfn:discrcop} A bivariate $(R \times S)$-discrete copula is the bivariate $(R \times S)$-discrete distribution of a vector $(U,V)$ whose both marginal distributions are discrete uniform on $\Ss_U \doteq \{\frac{1}{R+1},\frac{2}{R+1},\ldots,\frac{R}{R+1}\}$ and $\Ss_V \doteq \{\frac{1}{S+1},\frac{2}{S+1},\ldots,\frac{S}{S+1}\}$, respectively. The associated copula probability mass function (copula pmf) is thus a bivariate discrete pmf $\overline{\pp}$ on $\Ss_U \times \Ss_V$ such that for all $u \in \{0,\ldots,R-1\}$, $\sum_{v=0}^{S-1} \overline{p}_{uv} = \frac{1}{R}$, and for all $v \in \{0,\ldots,S-1\}$, $\sum_{u=0}^{R-1} \overline{p}_{uv} = \frac{1}{S}$, where $\overline{p}_{uv} = \P(U=\frac{1}{R+1}+\frac{u}{R+1},V=\frac{1}{S+1}+\frac{u}{S+1})$.
\end{definition} 

\begin{remark} A very similar definition is given in \cite{Kolesarova06}, who investigated such `discrete copulas' in the case $R=S$. The `copula pmf' here coincides essentially with the (rescaled) bistochastic matrix of their Proposition 2. See also \cite{Mayor05,Mayor07,Aguilo06,Kobayashi14} or \cite{Deamo17}. Those papers investigate the analytical properties of such matrices, though, and show little overlap with what is discussed here. See also \citet[Section 3.1.1]{Durante16}.\qed 
\end{remark}

\ppn Naturally, defining the copula pmf of $\pp$ as the member of $[\pp]$ with uniform margins raises the question of the existence and uniqueness of such an element on $[\pp]$.


\subsection{Existence and uniqueness of the copula pmf}

This question is actually heavily linked to the problem of `matrix scaling': `{\it Given a nonnegative matrix $A$, can we find diagonal matrices $D_1$ and $D_2$ such that $D_1 A D_2$ is doubly stochastic?}' \citet{Sinkhorn64} showed that the answer is affirmative if $A$ is a positive square matrix. Later, this result was generalised in many directions, including to non-negative and/or non-square matrices; 
see \cite{Idel16} for a review. 

\ppn From recent results in the field, a simple necessary and sufficient criterion for the existence and uniqueness of the copula pmf of a given $\pp \in \Ps_{R \times S}$ can be formulated. When all $p_{xy}$'s are positive in $\pp$, it can directly be deduced from \citet{Sinkhorn64,Sinkhorn67} that the copula pmf exists and is unique. Hence the defining criterion will be again the presence of structural zeros in $\pp$ and their layout. Define $\Supp(\pp) = \{(x,y)\in \Ss_X \times \Ss_Y \text{ s.t. } p_{xy} >0 \}$ the support of $\pp$, and $N(\pp) = \left\{(\nu_X \times \nu_Y): \nu_X \subset \Ss_X, \nu_Y \subset \Ss_Y \ \text{s.t.}\ \sum_{(x,y) \in \nu_X \times \nu_Y} p_{xy} =0 \right\}$, the set of {\it rectangular} subsets of $\Ss_X \times \Ss_Y$ on which $\pp$ is null. Naturally, $N(\pp) = \emptyset$ if and only if $\Supp(\pp) = \Ss_X \times \Ss_Y$, i.e.\ $p_{xy} >0 $ for all $(x,y)$.



\ppn Let $\Cs_{R \times S} = \{\pp \in \Ps_{R \times S}: p_{x\bullet} = \frac{1}{R} \ \forall x\in \Ss_X, p_{\bullet y} = \frac{1}{S} \ \forall y\in \Ss_Y \} \subset \Ps_{R \times S}$, the set of all $(R \times S)$-discrete copulas as per Definition \ref{dfn:discrcop}. For any discrete set $A$, denote $|A|$ the number of elements in $A$. Then: 

\begin{theorem} \label{thm:coppmf} Let $\pp \in \Ps_{R \times S}$. 
\begin{enumerate}[(a)]
\item Suppose that, for all $(\nu_X \times \nu_Y) \in N(\pp)$, $\frac{|\nu_X|}{R} + \frac{|\nu_Y|}{S} < 1$. Then, there exists a unique $\overline{\pp} \in [\pp] \cap \Cs_{R \times S}$;
\item Suppose that, for all $(\nu_X \times \nu_Y) \in N(\pp)$, $\frac{|\nu_X|}{R} + \frac{|\nu_Y|}{S} \leq 1$, with $\frac{|\tilde{\nu}_X|}{R} + \frac{|\tilde{\nu}_Y|}{S} = 1$ for some $(\tilde{\nu}_X \times \tilde{\nu}_Y) \in N(\pp)$. 
\begin{enumerate}[(i)] 
\item If, for all $(\tilde{\nu}_X \times \tilde{\nu}_Y) \in N(\pp)$ such that $\frac{|\tilde{\nu}_X|}{R} + \frac{|\tilde{\nu}_Y|}{S} = 1$, $(\Ss_X \backslash \tilde{\nu}_X \times \Ss_Y \backslash \tilde{\nu}_Y) \in N(\pp)$, then there exists a unique $\overline{\pp} \in [\pp] \cap \Cs_{R \times S}$;
\item If there exists $(\tilde{\nu}^*_X \times \tilde{\nu}^*_Y) \in N(\pp)$ such that $\frac{|\tilde{\nu}^*_X|}{R} + \frac{|\tilde{\nu}^*_Y|}{S} = 1$ and $(\Ss_X \backslash \tilde{\nu}^*_X \times \Ss_Y \backslash \tilde{\nu}^*_Y) \notin N(\pp)$, then $[\pp] \cap \Cs_{R \times S} = \emptyset$ but there exists a unique $\overline{\pp} \in \Cl\left([\pp]\right) \cap \Cs_{R \times S}$;
\end{enumerate}
\item Suppose that there exists $(\doublewidetilde{\nu}_X \times \doublewidetilde{\nu}_Y) \in N(\pp)$ such that $\frac{|\doublewidetilde{\nu}_X|}{R} + \frac{|\doublewidetilde{\nu}_Y|}{S} > 1$. Then, $\Cl\left([\pp]\right) \cap \Cs_{R \times S} = \emptyset$.
\end{enumerate}
\end{theorem}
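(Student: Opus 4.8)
The plan is to recognise that finding $\overline{\pp} \in [\pp] \cap \Cs_{R \times S}$ is precisely the \emph{exact matrix-scaling} problem. Since every $g_{\Phi,\Psi}$ in (\ref{eqn:gphipsi}) merely rescales rows and columns by positive factors, a member of $[\pp]$ with uniform margins is nothing but a positive diagonal scaling $D_1 \pp D_2$ (normalised) whose row sums are all $1/R$ and whose column sums are all $1/S$. The whole theorem then becomes a dictionary translating the abstract scalability criterion into the combinatorial language of the rectangular null sets $N(\pp)$, building on Sinkhorn's theorem and its generalisations \citep{Sinkhorn64,Sinkhorn67,Idel16}.

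Two elementary facts drive everything. First, a \emph{support lemma}: $g_{\Phi,\Psi}$ leaves $\Supp(\pp)$ unchanged (positive entries stay positive, zeros stay zero), and any limit point of the form (\ref{eqn:limg}) has support contained in $\Supp(\pp)$ — entries may be annihilated in the limit but never created. Hence every element of $\Cl([\pp])$ vanishes on every rectangle in $N(\pp)$. Second, a \emph{marginal-counting inequality}: if some $\overline{\pp} \in \Cs_{R \times S}$ vanishes on $\nu_X \times \nu_Y$, the mass $|\nu_X|/R$ borne by the rows $\nu_X$ must reside entirely in the columns $\Ss_Y \setminus \nu_Y$, of total capacity $(S-|\nu_Y|)/S$; thus $|\nu_X|/R + |\nu_Y|/S \le 1$, and equality forces $(\Ss_X \setminus \nu_X) \times (\Ss_Y \setminus \nu_Y)$ to be null in $\overline{\pp}$ as well. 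Together these give part \textbf{(c)} at once: a rectangle $\doublewidetilde{\nu}_X \times \doublewidetilde{\nu}_Y \in N(\pp)$ with $|\doublewidetilde{\nu}_X|/R + |\doublewidetilde{\nu}_Y|/S > 1$ is inherited by every member of $\Cl([\pp])$, rendering the counting inequality violated throughout, so $\Cl([\pp]) \cap \Cs_{R \times S} = \emptyset$.

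For existence and uniqueness I would feed the hypotheses into the scaling theorems. In \textbf{(a)}, all inequalities being strict is the fully indecomposable / total-support regime: Sinkhorn--Knopp \citep{Sinkhorn67} furnishes a scaling converging to a \emph{unique} matrix with the prescribed uniform margins, and strictness keeps the iterates bounded away from the boundary, so the limit retains the full support of $\pp$ and lies in $[\pp]$ itself. In \textbf{(b)(i)}, each equality rectangle $\tilde{\nu}$ together with its (hypothesised null) complement decomposes $\pp$, after a permutation of rows and columns, into non-interacting diagonal blocks whose required sub-margins match exactly (the equality is what makes the block totals consistent); applying (a) inside each block and reassembling yields the unique $\overline{\pp} \in [\pp] \cap \Cs_{R \times S}$.

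The delicate case, and the one I expect to be the main obstacle, is \textbf{(b)(ii)}. Here the marginal-counting equality would compel $\overline{\pp}$ to vanish on the complement of $\tilde{\nu}^*_X \times \tilde{\nu}^*_Y$, yet $\pp$ carries positive mass there which no $g_{\Phi,\Psi}$ can erase; hence $[\pp] \cap \Cs_{R \times S} = \emptyset$. To produce the claimed unique element of $\Cl([\pp]) \cap \Cs_{R \times S}$ I would construct the limit (\ref{eqn:limg}) explicitly, driving the scaling factors attached to the offending block to $0$ or $\infty$ so as to squeeze that residual mass out, land on the boundary face carrying the forced block structure, and there invoke (b)(i). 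The hard part is the analysis of this joint limit: showing that the extreme scalings genuinely send (\ref{eqn:limg}) into $\Cs_{R \times S}$, that the surviving sub-block stays positively scalable so the uniqueness of (a) applies block-wise, and that no competing boundary limit exists. This amounts to a careful compactness and continuity argument on the simplex, combined with the uniqueness clause of Sinkhorn's theorem applied to each block.
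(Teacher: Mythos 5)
Your overall strategy --- recast membership of $[\pp] \cap \Cs_{R \times S}$ as an exact matrix-scaling problem and read the trichotomy off the scaling literature --- is in substance the same as the paper's, because the paper offers no self-contained argument at all: its proof is the single line ``This is essentially Theorems 3.1, 3.2 and 3.3 in \cite{Brossard17}'', i.e.\ the convergence theory of the iterated proportional fitting procedure, which is the same body of results you invoke through \cite{Sinkhorn64}, \cite{Sinkhorn67} and \cite{Idel16}. The difference is one of division of labour. Where the paper outsources everything, you actually prove the two necessity statements: your support lemma (finite scalings preserve the zero pattern, limits can only enlarge it) plus the marginal-counting inequality give clean, complete proofs of part (c) and of $[\pp] \cap \Cs_{R \times S} = \emptyset$ in (b)(ii); and your observation that equality in the counting bound forces the complementary rectangle to be null is precisely the consistency condition that makes the block decomposition in (b)(i) work. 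These pieces are correct, modulo a small point in (b)(i): a block may itself contain further equality rectangles, so one should recurse on the decomposition rather than apply (a) directly inside each block.

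What you do not prove is the existence-and-uniqueness half of (b)(ii) --- that $\Cl\left([\pp]\right) \cap \Cs_{R \times S}$ is a nonempty singleton --- and, to a lesser degree, the claim in (a) that strictness of all the inequalities guarantees the limit is attained by finite scalings, so that $\overline{\pp}$ lies in $[\pp]$ itself rather than merely in its closure. You flag the former yourself as the main obstacle, and your plan (drive the scaling factors of the offending block to the boundary, land on the face carrying the forced zero pattern, apply (b)(i) there, and argue uniqueness by compactness plus block-wise Sinkhorn uniqueness) has the right shape; but carrying it out is exactly the content of Theorems 3.2 and 3.3 of \cite{Brossard17}, which analyse the IPF iterates in this degenerate regime and establish convergence to a unique limit with the enlarged zero pattern. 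So you have correctly located the crux; your proposal becomes a complete proof the moment you cite the same results the paper does (or the corresponding statements reviewed in \cite{Idel16}), but as written that one step remains a plan rather than a proof.
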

\begin{proof} This is essentially Theorems 3.1, 3.2 and 3.3 in \cite{Brossard17}. \end{proof}

Case $(a)$ is the `easy' case, which obviously covers all $(R \times S)$-distributions with no structural zeros ($N(\pp) = \emptyset$), but not only: $\pp$ is allowed to have structural zeros ($N(\pp) \neq \emptyset$), provided those are not too `prominent' in the specified sense. The unique $\overline{\pp}$ is obviously the copula pmf of $\pp$: it belongs to $\Cs_{R \times S}$ and it is such that $\Omega(\overline{\pp}) = \Omega(\pp)$ (in the sense of Remark \ref{rmk:undef}), as $\overline{\pp} \in [\pp]$ and (\ref{eqn:equiclassdep}). Note that, from (\ref{eqn:orbit}), $\Supp(\overline{\pp}) = \Supp(\pp)$, that is, the pattern of structural zeros (if any) is the same in $\pp$ and $\overline{\pp}$. 

\ppn Case $(b)$ is the critical case. In case $(b\,(i))$, the matrix $\pp$ can be written under a `disconnected' form, that is, it can be made block-diagonal by some permutations of its rows and columns. Then, each sub-block of non-zero elements of $\pp$ can be dealt with separately when adjusting the margins, and it remains possible to write $\overline{\pp}$ under the form (\ref{eqn:gphipsi}), that is $\overline{\pp} \in [\pp]$. This preserves the pattern of zeros as well, and $\Supp(\overline{\pp}) = \Supp(\pp)$. In the Bernoulli case, this corresponds to `absolute association' ($(i)$ in (\ref{eqn:Bernlim})). 

\ppn By contrast, in case $(b\,(ii))$, the matrix $\pp$ is `connected'. For complying with the uniform margins constraint, new zeros must be created in $\overline{\pp}$ which must therefore be a limit point of $[\pp]$ in the sense (\ref{eqn:limg}): $\overline{\pp} \in \Cl([\pp])$ and $\Supp(\overline{\pp}) \subset \Supp(\pp)$. The new zeros are created on $\bigcup_{(\tilde{\nu}^*_X \times \tilde{\nu}^*_Y) \in N(\pp)} (\Ss_X \backslash \tilde{\nu}^*_X \times \Ss_Y \backslash \tilde{\nu}^*_Y)$. But it holds true that $\Omega(\overline{\pp}) = \Omega(\pp)$ (in the sense of Remark \ref{rmk:undef}) and $\overline{\pp} \in \Cs_{R \times S}$, hence $\overline{\pp}$ is again the unique copula pmf of $\pp$. In the Bernoulli case, this corresponds to `complete association' ($(ii)$ in (\ref{eqn:Bernlim})).

\ppn Finally, case $(c)$ establishes the non-existence of a copula pmf when the structural zeros form a bulky subset of $\pp$. As the zeros cannot be turned into positive values by (\ref{eqn:gphipsi}) and are frozen, there do not remain sufficiently many degrees of freedom for adjusting the marginals. Pragmatically, the dependence between $X$ and $Y$ is so overly dictated by the structural zeros that an approach based on odds ratios is pointless.

\ppn The above observations allow us to state:
\begin{corollary}[{\bf Existence and uniqueness of the copula pmf}] \label{cor:coppmf} The bivariate discrete distribution $\pp \in \Ps_{R \times S}$ admits a unique copula pmf $\overline{\pp}$ if and only if $\frac{|\nu_X|}{R} + \frac{|\nu_Y|}{S} \leq 1$ for all $(\nu_X \times \nu_Y) \in N(\pp)$. By definition, the copula pmf $\overline{\pp}$ has discrete uniform margins, is such that $\Omega(\overline{\pp}) = \Omega(\pp)$ (in the sense of Remark \ref{rmk:undef}) and $\Supp(\overline{\pp}) \subseteq \Supp(\pp)$.
\end{corollary}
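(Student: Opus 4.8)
The plan is to read the corollary off Theorem~\ref{thm:coppmf} by reorganising its four cases around the single inequality $\frac{|\nu_X|}{R}+\frac{|\nu_Y|}{S}\le 1$, and then to confirm the three asserted properties of $\overline{\pp}$ from facts already established above. My first observation would be that the stated condition is exactly the disjunction of the hypotheses of cases~$(a)$ and~$(b)$: if the inequality holds strictly for every $(\nu_X\times\nu_Y)\in N(\pp)$ we are in case~$(a)$, whereas if equality is attained for at least one rectangular null set (without ever being exceeded) we are in case~$(b)$. Its negation --- the existence of some $(\nu_X\times\nu_Y)\in N(\pp)$ with $\frac{|\nu_X|}{R}+\frac{|\nu_Y|}{S}>1$ --- is precisely the hypothesis of case~$(c)$. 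Throughout, ``unique copula pmf'' is to be understood as $\Cl([\pp])\cap\Cs_{R\times S}$ being a singleton.

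For sufficiency I would proceed case by case. In case~$(a)$, Theorem~\ref{thm:coppmf}$(a)$ yields a unique $\overline{\pp}\in[\pp]\cap\Cs_{R\times S}$; in case~$(b)$ both subcases deliver a unique element of $\Cl([\pp])\cap\Cs_{R\times S}$, which already lies in $[\pp]$ in $(b\,(i))$ and is a genuine limit point in $(b\,(ii))$. A small point to dispatch here is that in cases~$(a)$ and $(b\,(i))$ uniqueness within $[\pp]\cap\Cs_{R\times S}$ must be upgraded to uniqueness within $\Cl([\pp])\cap\Cs_{R\times S}$: this follows because a uniform-margin distribution carrying a zero on a rectangle $\nu_X\times\nu_Y$ forces $\frac{|\nu_X|}{R}+\frac{|\nu_Y|}{S}\le 1$ with equality whenever the complementary rectangle is also null, so no additional limit point in $\Cs_{R\times S}$ can appear absent an equality case. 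For necessity, if the inequality fails we are in case~$(c)$, and Theorem~\ref{thm:coppmf}$(c)$ gives $\Cl([\pp])\cap\Cs_{R\times S}=\emptyset$; this settles the `if and only if'.

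It then remains to verify the three listed properties. Uniform margins are immediate from membership in $\Cs_{R\times S}$ (Definition~\ref{dfn:discrcop}). For the support, in cases~$(a)$ and $(b\,(i))$ one has $\overline{\pp}\in[\pp]$, and since $g_{\Phi,\Psi}$ in (\ref{eqn:gphipsi}) leaves the zero pattern of $\pp$ unchanged (cf.\ (\ref{eqn:orbit})), $\Supp(\overline{\pp})=\Supp(\pp)$; in case~$(b\,(ii))$ the limiting operation (\ref{eqn:limg}) creates new zeros, so $\Supp(\overline{\pp})\subset\Supp(\pp)$. Either way $\Supp(\overline{\pp})\subseteq\Supp(\pp)$. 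The identity $\Omega(\overline{\pp})=\Omega(\pp)$, in the sense of Remark~\ref{rmk:undef}, is immediate from (\ref{eqn:equiclassdep}) whenever $\overline{\pp}\in[\pp]$, i.e.\ in cases~$(a)$ and $(b\,(i))$.

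The hard part will be the preservation of the odds ratios in the limiting case~$(b\,(ii))$, where $\overline{\pp}$ is merely a limit point rather than an actual member of the orbit. The argument I would give is that each finite product appearing in (\ref{eqn:limg}) is a bona fide $g_{\Phi,\Psi}$ and hence preserves every odds ratio exactly as in (\ref{eqn:comOR}); passing to the limit, any $\omega_{xy}$ in (\ref{eqn:omxy}) built from entries that remain positive is preserved by continuity in those entries, while the entries driven to zero only turn certain $\omega_{xy}$ into $0$, $\infty$, or an undefined $0/0$ form, all reconciled through the convention of Remark~\ref{rmk:undef}. This is exactly the reasoning already sketched in the discussion of case~$(b\,(ii))$ below Theorem~\ref{thm:coppmf}, so once it is made precise the corollary follows.
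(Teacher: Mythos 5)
Your proposal is correct and follows essentially the same route as the paper: the corollary is simply read off Theorem~\ref{thm:coppmf} by noting that the stated inequality is exactly the disjunction of the hypotheses of cases~$(a)$ and~$(b)$ (its negation being case~$(c)$), with the three listed properties of $\overline{\pp}$ taken from the paper's case-by-case discussion following that theorem. The only difference is one of care rather than method: you explicitly address the upgrade of uniqueness from $[\pp]\cap\Cs_{R\times S}$ to $\Cl([\pp])\cap\Cs_{R\times S}$ in cases~$(a)$ and $(b\,(i))$, and the survival of $\Omega(\overline{\pp})=\Omega(\pp)$ under the limiting operation (\ref{eqn:limg}) in case~$(b\,(ii))$ --- two points the paper passes over silently, deferring in effect to the results of \cite{Brossard17}.
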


\subsection{Iterated proportional fitting procedure} \label{sec:IPF}

Corollary \ref{cor:coppmf} establishes the existence and uniqueness of the copula pmf $\overline{\pp}$ of $\pp$ under mild conditions. 
However, unlike in the Bernoulli case (\ref{eqn:Berncoppmf}), $\overline{\pp}$ is usually not available in closed form in the general $(R \times S)$-case.\footnote{Specific models lead to closed form copula pmf, though; see Section \ref{sec:paramcop}.} An iterative construction of $\overline{\pp}$  consists of alternately normalising the rows and columns of $\pp$ to have uniform marginals. That type of procedure, known as {\it iterated proportional fitting} (IPF), has been used in statistics since \cite{Deming40}, and its convergence was investigated in \cite{Ireland68,Fienberg70,Csiszar75} and \cite{Ruschendorf95}; see also \citet[Section 3.6]{Bishop75}. More recent results \citep{Brossard17} guarantee that the IPF seeded on any $\pp \in \Ps_{R \times S}$ indeed converges to its copula pmf $\overline{\pp}$ provided that it exists, i.e.\ under the condition of Corollary \ref{cor:coppmf}. More precisely, the convergence of the IPF procedure is geometric in case $(a)$ and $(b\,(i))$, but slower in case $(b\,(ii))$ \citep[Theorems 3.2 and 3.3]{Brossard17}. The copula pmf $\overline{\pp}$ of any bivariate discrete distribution admitting one can thus be obtained almost instantly. The R package {\tt mipfp} \citep{Barthelemy15} provides an easy implementation of IPF, which was used for the examples in this paper.

\subsection{Construction of arbitrary bivariate discrete distributions with given copula pmf} \label{sec:arbdistrRS}

\ppn Similarly to Section \ref{sec:arbbern}, one may want to construct a $(R \times S)$-discrete distribution with particular marginal distributions $\pp_X$ and $\pp_Y$ and dependence structure driven by a copula pmf $\overline{\pp}$. The question of the existence and uniqueness of such a distribution is (partially) given by the following result, analogue to Theorem \ref{thm:coppmf}.


\begin{theorem} Let $\pp_X = (p_{0\bullet},p_{1\bullet},\ldots,p_{R-1 \bullet})$ and $\pp_Y = (p_{\bullet 0},p_{ \bullet 1},\ldots,p_{ \bullet S-1})$ be some target marginal distributions for $X$ and $Y$, respectively. Let $\overline{\pp} \in \Cs_{R\times S}$, a $(R \times S)$-copula pmf.
\begin{enumerate}[(a)]
	\item Suppose that, for all $(\nu_X \times \nu_Y) \in N(\overline{\pp})$, $\sum_{x \in \nu_X} p_{x\bullet } + \sum_{y \in \nu_Y} p_{\bullet y}< 1$. Then, there exists a unique $\pp \in [\overline{\pp}]$ with the requested marginal distributions;
	\item Suppose that, for all $(\nu_X \times \nu_Y) \in N(\overline{\pp})$, $\sum_{x \in \nu_X} p_{x\bullet } + \sum_{y \in \nu_Y} p_{\bullet y} \leq 1$, with $\sum_{x \in \tilde{\nu}_X} p_{x\bullet } + \sum_{y \in \tilde{\nu}_Y} p_{\bullet y} = 1$ for some $(\tilde{\nu}_X \times \tilde{\nu}_Y) \in N(\overline{\pp})$. If, for all $(\tilde{\nu}_X \times \tilde{\nu}_Y) \in N(\overline{\pp})$ such that $\sum_{x \in \tilde{\nu}_X} p_{x\bullet } + \sum_{y \in \tilde{\nu}_Y} p_{\bullet y} = 1$, $(\Ss_X \backslash \tilde{\nu}_X \times \Ss_Y \backslash \tilde{\nu}_Y) \in N(\overline{\pp})$, then there exists a unique $\pp \in [\overline{\pp}]$ with the requested marginal distributions.
\end{enumerate}
\end{theorem}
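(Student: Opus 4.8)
The plan is to read this as the matrix-scaling problem dual to the one behind Theorem~\ref{thm:coppmf}, and to carry over its conditions with the uniform marginals replaced by $\pp_X,\pp_Y$. By (\ref{eqn:orbit}) every candidate $\pp\in[\overline{\pp}]$ has the form $g_{\Phi,\Psi}(\overline{\pp})$, i.e.\ a positive diagonal rescaling of the rows and columns of $\overline{\pp}$ followed by renormalisation. Such rescalings leave the odds-ratio matrix unchanged (cf.\ (\ref{eqn:comOR})--(\ref{eqn:equiclassdep})) and can only shrink the support, so finding $\pp\in[\overline{\pp}]$ with marginals $\pp_X,\pp_Y$ is exactly the problem of scaling the nonnegative matrix $\overline{\pp}$ to prescribed row sums $\pp_X$ and column sums $\pp_Y$. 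The sole difference with Theorem~\ref{thm:coppmf} is that the uniform target masses $|\nu_X|/R$ and $|\nu_Y|/S$ there are now $\sum_{x\in\nu_X}p_{x\bullet}$ and $\sum_{y\in\nu_Y}p_{\bullet y}$, which is precisely how the hypotheses are phrased.

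First I would check necessity. Because $g_{\Phi,\Psi}$ preserves zeros, any $\pp\in[\overline{\pp}]$ has $\Supp(\pp)\subseteq\Supp(\overline{\pp})$, so for each rectangular zero-block $(\nu_X\times\nu_Y)\in N(\overline{\pp})$ the mass of the rows in $\nu_X$ must be placed entirely in the columns outside $\nu_Y$; comparing marginals yields $\sum_{x\in\nu_X}p_{x\bullet}\leq 1-\sum_{y\in\nu_Y}p_{\bullet y}$, the inequality of case~$(b)$. Moreover, equality here forces the rows of $\nu_X$ to absorb all the mass of the columns outside $\nu_Y$, hence the complementary block $(\Ss_X\backslash\nu_X\times\Ss_Y\backslash\nu_Y)$ to carry no mass, which is admissible within $[\overline{\pp}]$ only if that block is already a zero-block of $\overline{\pp}$. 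This is exactly the complement condition $(\Ss_X\backslash\tilde\nu_X\times\Ss_Y\backslash\tilde\nu_Y)\in N(\overline{\pp})$ appearing in case~$(b)$.

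For sufficiency and uniqueness I would invoke the matrix-scaling results already used for Theorem~\ref{thm:coppmf} (\cite{Sinkhorn64,Sinkhorn67,Brossard17}), now run towards the target marginals $\pp_X,\pp_Y$. When $\overline{\pp}$ has full support the scaling to any strictly positive marginals exists and is unique by the Sinkhorn-type theorems; with structural zeros its feasibility is governed precisely by the zero-block inequalities above, so under the strict inequalities of case~$(a)$ one obtains a unique $\pp\in[\overline{\pp}]$ with $\Supp(\pp)=\Supp(\overline{\pp})$. Uniqueness inside the orbit is the familiar fact that the marginals together with the fixed odds-ratio matrix $\Omega(\overline{\pp})$ determine the cell probabilities (\citet[p.\,55]{Agresti13}), and also follows from strict convexity of the relative-entropy functional minimised by the IPF iteration. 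In case~$(b)$ the complement condition lets one permute $\overline{\pp}$ into block-diagonal form; each block is then scaled separately to its share of $\pp_X,\pp_Y$, so the tight inequalities are met with no new zeros created, the support is preserved and $\pp$ stays in $[\overline{\pp}]$, the exact analogue of case~$(b\,(i))$ of Theorem~\ref{thm:coppmf}.

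The main obstacle I anticipate is this tight case~$(b)$, where the scaling lies on the boundary of the transportation polytope of $(\pp_X,\pp_Y)$. One must verify that the complement-in-$N(\overline{\pp})$ hypothesis genuinely forces the block-diagonal splitting that keeps $\pp$ in the orbit itself rather than only in its closure $\Cl([\overline{\pp}])$; by contrast, dropping it would land one in an analogue of case~$(b\,(ii))$ of Theorem~\ref{thm:coppmf}, where new zeros have to be manufactured through a limit of the form (\ref{eqn:limg}) (this is presumably why the present statement stops at the block-diagonal subcase). Making the reduction from general-marginal scaling to the doubly-stochastic setting of \cite{Brossard17} fully rigorous, and confirming that both existence and uniqueness pass blockwise, is where the real work sits; as with Theorem~\ref{thm:coppmf}, the result then follows from the corresponding existence--uniqueness theorems of \cite{Brossard17} applied with the target marginals $\pp_X,\pp_Y$.
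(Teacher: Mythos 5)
Your proposal is correct and takes essentially the same route as the paper: the paper's proof is simply a direct appeal to Theorems 3.1 and 3.2 of \cite{Brossard17}, applied with the target marginals $\pp_X,\pp_Y$ in place of the uniform ones --- exactly the reduction you set up. The extra material you supply (necessity of the zero-block inequalities, preservation of support under $g_{\Phi,\Psi}$, and the blockwise treatment in case $(b)$) fleshes out details the paper leaves implicit in that citation, but it is the same argument.
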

\begin{proof} See \cite{Brossard17}, Theorems 3.1 and 3.2. \end{proof}
This result guarantees the existence of the requested distribution $\pp$ in `easy' cases, akin to cases $(a)$ and $(b(i))$ in Theorem \ref{thm:coppmf}: no zeros in $\overline{\pp}$, or zeros not lying on rows and columns carrying large target marginal weights, or `disconnected' copula pmf $\overline{\pp}$. It does not say that such a $\pp$ does not exist in the other cases. In fact, such distribution may exist, as evidenced by (\ref{eqn:decompmat}) in the Bernoulli case. However, reconstructing $\pp$ then is not achieved through the transformation (\ref{eqn:gphipsi}), as some zeros of $\overline{\pp}$ must be turned back into a positive probability, hence $\pp \not\sim \overline{\pp}$. Indeed $\pp$ belongs to an orbit $[\pp]$ of which $\overline{\pp}$ is only a limit point. 
Whether or not there exist general existence and uniqueness results in those cases remains an open question; however, a geometric perspective similar to Remark \ref{rmk:unik} suggests positive conclusions.  

\ppn Note that the IPF procedure is not tied to uniform margins and can be used for identifying the distribution $\pp$ with any requested sets of margins on the nucleus $[\overline{\pp}]$.

\subsection{Yule's coefficient} 

Inspired by Spearman's $\rho$, one can define a margin-free measure of overall concordance in $\pp$ as Pearson's correlation coefficient computed on $\overline{\pp}$. By analogy with Section \ref{subsec:Yule2}, we call such a coefficient Yule's coefficient $\Upsilon$. 
Suppose that $U$ is Discrete uniform on $\{\frac{1}{R+1},\frac{2}{R+1},\ldots,\frac{R}{R+1}\}$, $V$ is Discrete uniform on $\{\frac{1}{S+1},\frac{2}{S+1},\ldots,\frac{S}{S+1}\}$, and their joint pmf is given by the copula pmf $\overline{\pp}$. It can then be checked that Pearson's correlation between $U$ and $V$ is 
\begin{equation} \Upsilon =  3 \sqrt{\frac{(R-1)(S-1)}{(R+1)(S+1)}} \left(\frac{4}{(R-1)(S-1)} \sum_{u=0}^{R-1} \sum_{v = 0}^{S-1} uv\bar{p}_{uv}-1 \right). \label{eqn:Yule}\end{equation}

\ppn This coefficient is equal to $1$ or $-1$ if and only if $\overline{\pp}$ is a diagonal matrix, which obviously requires $R = S$. The diagonal copula pmf's of size $(R \times R)$ are clearly 
\begin{equation} \overline{\mm} \doteq  \begin{pmatrix} \frac{1}{R} & & 0 \\ & \ddots & \\ 0 & & \frac{1}{R} \end{pmatrix} \quad \text{ or } \quad   \overline{\ww} \doteq \begin{pmatrix} 0 & & \frac{1}{R} \\ & \iddots & \\ \frac{1}{R} & & 0 \end{pmatrix}, \label{eqn:Frech} \end{equation}
the Fr\'echet bounds analogous to (\ref{eqn:FrechBerncoppmf}). Note that any $\pp \in \Ps_{R \times R}$ represented by a diagonal matrix is easily seen to admit $\overline{\mm}$ or $\overline{\ww}$ as copula pmf. Those fall into case $(b\,(i))$ of Theorem \ref{thm:coppmf} and correspond to (positive or negative) `absolute association'; cf.\ Section \ref{sec:structzeros}. There also exist non-diagonal distributions $\pp \in \Ps_{R \times R}$, belonging to case $(b\,(ii))$ of Theorem \ref{thm:coppmf}, which admit $\overline{\mm}$ or $\overline{\ww}$ as copula pmf as well. Those would be akin to `complete association', by analogy to Section \ref{sec:structzeros}. As in the bivariate Bernoulli case, the distinction between `absolute' and `complete association' is only a marginal feature which must be ignored by the copula. For instance, positive `absolute association', i.e.\ probability weight concentrated on the main diagonal of $\pp$, is only possible if $\pp_X \equiv \pp_Y$. By contrast, dependence as strong as can be between unequal discrete marginals must turn into `complete association'. As a result, $\Upsilon = \pm 1$ without distinction between `absolute' and `complete' association.

\ppn Now, when $R \neq S$, $|\Upsilon|$ cannot reach 1. Indeed, if $X$ and $Y$ do not take the same number of values, it is hard to conceive a sense of `perfect dependence', as the associated copula pmf can never approach any of the diagonal forms $\overline{\mm}$ or $\overline{\ww}$. In that case, the maximum value attained by $|\Upsilon|$ occurs when $\pp$ is the pmf associated to the Fr\'echet bounds in the class of $(R \times S)$-bivariate discrete distributions with uniform margins \citep[`Exemple I']{Frechet51}.

\ppn It is also clear that, $\Upsilon$ being essentially the discrete analogue of Spearman's $\rho$, it only detects monotonic dependence (`concordance') between $X$ and $Y$. In particular, for $\max(R,S)>2$, $\Upsilon$ can be 0 even when $X$ and $Y$ are not independent. Genuine measures of dependence $\Delta$, in the sense of $\Delta = 0 \iff X\indep Y$, may be defined along the same way as in \cite{Geenens18}. 

\begin{example} The copula pmf being, by definition, margin-free, its construction only requires a sense of order for the `values' of $X$ and $Y$. In particular, if $X$ and/or $Y$ are ordinal random variables, then it remains meaningful to construct their copula pmf in order to understand their dependence. This is illustrated here through data on congenital sex organ malformations cross-classified by maternal alcohol consumption from a study described in \cite{Graubard87}:
\begin{equation*} \begin{array}{l || c c c c c c | c}
  &  \multicolumn{5}{c}{\text{Maternal Alcohol Consumption (drinks/day)}} & & \\
  &  0 & <1 & 1-2 & 3-5 & \geq 6 & & \\
\hline\hline
 \text{No Malformation} &  17,066 & 14,464 & 788 & 126 & 37 & & 32,481 \\ 
 \text{Malformation} & 48 & 38 & 5 & 1 & 1 & & 93  \\
\hline 
 & 17,114 & 14,502 & 793 & 127 & 38 & & 32,574 \\
\end{array}.  \end{equation*}


\ppn Agreed that `No malformation' $\prec$ `Malformation', the dependence between maternal alcohol consumption and congenital malformation can be understood through that of the bivariate discrete distribution 
\begin{equation} \begin{array}{c l ||c c c c c c c | c}
& \bsfrac{Y}{X} & & 0 & 1 & 2 & 3 & 4 & & \\
\hline\hline
& 0 & & 0.52391 &  0.44404 & 0.02419 & 0.00387 & 0.00114  & & 0.99714 \\ 
& 1 & & 0.00147 & 0.00117 & 0.00015 & 0.00003 & 0.00003 & & 0.00286  \\
\hline 
& & & 0.52539 & 0.44520 & 0.02434 & 0.00390 & 0.00117  & & 1 \\
\end{array},  \label{eqn:TabGooddist}\end{equation}
whose dependence structure is not easily apparent. The IPF procedure (Section \ref{sec:IPF}) returns the copula pmf of this distribution:
\[\begin{pmatrix} 0.137&  0.140 & 0.098 & 0.087 & 0.037 \\ 0.063 & 0.060 & 0.102 & 0.113 & 0.163 \end{pmatrix}, \]
which can be displayed as the confetti plot shown in Figure \ref{fig:Tab21Agrdrist}.
\begin{figure}[h]
\centering
\includegraphics[width=.5\textwidth,trim=0 2cm 0 0]{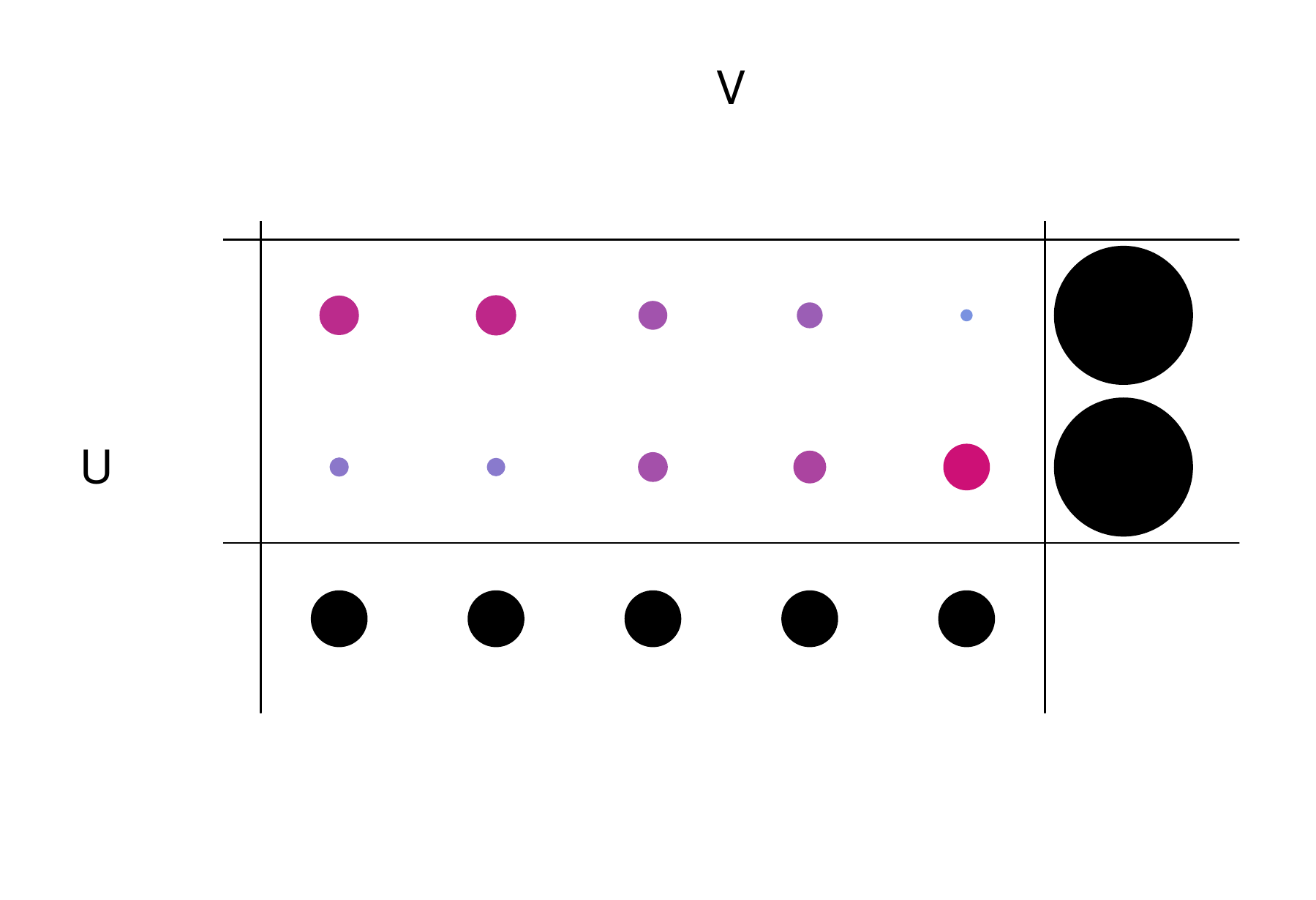}
\caption{Confetti plot of the copula pmf $\overline{\pp}$ for distribution (\ref{eqn:TabGooddist}).}
\label{fig:Tab21Agrdrist}
\end{figure}
The adverse effect of maternal alcohol consumption on the risk of congenital malformation now appears clearly, and is quantified by a positive value of Yule's coefficient of $\Upsilon = 0.358$. Entirely `margin-free', such a copula-based measure of association between two ordinal random variables does not rely on assigning scores to each category as is otherwise necessary \citep[Section 2.3]{Kateri14} --  note that the choice $X \in \{0,1\}$ and $Y \in \{0,1,2,3,4\}$ in (\ref{eqn:TabGooddist}) has no impact whatsoever. This seems desirable, as \citet[p.\,740]{Goodman54} noted: ``{\it We feel that the use of arbitrary scores to motivate measures is infrequently appropriate}."
\end{example}

\section{Parametric discrete copulas} \label{sec:paramcop}

Paralleling the continuous case, one can construct parametric models of copula pmf's. In fact, any parametric continuous copula as in Definition \ref{dfn:classcop} readily gives rise to a discrete copula pmf of any dimension $(R \times S)$, as described in Section \ref{sec:discrclasscop}. One may also think of specific discrete copulas originating from particular bivariate discrete distributions, such as the Binomial copula (Section \ref{sec:binomcop}) or truncated Geometric copula (Section \ref{sec:truncgeomcop}). Finally, some models for discrete copulas may arise naturally as well from direct specifications of the odds ratio matrix (Section \ref{sec:Goodcop}).

\subsection{Discrete versions of classical continuous copulas} \label{sec:discrclasscop}

Let $C$ be a continuous copula as in Definition \ref{dfn:classcop}. For $u \in \{0,\ldots,R-1\}$ and $v \in \{0,\ldots,S-1\}$, define
\begin{equation} \overline{p}_{uv} = C\left(\frac{u+1}{R},\frac{v+1}{S}\right)-C\left(\frac{u}{R},\frac{v+1}{S}\right)-C\left(\frac{u+1}{R},\frac{v}{S}\right)+C\left(\frac{u}{R},\frac{v}{S}\right). \label{eqn:disccontcop} \end{equation}
Then, as $C$ has uniform margins on $\Is$, it follows, for any $u,v$,
\[\sum_{v = 0}^{S-1} \overline{p}_{uv} = \frac{1}{R}\qquad \text{ and } \qquad \sum_{u = 0}^{R-1} \overline{p}_{uv} = \frac{1}{S}.\]
Hence the $(R \times S)$-discrete distribution
\[\overline{\pp} = [\overline{p}_{uv} ]_{\substack{u = 0,\ldots,R-1, \\ v = 0,\ldots,S-1}} \]
is a copula pmf as defined by Definition \ref{dfn:discrcop}. It is thus straightforward to define a $(R \times S)$-discrete version of the classical continuous copulas such as Gaussian, Student, Frank, Clayton or Gumbel, to cite a few.

\ppn For simple parametric continuous copulas $C$, such $\overline{\pp}$ can be written in closed form. For instance, it can be checked that the $(R \times S)$-discrete version of the FGM copula is, for $\theta \in [-1,1]$,
\begin{equation} \overline{p}_{uv} = \frac{1}{RS}\left(1+ \theta (1-\frac{2u+1}{R})(1-\frac{2v+1}{S}) \right), \qquad (u,v) \in \{0,\ldots,R-1\} \times\{0,\ldots,S-1\}. \label{eqn:discrFGM} \end{equation}
Figure \ref{fig:FGMcop} in Appendix shows confetti plots of this copula pmf for $\theta = 1$ and $(R,S) \in \{ (3,3), (5,5),(5,3)\}$.

\ppn For $\theta = 0$ in (\ref{eqn:discrFGM}), one finds $\overline{p}_{uv} = 1/RS$ for all $(u,v)$, which is the $(R \times S)$-independence copula pmf: 
\begin{equation}
 \overline{\pp} \doteq \overline{\ppi} = \frac{1}{RS} \uno_R \uno_S^T,   \label{eqn:RSindepcop}
\end{equation}
with $\uno_k=(\underbrace{1,\ldots,1}_{k \text{ times}})^T$.

\ppn For illustration, Figures \ref{fig:Claytposcop}, \ref{fig:Claytnegcop}, \ref{fig:Stucop}, \ref{fig:Gumbcop} in Appendix show the discrete copula pmf's obtained from the continuous Clayton copula with $\theta = 0.8$ and $\theta = -0.8$, from the continuous Student copula with $d=1$ and $\rho=0$ and from the continuous Gumbel copula with $\theta =2$, for $(R,S) \in \{(3,3),(5,5),(5,3)\}$. 

\begin{remark} \label{rmk:discrcontcop} The discrete copula pmf's derived from a continuous one through (\ref{eqn:disccontcop}) are obtained by overlaying $C$ on the regular mesh $\{0,\frac{1}{R},\ldots,\frac{R-1}{R},1\} \times \{0,\frac{1}{S},\ldots,\frac{S-1}{S},1\}$ over the unit square $\Is$. The switch from continuous to discrete is thus carried out `in the copula world', keeping all marginals uniform. The so-produced discrete copula pmf's $\overline{\pp}$ can then be used in a second time for modelling dependence between discrete random variables and/or constructing new bivariate discrete distributions with given marginals, as per Section \ref{sec:arbdistrRS}. The idea of combining two distinct building blocks, the marginals on one side and the dependence structure on the other, is maintained. By contrast, when apprehending a bivariate discrete distribution $F_{XY}$ through (\ref{eqn:Sklar}) with a certain continuous copula $C$, the switch from continuous to discrete occurs when overlaying $C$ directly on the mesh $\Ran F_X \times \Ran F_Y$ set by the marginal distributions of $F_{XY}$. Taking two steps in one, this explains why, in such models, dependence and marginal distributions always get confused, in contradiction with the essence of copula modelling. It is noteworthy, though, that the two approaches coincide in the case of a continuous vector $(X,Y)$, as then the mesh $\Ran F_X \times \Ran F_Y$ reduces down to the whole unit square $\Is$, the `continuous regular mesh' in some sense. \qed
\end{remark}

\subsection{The Binomial copula} \label{sec:binomcop}

Let $(X_1,Y_1)$, $\ldots$, $(X_n,Y_n)$ be independent copies of a bivariate Bernoulli random variable with pmf (\ref{eqn:bivbernpmf}). \citet[Section 3]{Marshall85} defined the bivariate Binomial as the distribution of the vector $(X,Y)=\left(\sum_{i=1}^n X_i,\sum_{i=1}^n Y_i\right)$, by strict analogy with a (univariate) Binomial distribution being the sum of $n$ independent replications of a Bernoulli random variable. This bivariate Binomial is parameterised by $n$ and the matrix (\ref{eqn:matBivBern}), and its pmf is, for $(x,y) \in \{0,\ldots,n\} \times \{0,\ldots,n\}$, 
\[\P(X = x,Y = y) = \sum_{k = \max(x+y-n,0)}^{\min(x,y)} \binom{n}{k,x-k,y-k,n-x-y+k} p_{00}^{n-x-y+k}p_{10}^{x-k} p_{01}^{y-k} p_{11}^k. \]
Then, it can be checked that the odds-ratios (\ref{eqn:omxy}) are
\[\omega_{xy} = \sum_{k = \max(x+y-n,0)}^{\min(x,y)} \frac{ \binom{n}{k,x-k,y-k,n-x-y+k}}{ \binom{n}{x}\binom{n}{y}}\  \omega^k, \]
where $\omega = \frac{p_{00}p_{11}}{p_{01}p_{10}}$ is the odds-ratio of the initial bivariate Bernoulli. For $n$ fixed, the dependence structure in a bivariate Binomial is thus only driven by one parameter $\omega$, and the corresponding Binomial$(n)$-copula, which is a $((n+1)\times (n+1))$-discrete distribution with uniform margins, is a one-parameter model. 

\ppn For instance, if $n=2$, the bivariate Binomial distribution is 
identified to the matrix
\begin{equation*} \pp = \begin{pmatrix}  p_{00}^2 & 2p_{00}p_{01} & p_{01}^2 \\ 2p_{00}p_{10} &  2(p_{11}p_{00}+p_{10}p_{01}) & 2p_{11}p_{01} \\ p_{10}^2 & 2p_{10}p_{11} & p_{11}^2  \end{pmatrix} \in \Ps_{3 \times 3}, \label{eqn:bivbinomat} \end{equation*}
and the corresponding odds ratio matrix (\ref{eqn:ORmat}) is  
\begin{equation*} \Omega(\pp) = \begin{pmatrix}  \frac{1}{2}(\omega+1) & \omega \\  \omega & \omega^2 \end{pmatrix}. \label{eqn:bivbinoORM} \end{equation*}
Now define the `completed' odds-ratio matrix
\begin{equation} \widetilde{\Omega}(\pp) = \begin{pmatrix} 1 & \uno^T_{2} \\ 
\uno_{2} & \Omega(\pp) \end{pmatrix}, \label{eqn:compORmat}\end{equation}
which includes the trivial odds-ratios $\omega_{00}, \omega_{0y}$'s and $\omega_{x0}$'s all equal to 1. If $\widetilde{\Omega}(\pp)$ was normalised so as to have unit $L_1$-norm, it would evidently be a bivariate pmf with odds ratio matrix $\Omega(\pp)$, hence sharing the same copula with $\pp$. It is easier to make the marginals of such a matrix $\widetilde{\Omega}$ uniforms, rather than those of $\pp$, given its simple form. 
Here, one obtains (after some algebra) by making the margins of (\ref{eqn:compORmat}) into uniforms through (\ref{eqn:gphipsi}):
\begin{equation*} \overline{\pp} = \frac{1}{3} \begin{pmatrix} \frac{\omega(\omega+1)}{\omega^2 + \omega + 1 + \sqrt{\omega(\omega+2)(2\omega+1)}} & \frac{\sqrt{\omega(\omega+2)(2\omega+1)}-3\omega}{(\omega-1)^2} & \frac{\omega+1}{\omega^2 + \omega + 1 + \sqrt{\omega(\omega+2)(2\omega+1)}} \\
\frac{\sqrt{\omega(\omega+2)(2\omega+1)}-3\omega}{(\omega-1)^2} & \frac{\omega^2+4\omega +1- 2\sqrt{\omega(\omega+2)(2\omega+1)}}{(\omega-1)^2}&  \frac{\sqrt{\omega(\omega+2)(2\omega+1)}-3\omega}{(\omega-1)^2}\\
\frac{\omega+1}{\omega^2 + \omega + 1 + \sqrt{\omega(\omega+2)(2\omega+1)}} & \frac{\sqrt{\omega(\omega+2)(2\omega+1)}-3\omega}{(\omega-1)^2} & \frac{\omega(\omega+1)}{\omega^2 + \omega + 1 + \sqrt{\omega(\omega+2)(2\omega+1)}}                 \end{pmatrix} \label{eqn:bivbinocop} \end{equation*}
for $\omega \neq 1$. For $\omega = 1$, of course, $\overline{\pp} = \overline{\ppi}$, the $(3 \times 3)$-independence copula pmf (\ref{eqn:RSindepcop}). See also that, for $\omega = 0$ or $\omega = \infty$, $\overline{\pp} = \overline{\ww}$ and $\overline{\pp} = \overline{\mm}$, the Fr\'echet lower and upper bounds (\ref{eqn:Frech}) in 3 dimensions. Indeed, from (\ref{eqn:Yule}), one has 
\[\Upsilon = \frac{\omega^2-1}{\omega^2+\omega+1+\sqrt{\omega(\omega+2)(2\omega+1)}} \]
as Yule's coefficient for this copula pmf, which is $\Upsilon =-1$ for $\omega =0$ and $\Upsilon = 1$ for $\omega = \infty$. This family of Binomial copulas is thus complete as it allows all values for Yule's coefficients from $-1$ and $1$. Confetti plots of this Binomial$(2)$ copula are given in Figure \ref{fig:binomcop} for several values of $\omega$. 

\begin{figure}[h]
\centering
\includegraphics[width=1\textwidth]{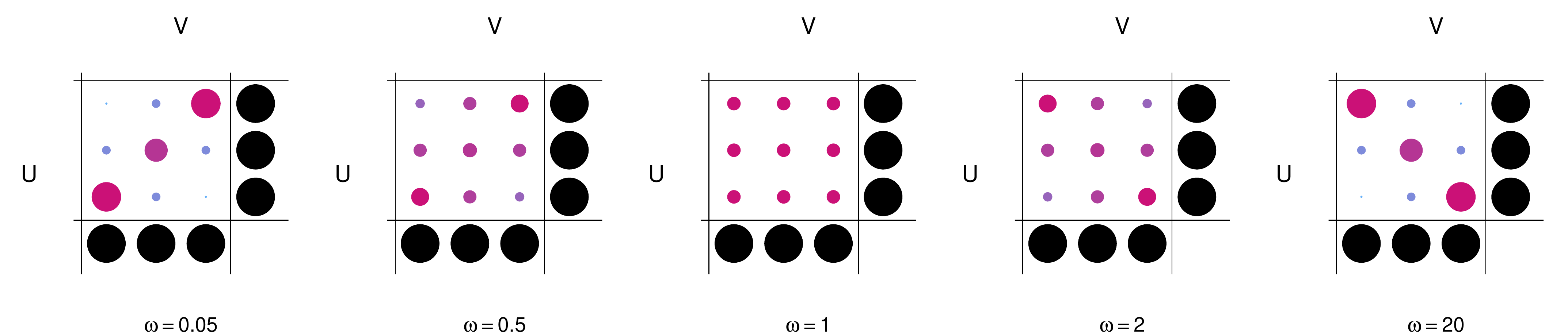}
\caption{Confetti plots of the Binomial$(2)$ copula pmf for $\omega = 0.05, 0.5, 1, 2$ and $20$.}
\label{fig:binomcop}
\end{figure}


\subsection{The truncated geometric copula} \label{sec:truncgeomcop}

Let $(X_1,Y_1)$, $(X_2,Y_2)$, $\ldots$ again be a sequence of independent replications from the bivariate Bernoulli distribution (\ref{eqn:bivbernpmf}). \citet[Section 6]{Marshall85} defined the bivariate Geometric distribution as the distribution of the vector $(X,Y)$ where $X$ is the number of $0$'s before the first 1 in the sequence $X_1,X_2,\ldots$, and $Y$ the number of $0$'s before the first 1 in the sequence $Y_1,Y_2,\ldots$. The pmf is
\begin{equation} \P(X=x,Y = y) = \left\{\begin{array}{l l} p_{00}^x p_{10} p_{\bullet 0}^{y-x-1} p_{\bullet 1} & 0 \leq x < y \\
p_{00}^x p_{11} & 0 \leq x=y \\
p_{00}^y p_{01} p_{ 0 \bullet}^{x-y-1} p_{1 \bullet } & 0 \leq y < x
\end{array} \right. . \label{eqn:geompmf} \end{equation}

\ppn Now, consider the truncated random variables $\tilde{X}_N = \min(X,N-1)$ and $\tilde{Y}_N = \min(Y,N-1)$, for some $N \geq 2$. The random vector $(\tilde{X}_N,\tilde{Y}_N)$ has pmf
\begin{equation} \P(\tilde{X}_N=x,\tilde{Y}_N = y) =  \left\{\begin{array}{l l}  \P(X=x,Y=y) & 0 \leq x,y <N-1 \\ 
p_{00}^y p_{01} p_{0 \bullet}^{N-y-2} & x = N-1, 0  \leq y < N-1 \\
p_{00}^x p_{10} p_{\bullet 0 }^{N-x-2} & 0 \leq x < N-1,  y = N-1 \\
p_{00}^{N-1} & x = N-1, y=N-1
\end{array} \right. . \label{eqn:truncgeodistr} \end{equation}

For instance, for $N = 3$, one obtains 
\begin{equation*} \begin{array}{c || c c c |c}
 \bsfrac{\tilde{Y}_N}{\tilde{X}_N} & 0 & 1 & 2 & \\
\hline 
 0 & p_{11} & p_{10}p_{\bullet 1} & p_{10} p_{\bullet 0} & p_{1\bullet} \\
 1 & p_{01} p_{1 \bullet} &  p_{00}p_{11} & p_{00}p_{10} & p_{0\bullet} p_{1\bullet}\\
 2 & p_{01} p_{0\bullet} & p_{00}p_{01} & p_{00}^2 & p_{0\bullet}^2 \\
\hline
  & p_{\bullet 1} &  p_{\bullet 0} p_{\bullet 1} & p_{\bullet 0}^2 & 
\end{array}, \label{eqn:bivtruncGeoopmf} \end{equation*}
identified to the matrix 
\[\pp = \begin{pmatrix}  p_{11} & p_{10}p_{\bullet 1} & p_{10} p_{\bullet 0} \\ p_{01} p_{1 \bullet} &  p_{00}p_{11} & p_{00}p_{10} \\ p_{01} p_{0\bullet} & p_{00}p_{01} & p_{00}^2  \end{pmatrix} \in \Ps_{3 \times 3}.\]

\ppn One can check that the odds ratio matrix (\ref{eqn:ORmat}) is here 
\begin{equation} \Omega(\pp) = \begin{pmatrix}  \omega \frac{p_{11}}{p_{1\bullet} p_{\bullet 1}} & \omega \frac{p_{10}}{p_{1\bullet} p_{\bullet 0}} \\  \omega \frac{p_{01}}{p_{0\bullet} p_{\bullet 1}} & \omega \frac{p_{00}}{p_{0\bullet} p_{\bullet 0}} \end{pmatrix}, \label{eqn:bivtruncGeooORM} \end{equation}
where $\omega = \frac{p_{00}p_{11}}{p_{01}p_{10}}$ is again the odds-ratio of the initial bivariate Bernoulli. As opposed to the binomial case, the structure of dependence in this bivariate Geometric depends not only on $\omega$, but also on the marginals of the initial bivariate Bernoulli through $p_{\bullet 1}$ and $p_{1\bullet}$ (the other quantities in (\ref{eqn:bivtruncGeooORM}) are functions of $\omega$, $p_{\bullet 1}$ and $p_{1 \bullet}$; Section \ref{sec:arbbern}). It is thus a 3-parameters copula family.

\ppn One can define a `standard' version of it by fixing $p_{\bullet 1} = p_{1 \bullet} = 1/2$, that is, assuming that the initial bivariate Bernoulli distribution is the copula pmf (\ref{eqn:Berncoppmf}). Then one recovers a one parameter copula model, driven by the odds ratio matrix
\begin{equation} \Omega(\pp) = \begin{pmatrix}  \omega \frac{2\sqrt{\omega}}{1+\sqrt{\omega}} & \omega \frac{2}{1+\sqrt{\omega}} \\  \omega \frac{2}{1+\sqrt{\omega}} & \omega \frac{2\sqrt{\omega}}{1+\sqrt{\omega}} \end{pmatrix}. \label{eqn:omtruncgeo}
\end{equation}
Making the marginals of the corresponding `completed' odds ratio matrix (\ref{eqn:compORmat}) into uniforms through (\ref{eqn:gphipsi}), one obtains (after some algebra) the explicit form of the copula pmf:
\begin{equation} \overline{\pp} = \frac{1}{3} \begin{pmatrix} \frac{2\omega}{2\omega+\sqrt{8\omega +1}+1} & \frac{\sqrt{8\omega+1}+1}{2(2\omega+\sqrt{8\omega +1}+1)} & \frac{\sqrt{8\omega+1}+1}{2(2\omega+\sqrt{8\omega +1}+1)} \\
\frac{\sqrt{8\omega+1}+1}{2(2\omega+\sqrt{8\omega +1}+1)} & \frac{\sqrt{\omega}(\sqrt{8\omega+1}+1)^2}{4(\sqrt{\omega}+1)(2\omega+\sqrt{8\omega +1}+1)} &  \frac{4\omega-1-\sqrt{8\omega+1}}{4(\sqrt{\omega}-1)(\sqrt{\omega}+1)^2}\\
\frac{\sqrt{8\omega+1}+1}{2(2\omega+\sqrt{8\omega +1}+1)} & \frac{4\omega-1-\sqrt{8\omega+1}}{4(\sqrt{\omega}-1)(\sqrt{\omega}+1)^2} & \frac{\sqrt{\omega}(\sqrt{8\omega+1}+1)^2}{4(\sqrt{\omega}+1)(2\omega+\sqrt{8\omega +1}+1)} \end{pmatrix} \label{eqn:bivtruncGeocop} \end{equation}
for $\omega \neq 1$. If $\omega =1$, then $\overline{\pp} = \overline{\ppi}$. If $\omega =\infty$, then $\overline{\pp} = \overline{\mm}$, and $\Upsilon = 1$. On the other hand, when $\omega = 0$, then 
\[ \overline{\pp} = \begin{pmatrix} 0 & 1/6 & 1/6 \\ 1/6 & 0 & 1/6 \\ 1/6 & 1/6 & 0 \end{pmatrix},\]
and from (\ref{eqn:Yule}) it can easily be seen that $\Upsilon = -1/2$ in that case. We call (\ref{eqn:bivtruncGeocop}) the truncated Geometric$(3)$ copula pmf, confetti plots of which is given in Figure \ref{fig:truncgeocop} for several values of $\omega$.

\begin{figure}[h]
\centering
\includegraphics[width=1\textwidth]{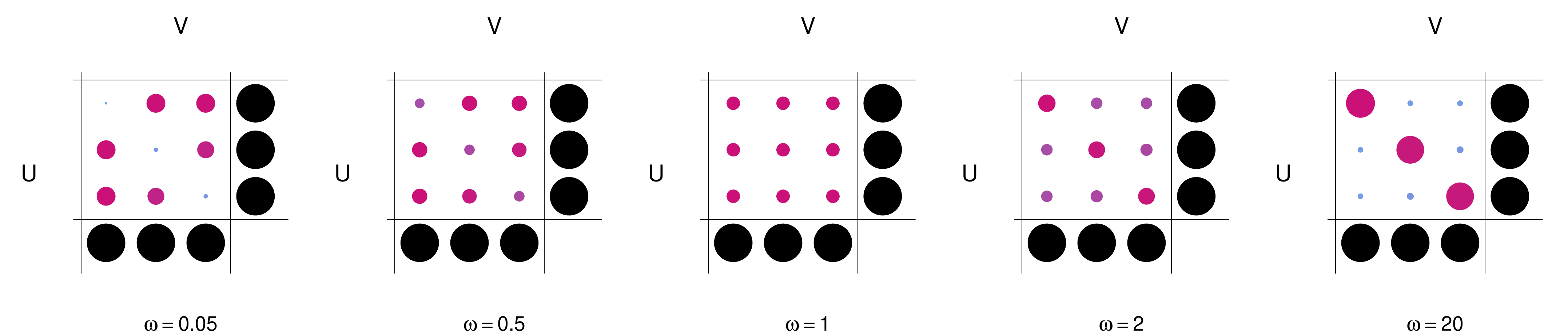}
\caption{Confetti plots of the truncated Geometric$(3)$ copula pmf for $\omega = 0.05, 0.5, 1, 2$ and $20$.}
\label{fig:truncgeocop}
\end{figure}

\ppn Naturally, any threshold value $N \geq 2$ can be considered, yielding the truncated Geometric$(N)$ copula pmf for $(N\times N)$-distributions.

 \subsection{The Goodman copula} \label{sec:Goodcop}
 
\cite{Goodman79} suggested a model of association in a contingency table with ordered categories, which can naturally be applied to a bivariate discrete vector as well. It relies on assuming that, in $\pp \in \Ps_{R \times S}$, all `local' odds-ratio $p_{x,y}p_{x-1,y-1}/(p_{x,y-1}p_{x-1,y})$, $(x,y) \in \Ss_X\backslash\{0\} \times \Ss_Y\backslash\{0\}$, are constant and equal to some $\theta > 0$. This leads to odds-ratios (\ref{eqn:omxy}) of the form
\[\omega_{xy} = \theta^{xy},\quad \forall (x,y) \in \Ss_X\backslash\{0\} \times \Ss_Y\backslash\{0\}.  \]
In the case $R=S=3$, this yields the odds-ratio matrix
\begin{equation*} \Omega(\pp) =\begin{pmatrix}
\theta & \theta^2 \\
\theta^2 & \theta^4
\end{pmatrix}.
\label{eqn:GoodmanORM} \end{equation*} 
By making the marginals of the `completed' odds-ratio matrix (\ref{eqn:compORmat}) into uniforms through (\ref{eqn:gphipsi}), one obtains (after some algebra) the copula pmf
\begin{equation} \overline{\pp} = \frac{1}{3} \begin{pmatrix}
\frac{2\theta^2}{\theta(2\theta-1) + 2+\sqrt{\theta(4\theta^2+\theta+4)}} & \frac{2\sqrt{\theta}}{3\sqrt{\theta}+\sqrt{4\theta^2+\theta+4}} & \frac{2}{\theta(2\theta-1) + 2+\sqrt{\theta(4\theta^2+\theta+4)}} \\
\frac{2\sqrt{\theta}}{3\sqrt{\theta}+\sqrt{4\theta^2+\theta+4}} & \frac{\theta^2+\theta +1- \sqrt{\theta(4\theta^2+\theta+4)}}{(\theta-1)^2}&  \frac{2\sqrt{\theta}}{3\sqrt{\theta}+\sqrt{4\theta^2+\theta+4}} \\
\frac{2}{\theta(2\theta-1) + 2+\sqrt{\theta(4\theta^2+\theta+4)}} & \frac{2\sqrt{\theta}}{3\sqrt{\theta}+\sqrt{4\theta^2+\theta+4}} & \frac{2\theta^2}{\theta(2\theta-1) + 2+\sqrt{\theta(4\theta^2+\theta+4)}}
\end{pmatrix}
\label{eqn:Goodmancop} \end{equation}
for $\theta \neq 1$. For $\theta = 1$, we have $\overline{\pp} = \overline{\ppi}$. See also that, for $\theta = 0$, $\overline{\pp} = \overline{\ww}$, and for $\theta = \infty$, $\overline{\pp} = \overline{\mm}$, the Fr\'echet lower and upper bounds (\ref{eqn:Frech}). We call (\ref{eqn:Goodmancop}) the Goodman$(3,3)$ copula, of which confetti plots are given in Figure \ref{fig:Goodcop} for several values of $\theta$. It is straightforward to generalise this model to any Goodman$(R,S)$ copula, $R,S \geq 3$.

\begin{figure}[h]
	\centering
	\includegraphics[width=1\textwidth]{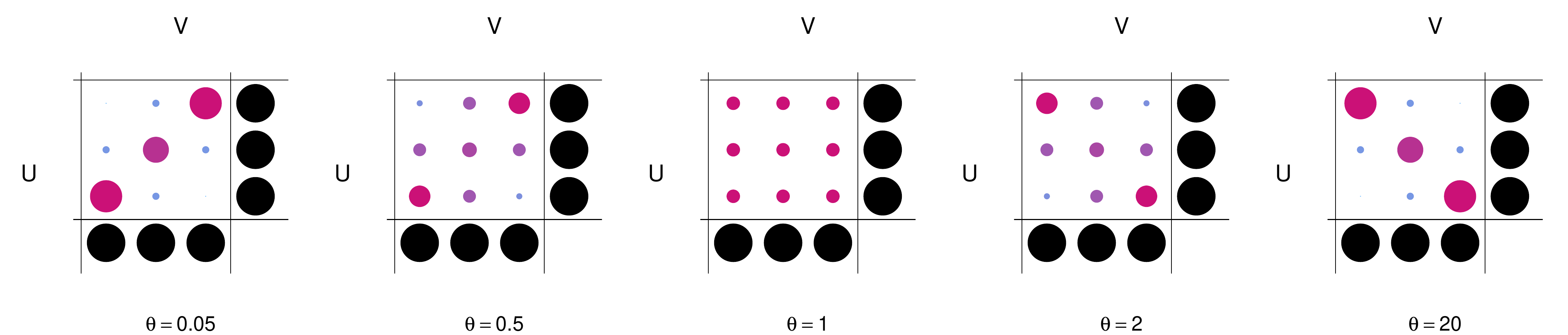}
	\caption{Confetti plots of the Goodman$(3,3)$ copula pmf for $\theta = 0.05, 0.5, 1, 2$ and $20$.}
	\label{fig:Goodcop}
\end{figure}

\section{Discrete distributions with infinite support} \label{sec:copinfsup}

\subsection{General case}

The above construction carries over to the case of discrete random variables with infinite support, say $\N$ (without loss of generality). A way of approaching this is to first consider truncated versions of the random variables of interest, like the truncated Geometric variables in Section \ref{sec:truncgeomcop}, and then let the truncation threshold $N$ tend to $\infty$. Specifically, consider a bivariate discrete vector $(X,Y)$ supported on $\N \times \N$, and assume here that $p_{xy} = \P(X=x,Y=y) >0$, $\forall (x,y) \in \N \times \N$. Let $\tilde{X}_N = \min(X,N-1)$ and $\tilde{Y}_N = \min(Y,N-1)$, for some $N \geq 2$. The pmf of the `truncated' vector $(\tilde{X}_N,\tilde{Y}_N)$ is 
\[\tilde{p}_{N;xy} = \left\{\begin{array}{ll} p_{xy} & \text{ if } 0 \leq x,y < N-1 \\
\sum_{y^* \geq N} p_{xy^*} & \text{ if } 0 \leq x < N, y = N-1 \\
\sum_{x^* \geq N} p_{x^*y} & \text{ if } 0 \leq y < N, x = N-1 \\
\sum_{x^* \geq N} \sum_{y^* \geq N} p_{x^*y^*} & \text{ if } x = N-1,y=N-1 \\
0 & \text{ if } x \geq N,y \geq N
\end{array}\right. .\]
Denote $\tilde{\pp}_N$ the corresponding matrix in $\Ps_{N \times N}$. For all integer $N$, it follows from Corollary \ref{cor:coppmf} that this bivariate discrete distribution admits a unique copula pmf $\overline{\tilde{\pp}}_N \in [\tilde{\pp}_N] \cap \Cs_{N \times N}$, as all $\tilde{p}_{N;xy}$ are positive on $\{0,\ldots,N-1\} \times \{0,\ldots,N-1\}$. By Definition \ref{dfn:discrcop}, the copula pmf $\overline{\tilde{\pp}}_N$ is the pmf of a vector $(\tilde{U}_N,\tilde{V}_N)$ whose both margins are Discrete uniform on $\{\frac{1}{N+1},\frac{2}{N+1},\ldots,\frac{N}{N+1}\}$. Now, let $N \to \infty$. It is well known that $\tilde{U}_N \toL U$, where $U \sim \Us_{[0,1]}$, and similarly $\tilde{V}_N \toL V$, where $V \sim \Us_{[0,1]}$. Hence $(\tilde{U}_N,\tilde{V}_N)$ converges in law to a bivariate distribution with continuous uniform marginals, that is, a (continuous) copula as per Definition \ref{dfn:classcop}; see Theorems 1 and 2 in \cite{Kolesarova06}. So, the dependence structure of a discrete bivariate vector supported on $\N \times \N$ can be represented by a {\it unique} continuous copula.

\ppn Importantly, this unique copula  is not any of the $C$'s satisfying (\ref{eqn:Sklar}). Analogously to Remark \ref{rmk:discrcontcop}, Sklar's theorem establishes that one can reconstruct the bivariate discrete distribution $F_{XY}$ by overlaying a copula $C$ on the mesh $\Ran F_X \times \Ran F_Y$ over the unit square. Such copula is not unique and is indissociable to the marginal distributions. By contrast, the above construction singles out one unique copula which represents the `core' of $F_{XY}$ in the spirit of the marginal transformations described in Section \ref{sec:margtransnuc}. It is independent of the margins, as it is a representation of all the odds ratios $\omega_{xy}$ (\ref{eqn:omxy}) for $(x,y) \in \N_+ \times \N_+$. The bivariate discrete distribution $F_{XY}$ can thus be broken down into its marginal distributions on one hand, and its unique copula on the other, like in the continuous case. A difference is that here, the combination of the copula and the marginals is carried out by (a limiting version of) IPF (Section \ref{sec:IPF}), not by (\ref{eqn:Sklar}).
%
 
\ppn 
Interestingly, this also allows the definition of `new' continuous copulas characterising the dependence structure inside specific $(\N \times \N)$-discrete distributions, e.g., the Geometric copula (Section \ref{sec:geocop}) from the bivariate Geometric and the Poisson copula (Section \ref{sec:poisscop}) from the bivariate Poisson. 

\subsection{The Geometric copula} \label{sec:geocop}

Consider the truncated Geometric distribution $\tilde{\pp}_N$ given by (\ref{eqn:truncgeodistr}), and set $p_{1\bullet} = p_{\bullet 1} = 1/2$. For any $N \geq 2$, one gets an odds ratio matrix $\Omega(\tilde{\pp}_N)$ involving only one parameter $\omega$, like (\ref{eqn:omtruncgeo}) for $N=2$. One can then make the margins of the `completed' odds ratio matrix into uniforms, and obtain copula pmf's $\overline{\tilde{\pp}_N} \in \Cs_{N \times N}$. Figure \ref{fig:geomcoptruncpos} shows those copula pmf's for $\omega = 2$ and $N=4, 8,16$ and $32$.

\begin{figure}[h]
\centering
\includegraphics[width=\textwidth]{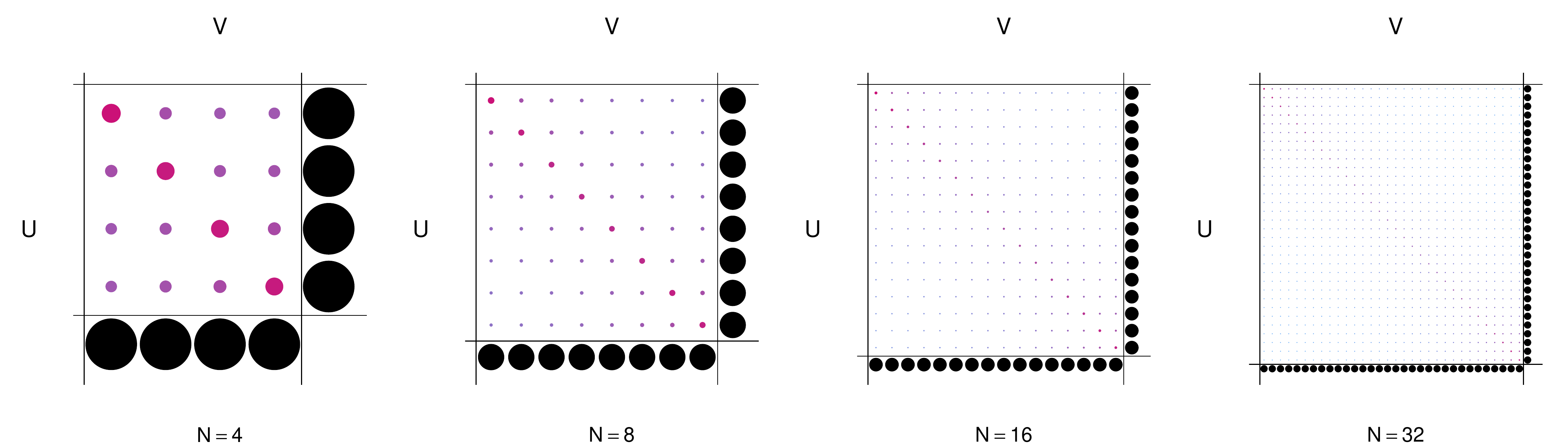}
\caption{Confetti plots of the truncated Geometric copula pmf with $\omega =2$ and growing $N$.}
\label{fig:geomcoptruncpos}
\end{figure}

\ppn In the limit $N \to \infty$, the very dense pmf turns into a continuous distribution with uniform margins as pictured in Figure \ref{fig:geomcoptruncpos2}. This copula admits a singularity along the main diagonal of the unit square $\Is$, coming from the initial geometric distribution (\ref{eqn:geompmf}) showing a different behaviour when $x=y$. The singularity reminds us of the Marshall-Olkin copula \citep[Section 3.1.1]{Nelsen06}, a link to which could be expected here given that the Marshall-Olkin bivariate Exponential distribution is the limit version of the bivariate Geometric distribution introduced above \citep[Section 6]{Marshall85}. The Geometric copula, however, remains a representative of the inner dependence structure in the purely discrete vector $(X,Y)$ whose pmf is (\ref{eqn:geompmf}), and is not the Marshall-Olkin copula. 

\begin{figure}[h] 
\begin{subfigure}[t]{0.5\textwidth}
\centering
\includegraphics[width=0.8\textwidth]{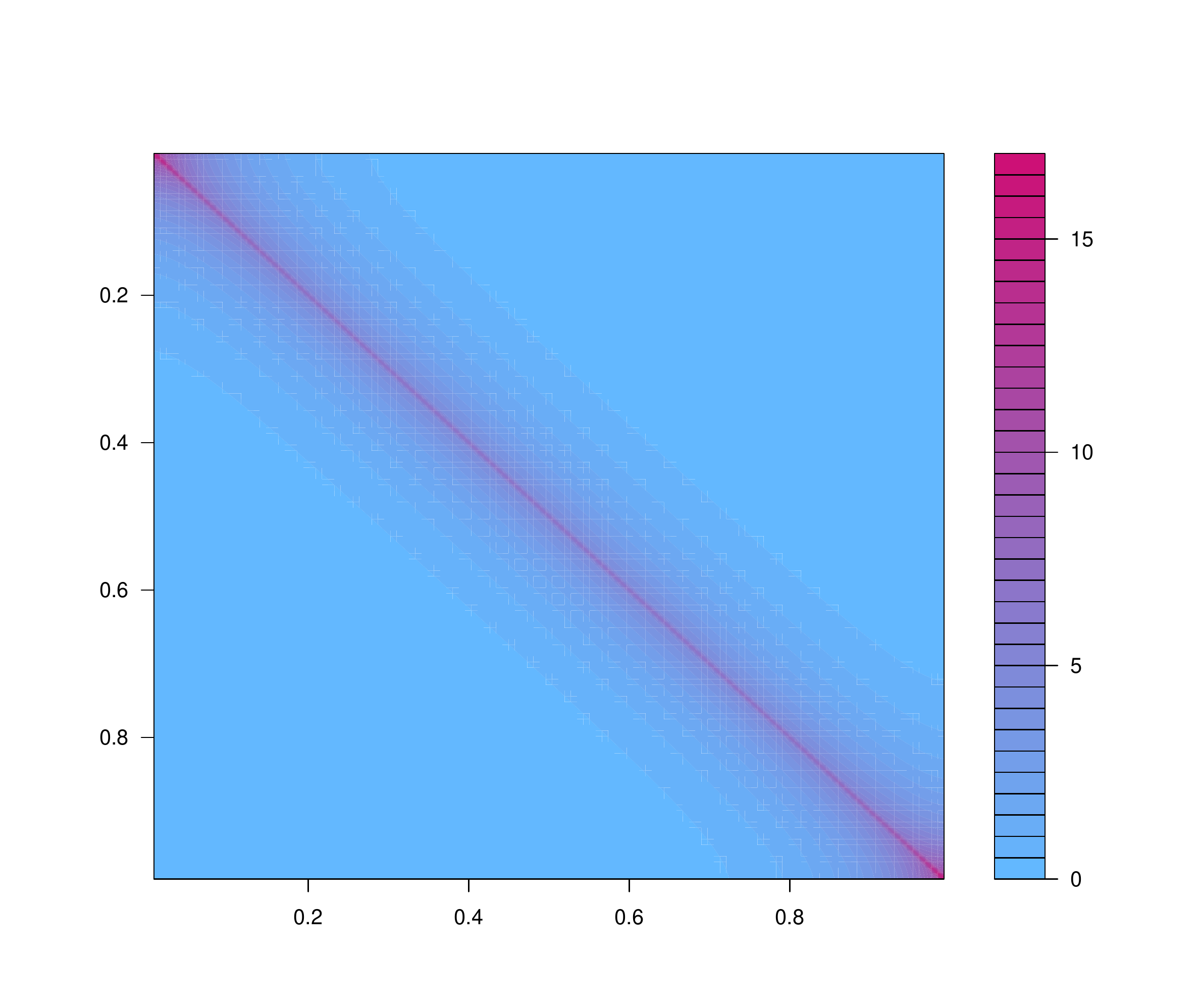}
\caption{Heat map}
\label{fig:geomtrunccopposcont}
\end{subfigure}
\hfill
\begin{subfigure}[t]{0.5\textwidth}
\centering
\includegraphics[width=\textwidth]{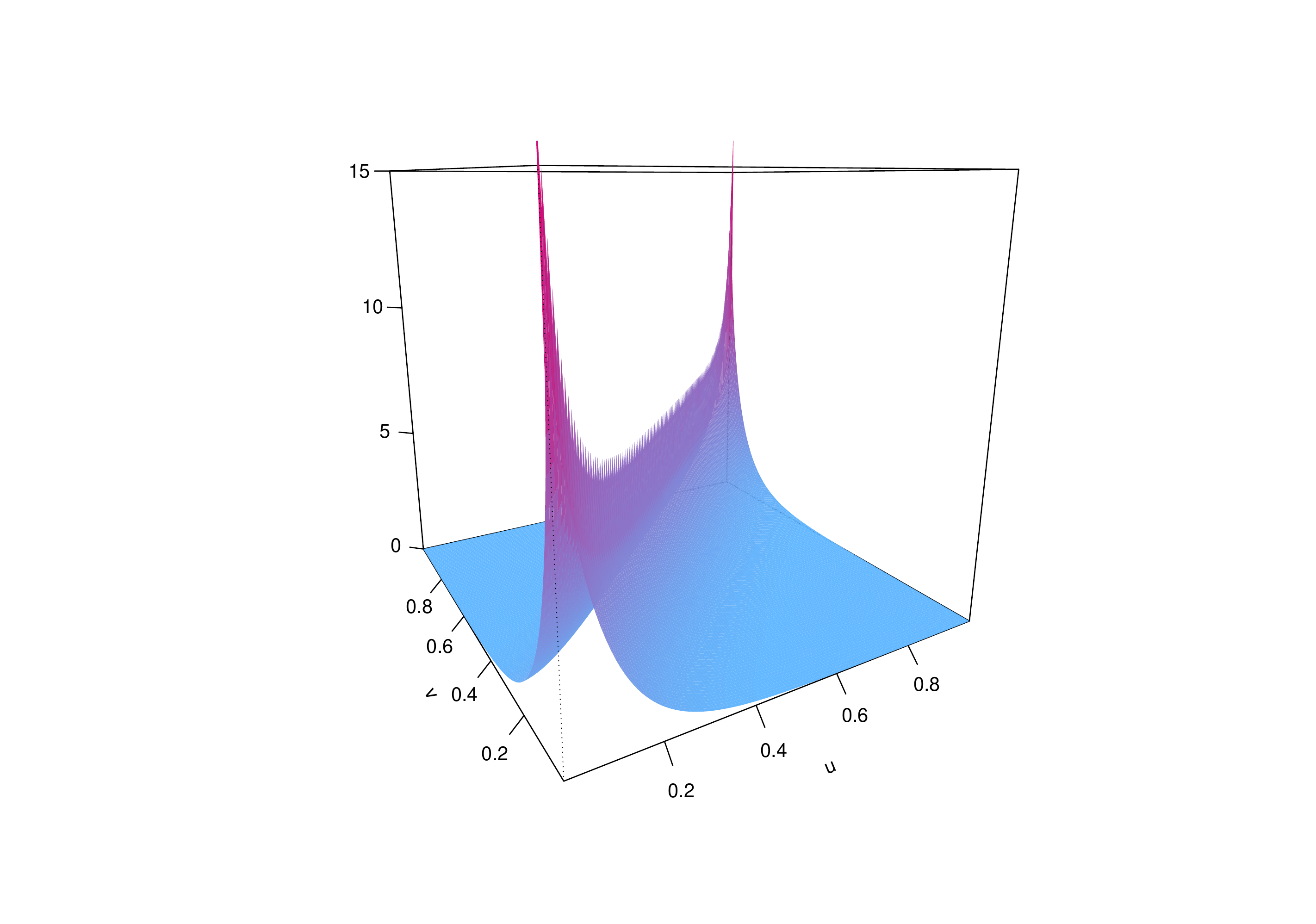}
\caption{3D-copula density}
\label{fig:geomtrunccoppospersp}
\end{subfigure}
\caption{The Geometric copula with $\omega=2$.}
\label{fig:geomcoptruncpos2}
\end{figure}

\ppn Repeating the above process of letting $N \to \infty$ with $\omega= 1/2$, the limiting Geometric copula density is seen to be identically null on the main diagonal of $\Is$, forming some sort of `inverse singularity' there. This is definitely not a Marshall-Olkin copula.

\begin{figure}[h] 
\begin{subfigure}[t]{0.4\textwidth}
\includegraphics[width=\textwidth]{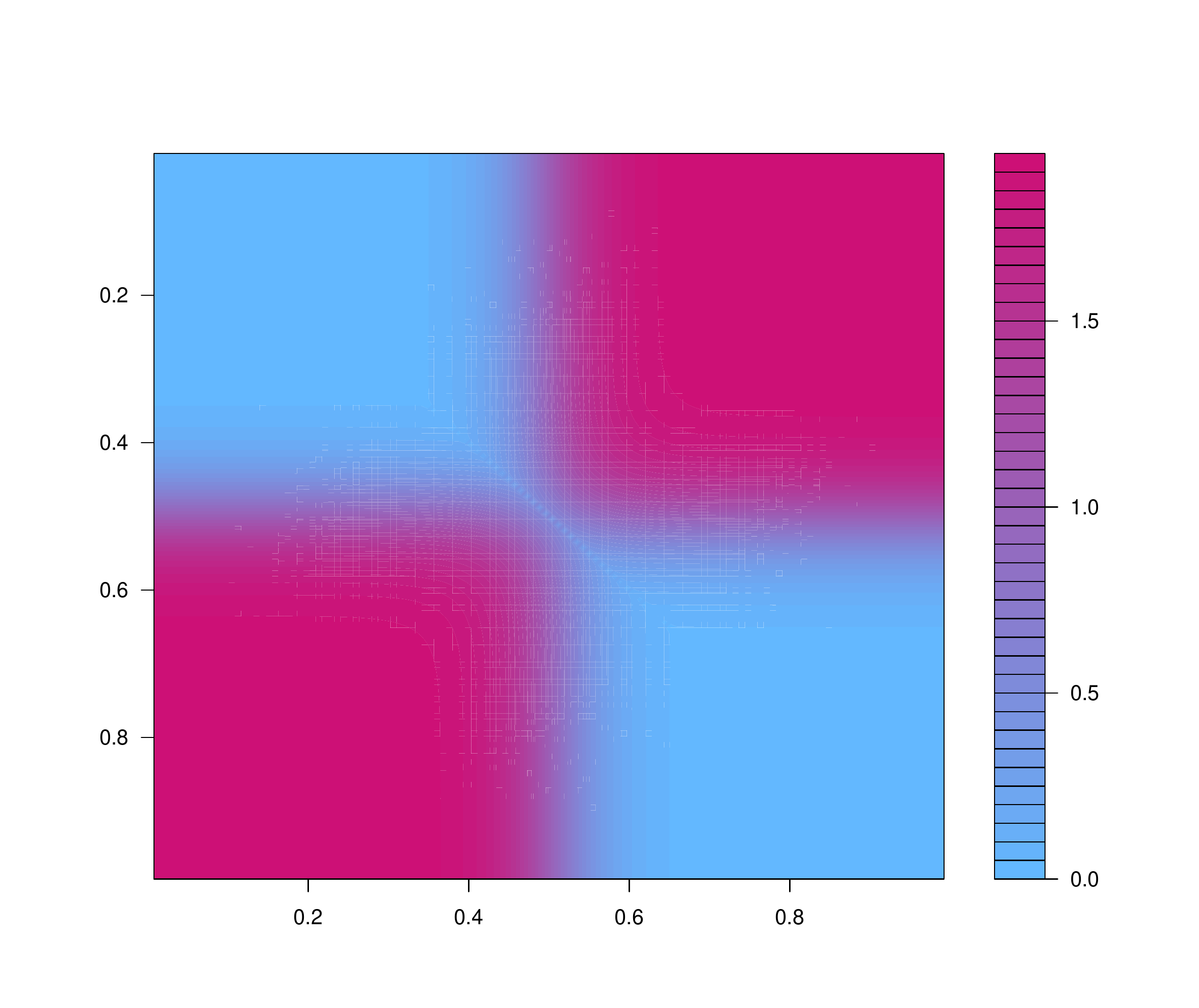}
\caption{Heat map}
\label{fig:geomtrunccopnegcont}
\end{subfigure}
\hfill
\begin{subfigure}[t]{0.5\textwidth}
\includegraphics[width=\textwidth]{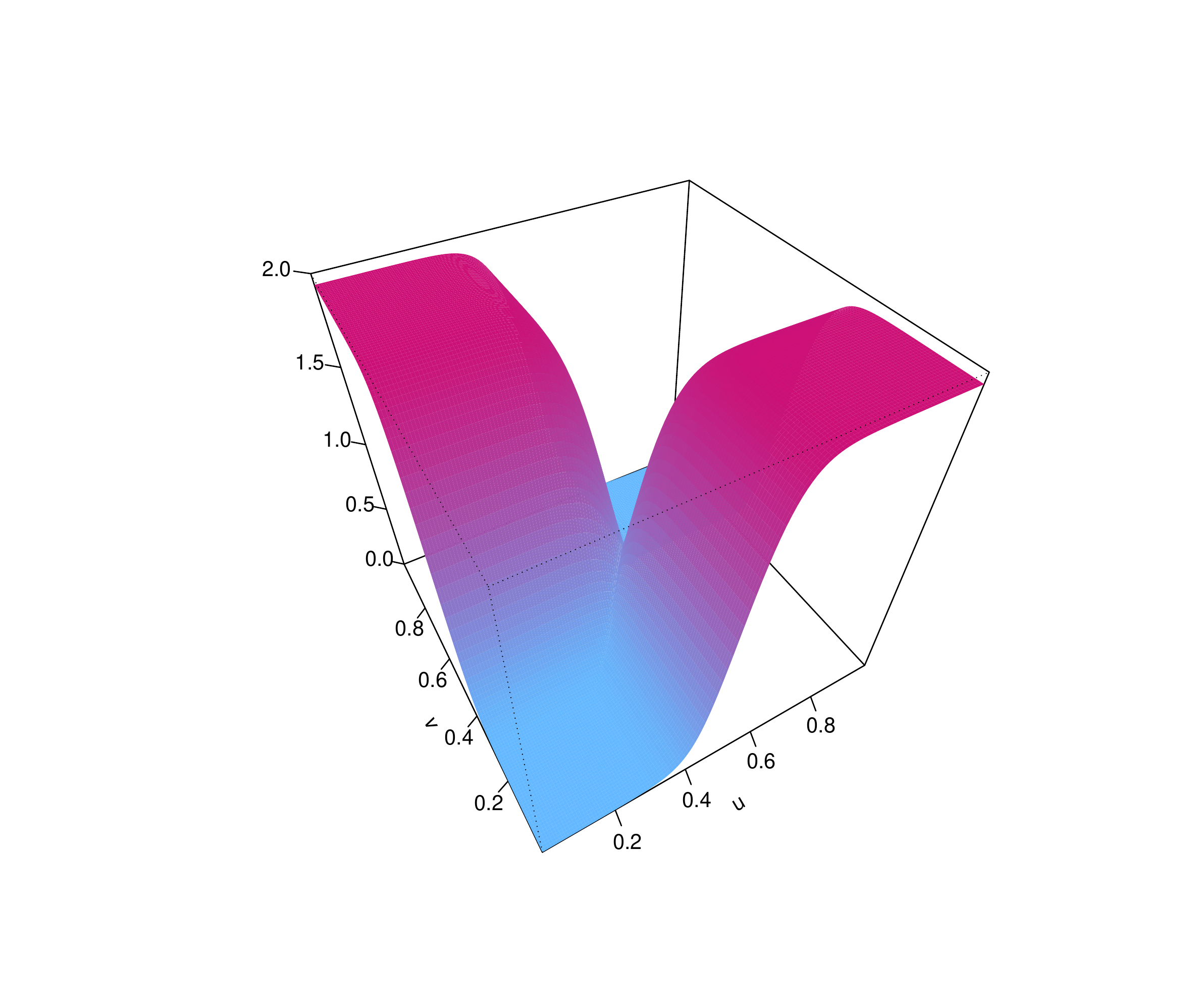}
\caption{3D-copula density}
\label{fig:geomtrunccopnegpersp}
\end{subfigure}
\caption{The Geometric copula with $\omega=1/2$.}
\label{fig:geomcoptrunc2}
\end{figure}

\subsection{The Poisson copula} \label{sec:poisscop}

Let $Z_{10} \sim \Ps(\lambda_{10})$, $Z_{01} \sim \Ps(\lambda_{01})$ and $Z_{11} \sim \Ps(\lambda_{11})$ be three independent Poisson random variables, with $\lambda_{10}, \lambda_{01} > 0$ and $\lambda_{11} \geq 0$. Then define 
\begin{equation} X = Z_{10} + Z_{11} \qquad \text{ and } \qquad Y = Z_{01} + Z_{11}. \label{eqn:trivred}\end{equation}
The distribution of the vector $(X,Y)$ is classically known as the bivariate Poisson distribution \citep{Teicher54}, parameterised by $(\lambda_{10},\lambda_{01},\lambda_{11})$. Its pmf is \citep[Section 4]{Marshall85}:
\[p_{xy}=\P(X=x,Y=y) = e^{-(\lambda_{10}+\lambda_{01}+\lambda_{11})} \frac{\lambda_{10}^x}{x!}\frac{\lambda_{01}^y}{y!} \sum_{i=0}^{\min(x,y)} i! \binom{x}{i} \binom{y}{i}  \left(\frac{\lambda_{11}}{\lambda_{10} \lambda_{01}}\right)^i,\]
for $(x,y) \in \{0,1\ldots\} \times \{0,1,\ldots\}$, and clearly $X \sim \Ps(\lambda_{10} + \lambda_{11})$ and $Y \sim \Ps(\lambda_{01}+\lambda_{11})$. The odds ratios (\ref{eqn:omxy}) reduce down to
\[\omega_{xy} = \sum_{i=0}^{\min(x,y)} i! \binom{x}{i} \binom{y}{i}  \left(\frac{\lambda_{11}}{\lambda_{10} \lambda_{01}}\right)^i , \qquad (x,y) \in \N_+ \times \N_+.\]
It is seen that the dependence structure in such a bivariate Poisson vector only depends on the parameter $\omega \doteq \lambda_{11}/(\lambda_{10} \lambda_{01})$. If the bivariate Poisson distribution is understood as a limiting version of a bivariate Binomial \citep[Section 4]{Marshall85}, then this $\omega$ would indeed be akin to the odds ratio in the constituting initial bivariate Bernoulli distribution.

\ppn Acting as in the previous section, one can first truncate $X$ and $Y$ at $N-1$, for obtaining discrete copula pmf's and then let $N$ tend to infinity for obtaining the Poisson copula densities shown in Figure \ref{fig:Poisscop1} for $\omega = 0.01$ and Figure \ref{fig:Poisscop001} for $\omega = 0.2$, respectively. 



\begin{figure}[h] 
\begin{subfigure}[t]{0.5\textwidth}
\centering
\includegraphics[width=0.8\textwidth]{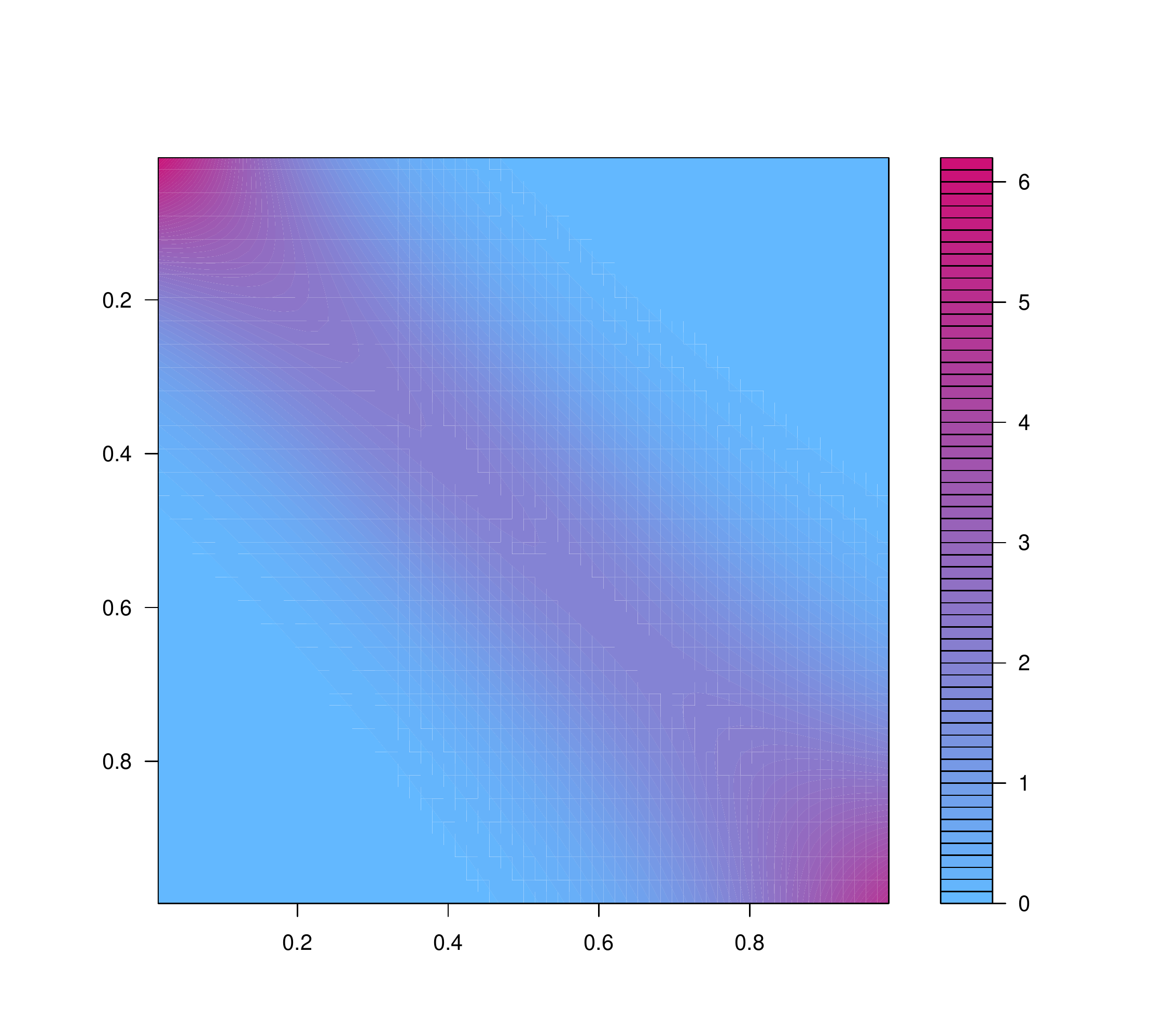}
\caption{Heat map}
\label{fig:Poisscop001cont}
\end{subfigure}
\hfill
\begin{subfigure}[t]{0.5\textwidth}
\centering
\includegraphics[width=0.9\textwidth]{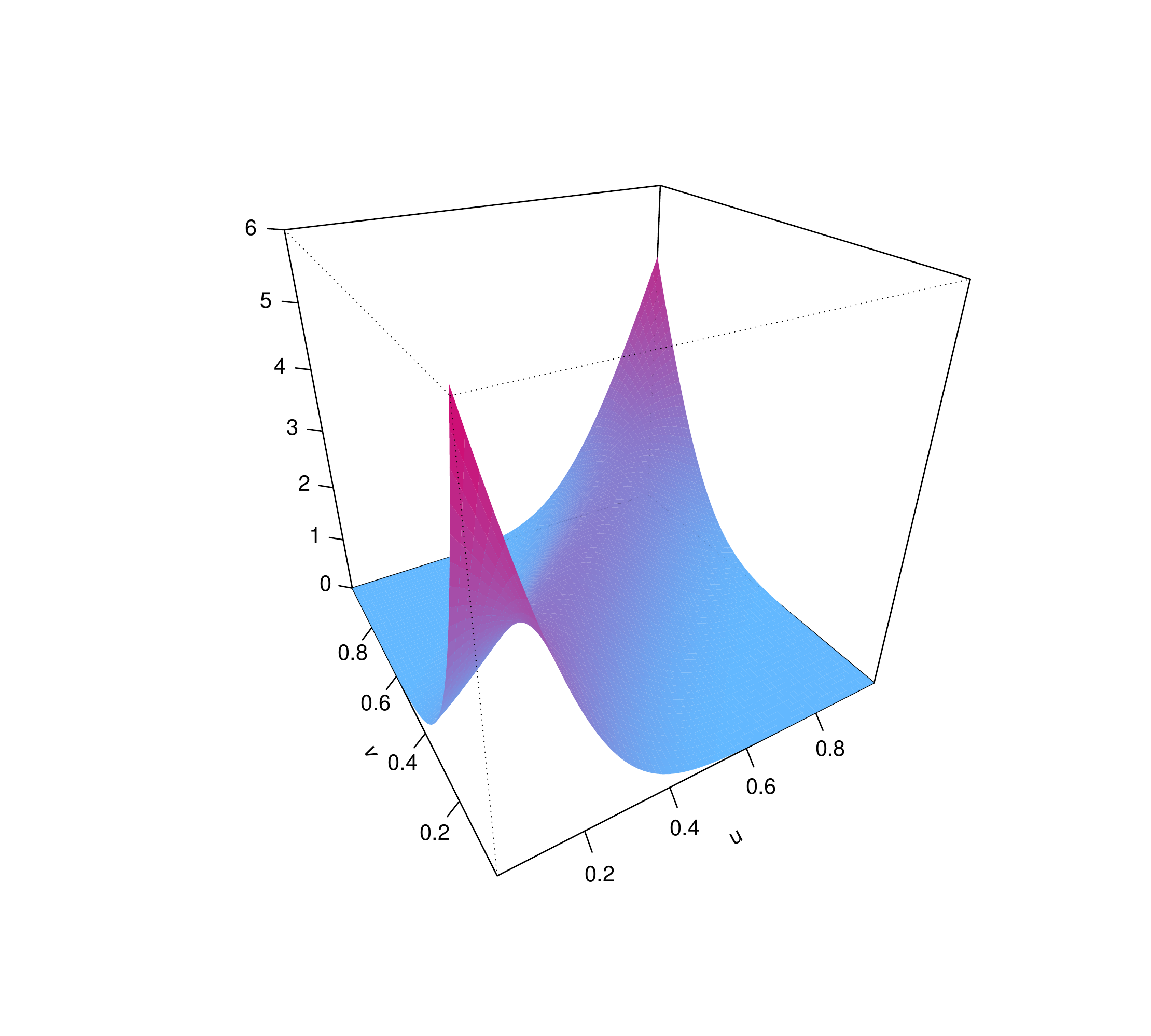}
\caption{3D-density estimate}
\label{fig:Poisscop001persp}
\end{subfigure}
\caption{The Poisson copula density with $\omega=0.01$.}
\label{fig:Poisscop001}
\end{figure}

\begin{figure}[h] 
\begin{subfigure}[t]{0.5\textwidth}
\centering
\includegraphics[width=0.8\textwidth]{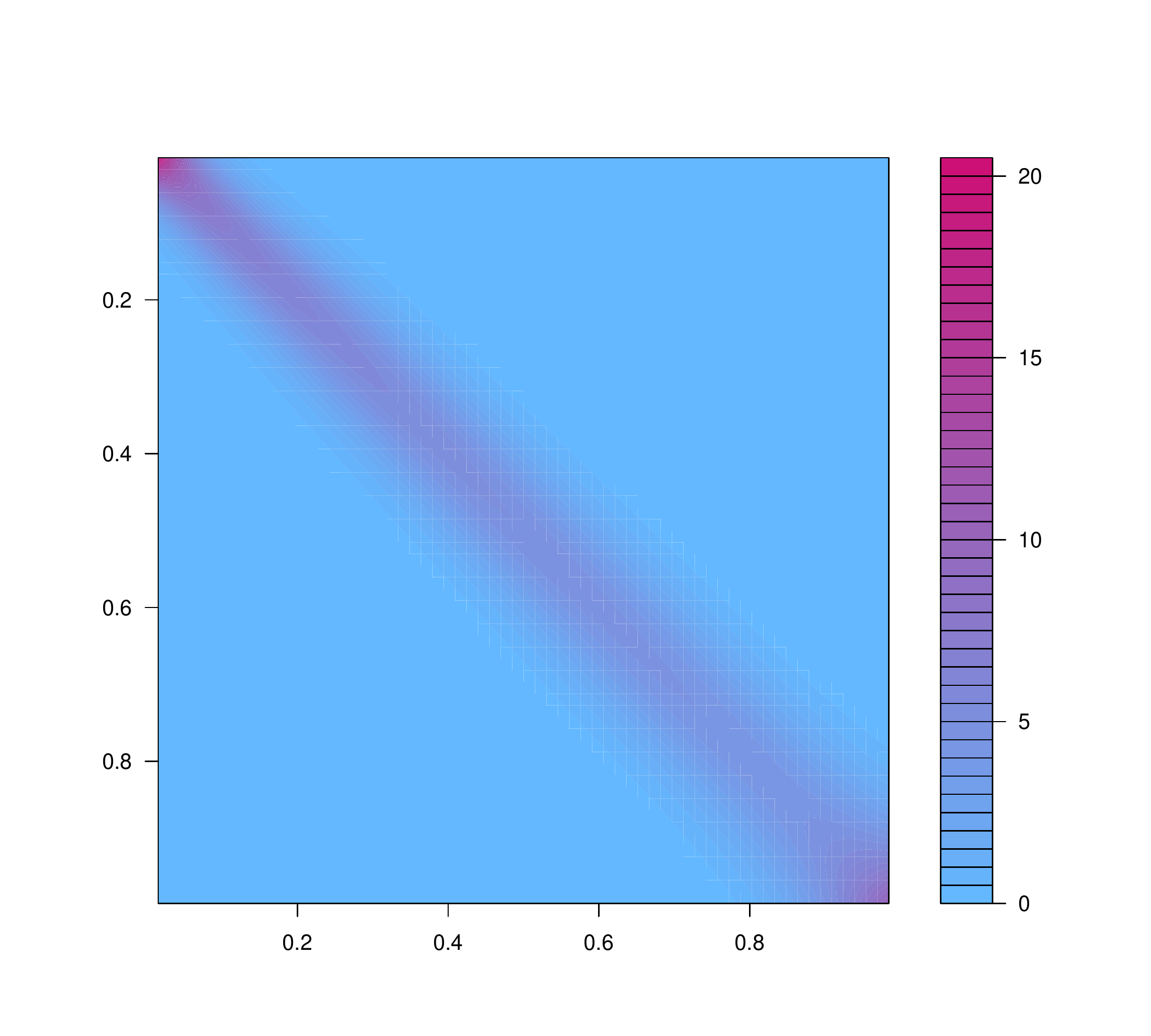}
\caption{Heat map}
\label{fig:Poisscop1cont}
\end{subfigure}
\hfill
\begin{subfigure}[t]{0.5\textwidth}
\centering
\includegraphics[width=0.9\textwidth]{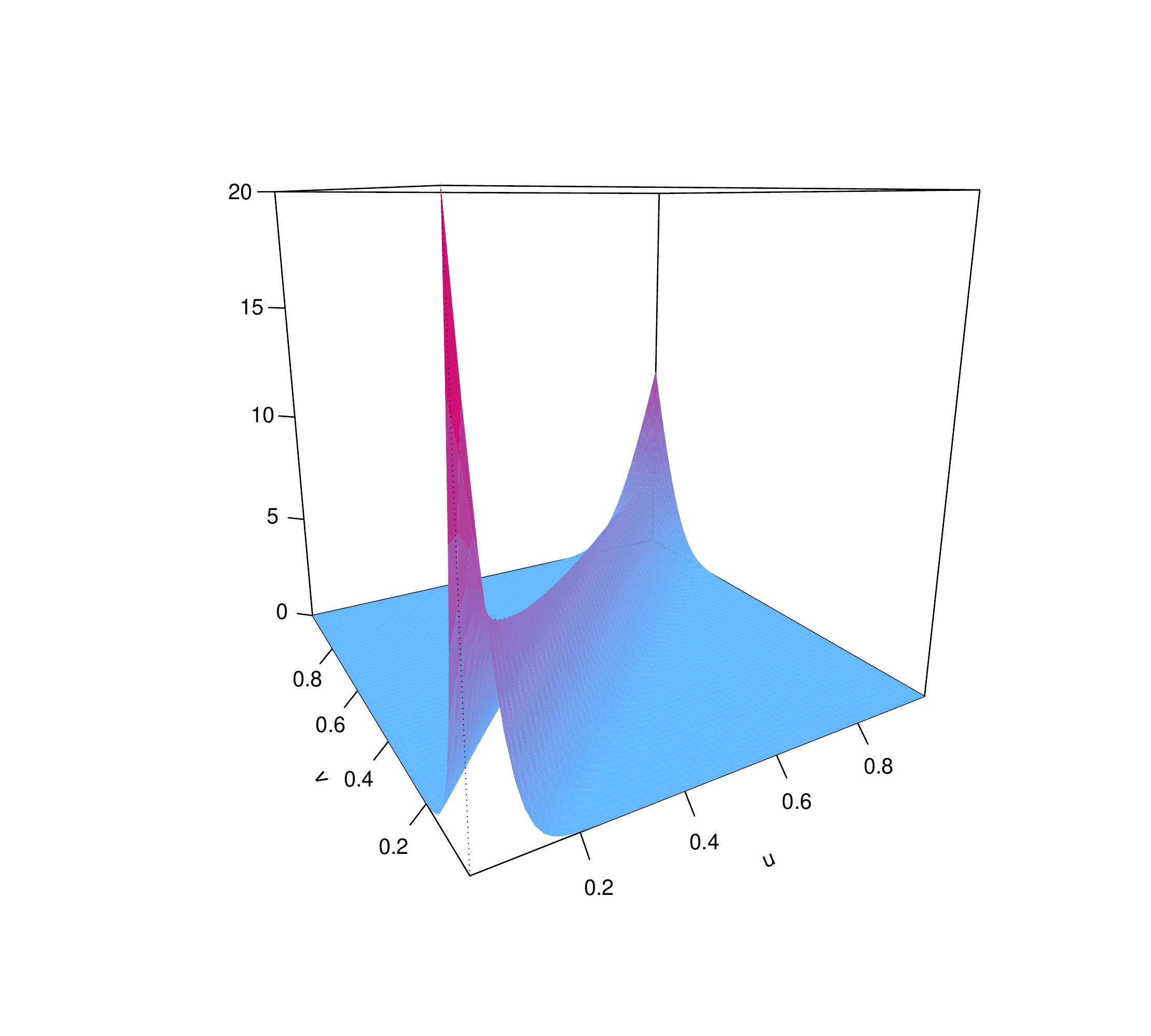}
\caption{3D-density estimate}
\label{fig:Poisscoppersp1}
\end{subfigure}
\caption{The Poisson copula density with $\omega=0.2$.}
\label{fig:Poisscop1}
\end{figure}

\ppn Like in any bivariate discrete distribution built on such an idea of `trivariate reduction' (\ref{eqn:trivred}), the components $X$ and $Y$ of a bivariate Poisson vector can only show positive association. How to construct bivariate discrete distributions with Poisson marginals showing negative association has been a challenging problem for a long time. For instance, \cite{Griffiths79} noted: ``{\it we have been unable to discover explicit in the literature any examples of bivariate Poisson distributions in which the correlation is negative}.'' However, they continued: ``{\it though } [...] {\it such examples are implicit in work of Fr\'echet (1951) and Hoeffding (1940)}'', these references obviously being part of the early literature on copulas. Indeed a systematic classical copula construction, based on (\ref{eqn:Sklar}), has been proposed in \cite{Pfeifer04}. Following the discussion in Section \ref{sec:copfordiscr}, in particular, the impossibility of ever disjointing margins and dependence structure, such construction should be subject to caution.

\ppn By contrast, it is easy to couple any two Poisson distributions with any continuous copula through IPF (Sections \ref{sec:IPF}-\ref{sec:arbdistrRS}). Figure \ref{fig:BivPoissNeg} shows confetti plots of three bivariate discrete distributions with Poisson $\Ps(2)$ marginals and negative association; coupled through $(a)$ a Clayton copula with $\theta = -0.2$; $(b)$ a Gaussian copula with $\rho = -0.8$; and $(c)$ a Geometric copula with $\omega = 1/2$ (Figure \ref{fig:geomcoptrunc2}). This illustrates that the discrete copula approach proposed in this paper shares with its continuous counterpart the same flexibility for constructing `new' bivariate distributions with arbitrary marginals and arbitrary dependence structure.

\begin{figure}[h]
\centering
\includegraphics[width=\textwidth]{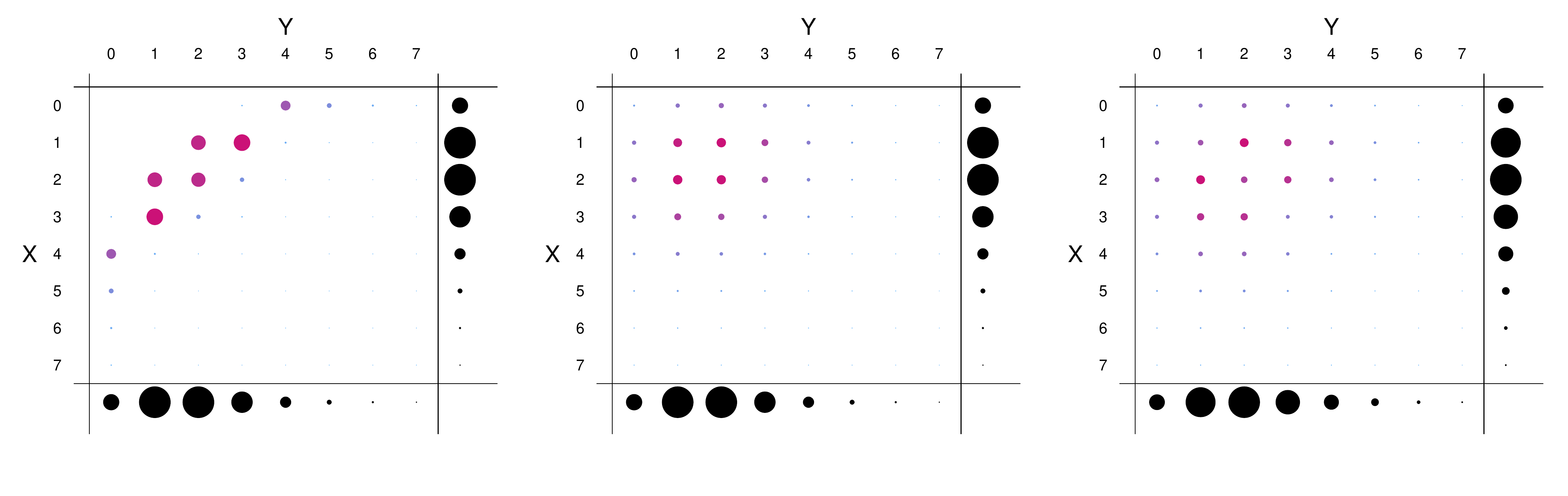}
\caption{Confetti plots of three bivariate discrete distributions with $\Ps(2)$-margins coupled by $(a)$ a Clayton copula with $\theta = -0.2$ (left); $(b)$ a Gaussian copula with $\rho = -0.8$ (middle) and $(c)$ a Geometric copula with $\omega = 1/2$ (right). All three show negative association between $X$ and $Y$.}
\label{fig:BivPoissNeg}
\end{figure}

\section{Concluding remarks} \label{sec:ccl}

The classical definition of a copula (Definition \ref{dfn:classcop}) follows implicitly but directly from the Probability Integral Transform (PIT). Hence it is fundamentally grounded in the continuous framework, and there is little surprise that classical copula ideas lead to many inconsistencies when applied on discrete random vectors. What is surprising is that a large part of the previous literature in the field has tried to make such an inherently continuous concept forcibly fit the discrete case as well, in spite of those inconsistencies.

\ppn In this paper it is argued that the very essence of a copula, understood as the `glue' between the marginals in a bivariate distribution, has nothing to do with PIT or uniform distributions, and should not be imprisoned in Definition \ref{dfn:classcop}. Rather, a copula is akin to an equivalence class of distributions sharing the same dependence structure. Defining such equivalence classes, called {\it nuclei}, does not require resorting to PIT and hence smoothly carries over to the discrete case. This paper describes that `discrete copula' construction. It is seen that all the pleasant properties of copulas for modelling dependence are maintained in the presented discrete framework, such as margin-freeness of anything copula-based or flexibility in constructing new bivariate distributions with arbitrary marginals and dependence structure without interaction between the two, as opposed to when classical copulas are naively applied to discrete vectors. 

\ppn Theoretical results on the existence and uniqueness of the copula probability mass function (copula pmf), analogue to the copula density in the continuous case, are obtained. The ideas are first introduced in the bivariate Bernoulli case, i.e., distributions supported on $\{0,1\} \times \{0,1\}$, and then gradually generalised to distributions supported on $\{0,1,\ldots,R\} \times \{0,1,\ldots,S\}$, for some finite $R$ and $S$, and finally to bivariate distributions supported on $\N \times \N$. Interestingly, the dependence structure in such a $(\N \times \N)$-supported distribution may still be captured by a classical continuous copula, and that copula is unique. However, that copula is {\it not} one of the copulas $C$ appearing in Sklar's theorem (\ref{eqn:Sklar}), as those inherently rely on the marginal distributions in direct contradiction with the initial motivation behind copula modelling. The construction gives rise to `new' continuous copulas, such as the Geometric copula, representing the dependence structure in Marshall and Olkin's bivariate Geometric distribution, or the Poisson copula, describing the dependence within a bivariate Poisson random vector. Purely discrete copulas are also introduced, such as the Bernoulli copula, the Binomial copula or the Goodman copula.

\ppn The whole methodology presented in this paper is largely inspired by century-old ideas put forward by Udny Yule in the first place, the `old bottle'. Yet it remains in total agreement with Sklar's theorem, the `new wine', although it challenges some of the ways it has been interpreted some times. 

\appendix
\section*{Appendix}
\renewcommand\thefigure{A.\arabic{figure}}    

\begin{figure}[h]
\centering
\includegraphics[width=0.6\textwidth]{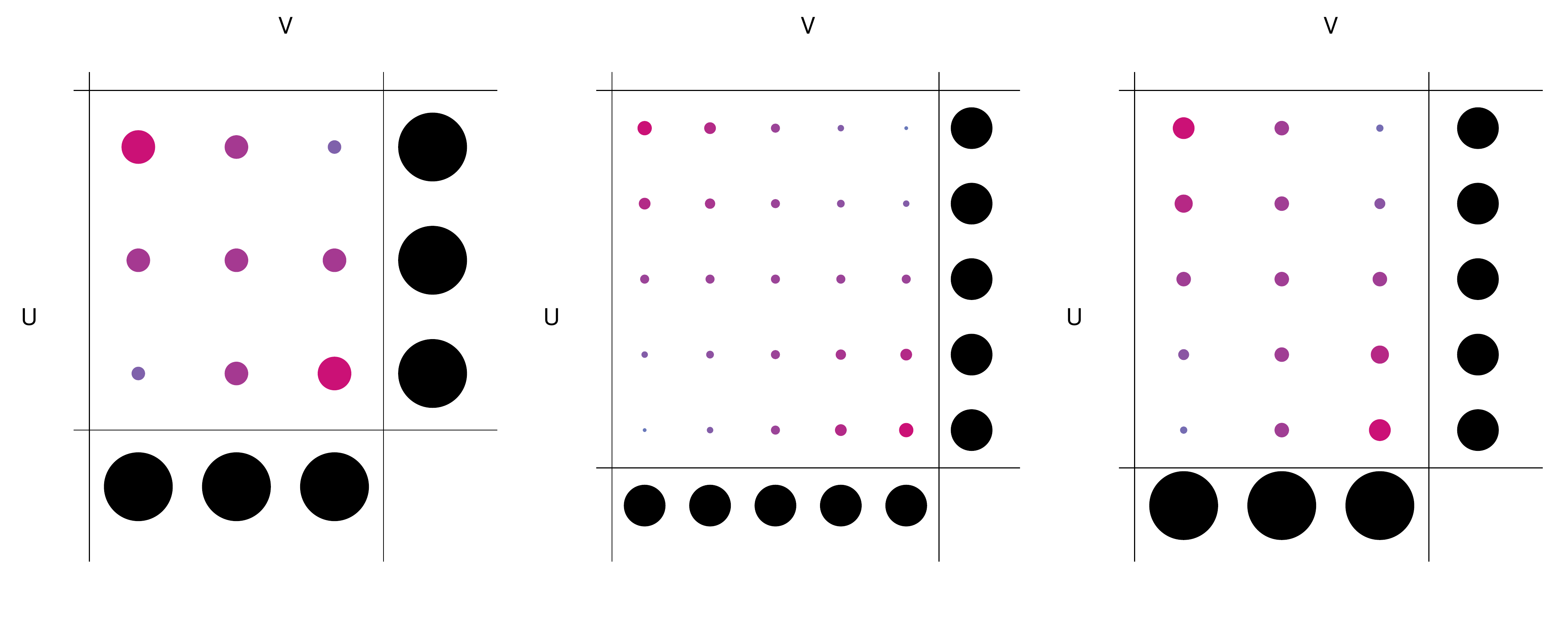}
\caption{Confetti plots of the FGM copula pmf with $\theta = 1$ and $(R,S) = (3,3)$ (left), $(R,S) = (5,5)$ (middle) and $(R,S) = (5,3)$ (right).}
\label{fig:FGMcop}
\end{figure}

\begin{figure}[h]
\centering
\includegraphics[width=0.6\textwidth]{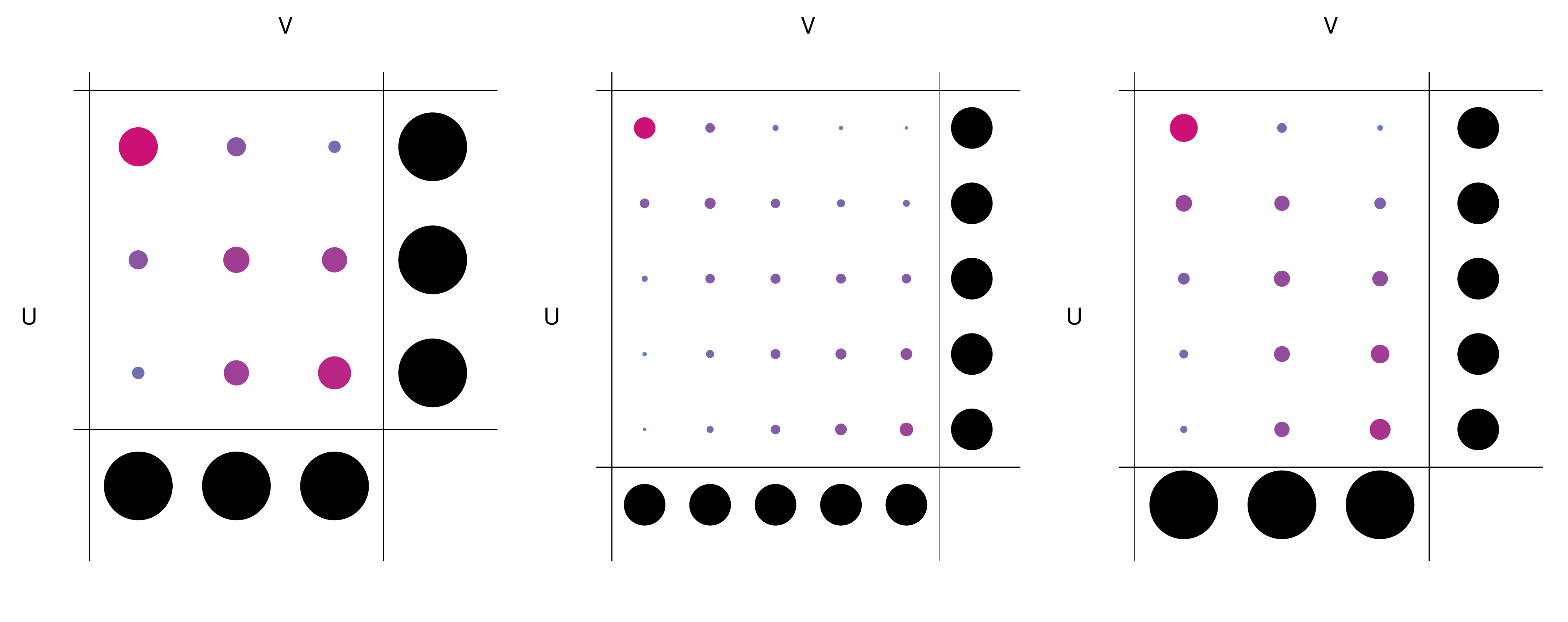}
\caption{Confetti plots of the Clayton copula pmf with $\theta = 0.8$ and $(R,S) = (3,3)$ (left), $(R,S) = (5,5)$ (middle) and $(R,S) = (5,3)$ (right).}
\label{fig:Claytposcop}
\end{figure}

\begin{figure}[h]
\centering
\includegraphics[width=0.6\textwidth]{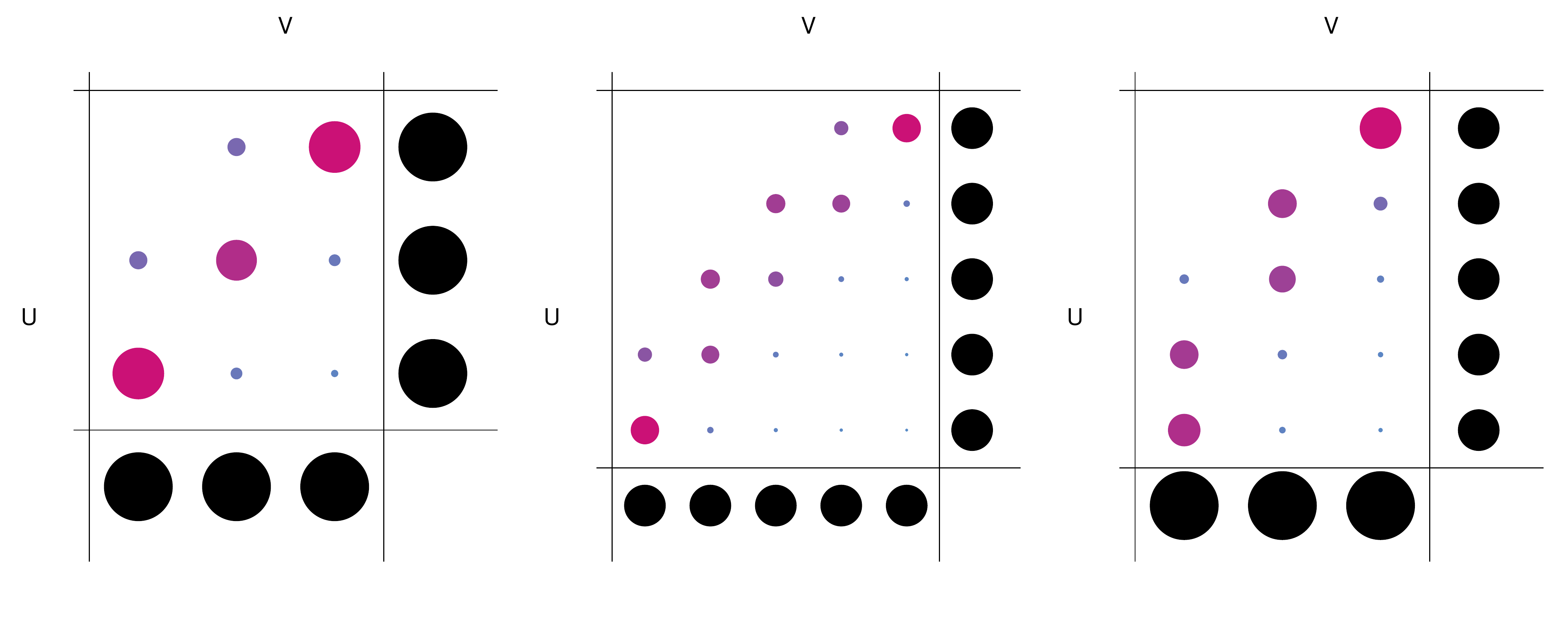}
\caption{Confetti plots of the Clayton copula pmf with $\theta = -0.8$ and $(R,S) = (3,3)$ (left), $(R,S) = (5,5)$ (middle) and $(R,S) = (5,3)$ (right).}
\label{fig:Claytnegcop}
\end{figure}

\begin{figure}[h]
\centering
\includegraphics[width=0.6\textwidth]{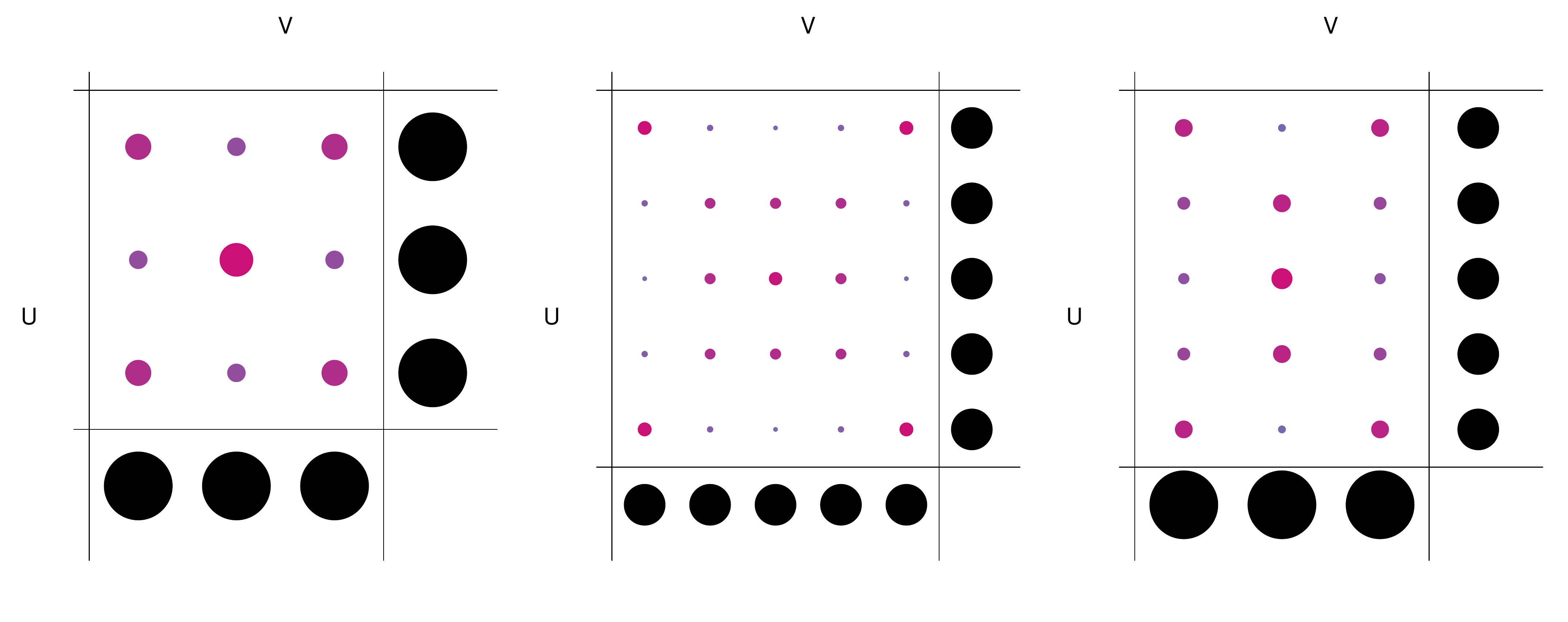}
\caption{Confetti plots of the Student copula pmf with $\rho = 0$, $\text{df}=1$ and $(R,S) = (3,3)$ (left), $(R,S) = (5,5)$ (middle) and $(R,S) = (5,3)$ (right).}
\label{fig:Stucop}
\end{figure}

\begin{figure}[h]
\centering
\includegraphics[width=0.6\textwidth]{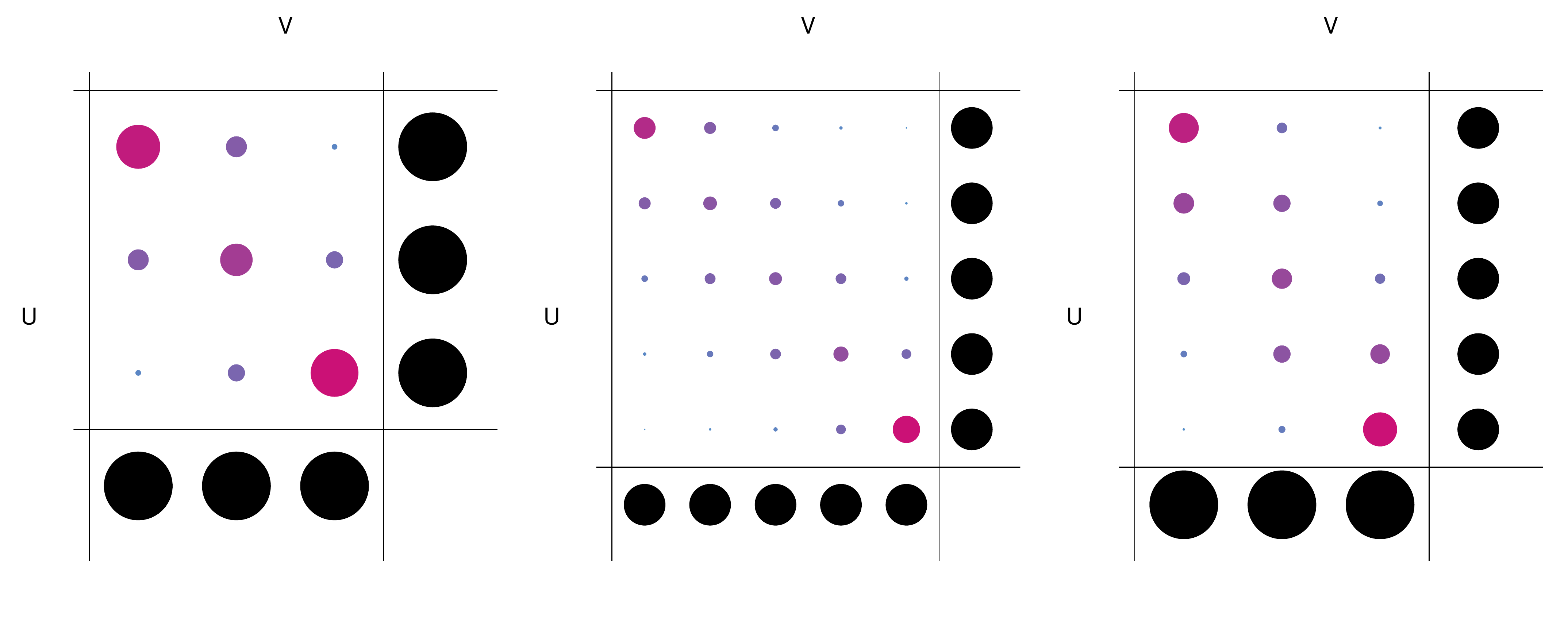}
\caption{Confetti plots of the Gumbel copula pmf with $\theta = 2$ and $(R,S) = (3,3)$ (left), $(R,S) = (5,5)$ (middle) and $(R,S) = (5,3)$ (right).}
\label{fig:Gumbcop}
\end{figure}

\end{document}